\documentclass[%
 preprint, 
nopreprintnumbers,
nofootinbib,
 amsmath,amssymb,
 aps, physrev,
floatfix,
]{revtex4-2}

\usepackage{graphicx}
\usepackage{float}
\usepackage{physics}
\usepackage{subcaption}
\usepackage{amsthm}
\usepackage{tensor}
\usepackage{wasysym}
\usepackage{mathrsfs}
\usepackage{dcolumn}
\usepackage{bm}
\usepackage{hyperref}

\newcommand{\R}{\mathbb{R}}
\newcommand{\Z}{\mathbb{Z}}
\newcommand{\N}{\mathbb{N}}
\newcommand{\C}{\mathbb{C}}
\newcommand\edth{\mathord{\textnormal{ð}}}
\newcommand{\dt}{\text{d}t}
\newcommand{\dx}{\text{d}x}
\newcommand{\dtheta}{\text{d}\theta}
\newcommand{\dphi}{\text{d}\varphi}
\newcommand{\normord}[1]{:\mathrel{#1}:}
\newtheorem{theorem}{Theorem}[section]

\begin{document}
\count\footins=1000\relax

\title{
Obstructions to Traversable Wormholes in Einstein-Dirac Theory
}%

\author{Robert J. Weinbaum}
 \email{Contact author: rweinbaum@uchicago.edu}
\affiliation{%
Leinweber Institute for Theoretical Physics, Enrico Fermi Institute, and Department of Physics,\\
The University of Chicago,\\
933 East 56th Street, Chicago, Illinois 60637, USA}%

\date{\today}

\begin{abstract}
Traversable wormholes are among the most striking hypothetical solutions of general relativity, but every such geometry requires matter that violates the averaged null energy condition (ANEC). A possible candidate for such matter is the Dirac field. Several recent works have claimed traversable wormhole solutions of the classical Einstein-Dirac-Maxwell system, but these works did not properly take into account that single-particle states do not self-interact  electromagnetically and that the mode functions of single-particle states are of positive frequency. In this paper, we show that classical, positive-frequency Dirac fields on certain fixed wormhole geometries can violate ANEC, so they are indeed candidates for sourcing traversable wormholes. We then numerically search for static, spherically symmetric, asymptotically flat traversable wormhole geometries sourced by Dirac fields of a definite frequency $\omega>0$, angular momentum quantum number $\ell$, and definite parity. We find ``partial-wormhole solutions,'' describing a regular wormhole throat with correct asymptotics at one end of the wormhole, but we find that these solutions cannot be continued to a second asymptotically flat end. In the case of reflection-symmetric wormholes, we perform an additional search where we do not assume that the Dirac solution has a definite parity. In this case, after an extensive scan of throat-forming asymptotic data, we find that the conditions necessary for a reflection-symmetric wormhole throat cannot be obtained. Taken together, these results strongly support that the Einstein-Dirac system does not admit traversable wormhole solutions when sourced by a physically
meaningful Dirac field.
\end{abstract}

\maketitle

\tableofcontents

\section{Introduction\label{ch:introduction}}

A wormhole is a hypothetical spacetime geometry describing a tunnel-like structure connecting separated regions of space, or even different parallel universes altogether (Fig. \ref{fig:wormholes}). The idea of a \textit{traversable} wormhole, in which timelike or null geodesics can cross from one end to the other, has proven captivating to a wide audience. To physicists, wormholes are a class of possible solutions in Einstein's theory of general relativity that play a role in many foundational questions about the nature of our universe. Wormholes have been invoked to better understand the nature of fundamental particles \cite{EinsteinRosen1935ParticleProblem}, causality \cite{MorrisThorneYurtsever1988WeakEnergyCondition}, and even entanglement in the context of holography \cite{MaldacenaSusskind2013CoolHorizons}. To fans of science fiction, traversable wormholes serve as a possible means by which a highly advanced civilization can travel between widely separated stars or galaxies, without centuries having to pass for the travelers \cite{Sagan1985Contact}.

\begin{figure}[!ht]
    \begin{subfigure}[t]{0.45\textwidth}
            \centering
 \includegraphics[width=\linewidth]{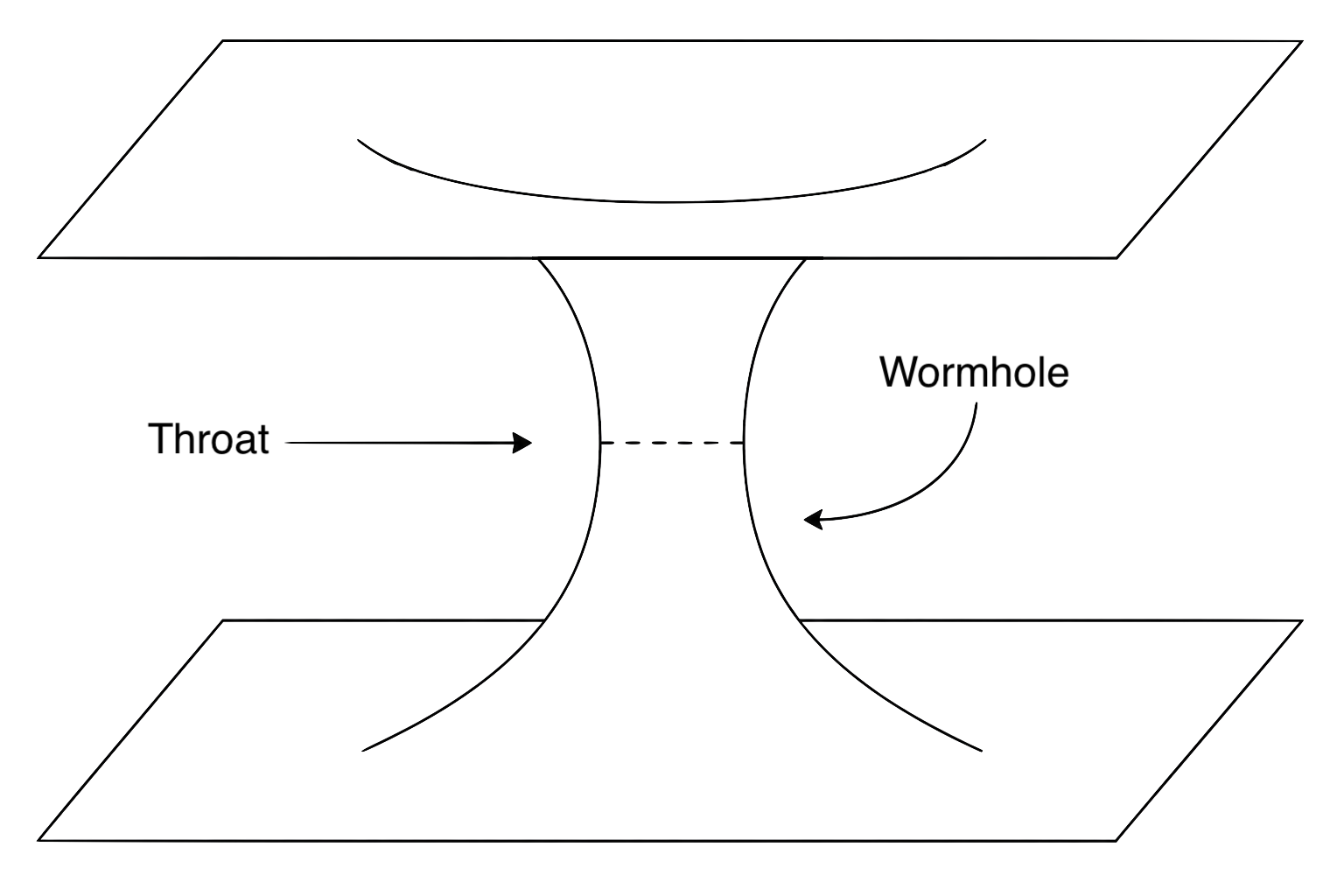}
        \caption{}
    \end{subfigure}
\hfill
     \begin{subfigure}[t]{0.45\textwidth}
            \centering
 \includegraphics[width=\linewidth]{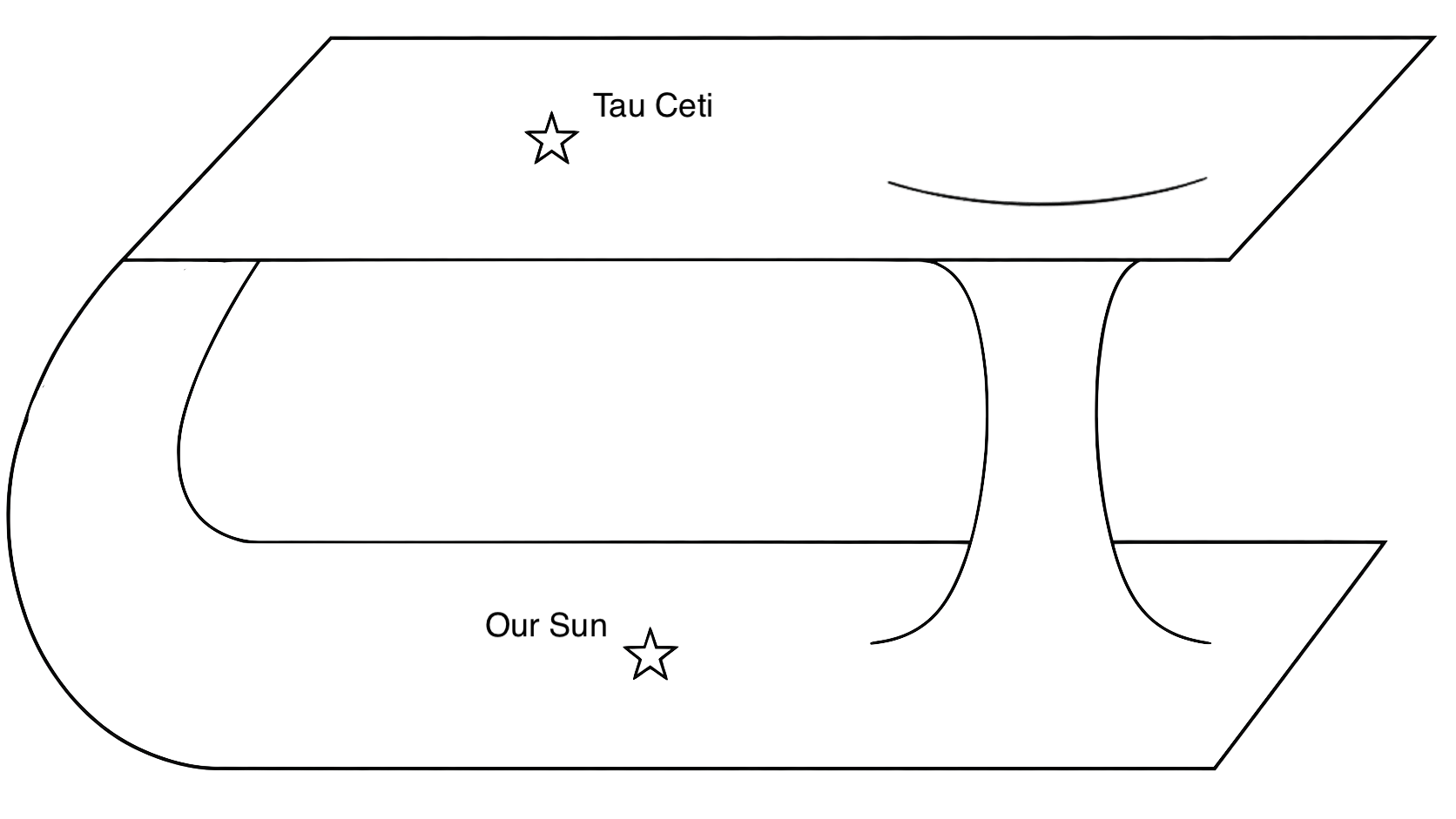}
        \caption{}
    \end{subfigure}
    \caption[Sketches of traversable wormhole geometries]{Equatorial slice of a static embedding diagram for a traversable wormhole connecting (a) two parallel universes or (b) two distant points in the same universe.}
    \label{fig:wormholes}
\end{figure}
\begin{figure}[!ht]
    \centering
    \includegraphics[width=0.45\textwidth]{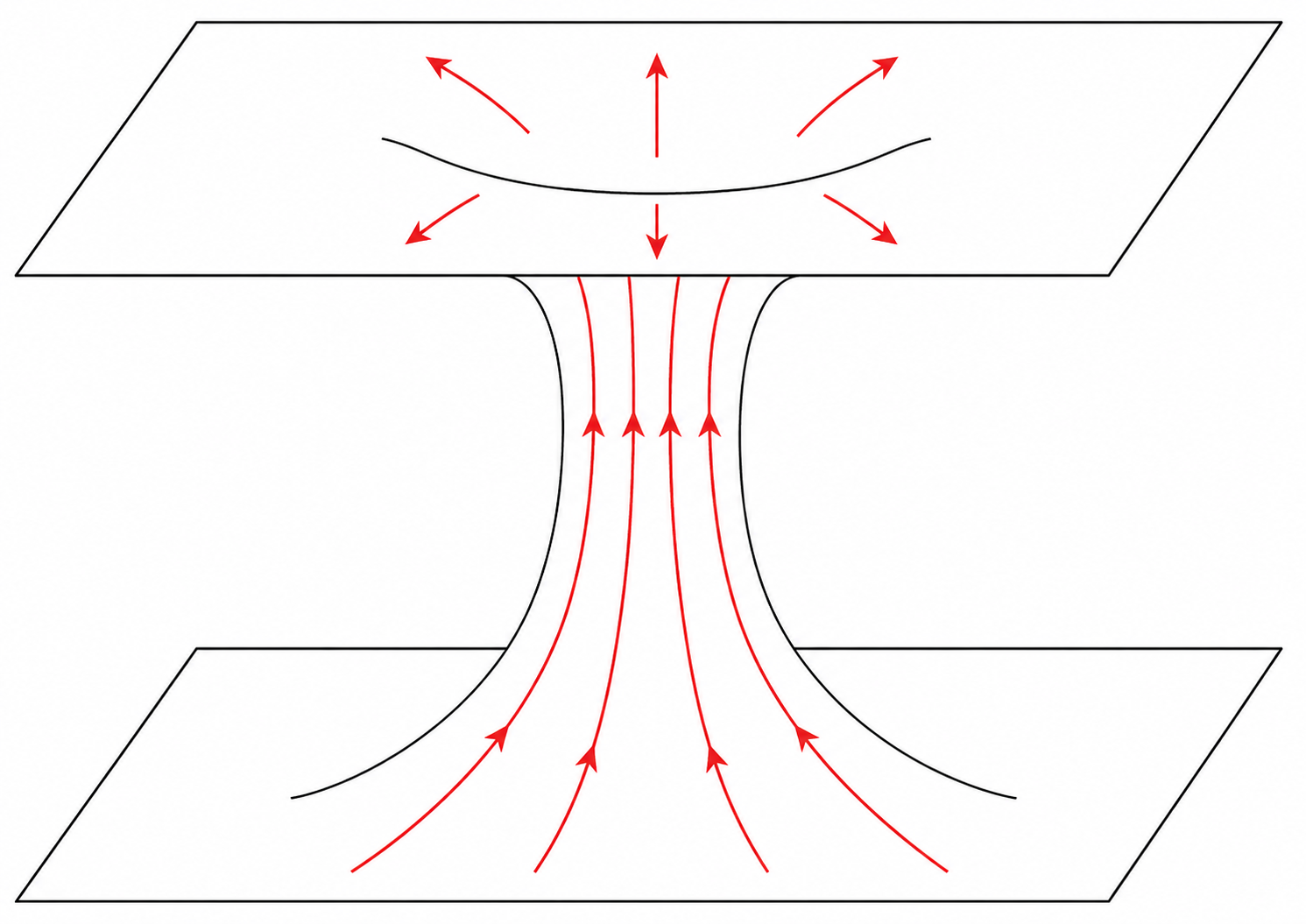}
    \caption[Radial null congruence propagating through a traversable wormhole]{A depiction of a radial congruence of null geodesics on an embedding diagram of a traversable wormhole geometry. This sketch shows an equatorial slice taken at constant time of a four-dimensional wormhole geometry.}
    \label{fig:null_cong}
\end{figure}

A major open question in the study of traversable wormholes is whether any traversable wormhole geometries are physically realistic solutions to the Einstein equations. A central obstruction to the existence of such solutions is that they require large violations of the null energy condition, which is a restriction on the stress-energy tensor $T_{ab}$ demanding that for any future-directed null vector $k^a$, we have $T_{ab}k^ak^b\geq 0$. The fact that traversable wormhole spacetimes require large violations of the null energy condition is most directly seen using the Raychaudhuri equation. Suppose we have a static, spherically symmetric, and asymptotically flat wormhole spacetime, and consider a congruence of radial null geodesics entering one end of the wormhole and exiting the other side (see Fig. \ref{fig:null_cong}). The Raychaudhuri equation for a congruence of null geodesics is \cite{Wald:1984rg}\begin{equation}
    \frac{d\theta}{d\lambda} = -\frac{1}{2}\theta^2-\sigma_{ab}\sigma^{ab}+\omega_{ab}\omega^{ab}-R_{ab}k^ak^b, \label{eq:null_raychaudhuri}
\end{equation} where $\theta,\sigma_{ab},$ and $\omega_{ab}$ are the expansion, shear, and twist of the null congruence, respectively. $\lambda$ is an affine parameter along each null geodesic, and $k^a = (\partial/\partial\lambda)^a$ is the null tangent. In the case of radial null geodesics in a spherically symmetric spacetime, we have $\sigma_{ab} = \omega_{ab} = 0$.\footnote{Spherical symmetry prevents a preferred direction from being picked out for the transverse deformation of the null congruence, forcing $\sigma_{ab}=0$. Similarly, a radial null congruence is hypersurface orthogonal, which forces $\omega_{ab} = 0$ by Frobenius' theorem.} At the narrowest point of the wormhole, referred to as the ``wormhole throat,'' the expansion $\theta$ switches from negative to positive, as the null congruence switches from converging to diverging at this point. This immediately gives $R_{ab}k^ak^b<0$, which implies via the Einstein equations that $T_{ab}k^ak^b < 0$.

A traversable wormhole requires the stronger condition that the averaged null energy condition be violated. This condition states that, along any complete, inextendible, affinely parameterized, future-directed null geodesic, the inequality \begin{equation}
    \int_{-\infty}^\infty T_{ab} k^ak^b \ d\lambda \geq 0
\end{equation} must hold, where $\lambda$ is an affine parameter and $k^a = (\partial/\partial\lambda)^a$ is the associated null tangent. By integrating eq. \eqref{eq:null_raychaudhuri} over $\lambda$ and replacing the left-hand side by its vanishing boundary values, we obtain \begin{align}
    \int_{-\infty}^\infty R_{ab}k^ak^b\ d\lambda <0 
\end{align} for a traversable wormhole geometry, which by the Einstein equations implies that the averaged null energy condition must be violated. Although we specialized to the static, spherically symmetric setting for simplicity, we note that more general topological arguments exist which show that a completely general traversable wormhole requires violations of the averaged null energy condition \cite{Galloway_1995}.

There has been a concerted effort over the last few decades to find physically realistic traversable wormhole solutions in general relativity. In a paper by  Maldacena, Milekhin, and Popov \cite{MaldacenaMilekhinPopov2023TraversableWormholes}, the authors attempt to obtain an approximate, semiclassical, traversable wormhole solution to the Einstein equations using massless, charged fermions. Their method involves dividing the spacetime geometry into regions where different analytic approximations hold, including a wormhole throat region where the semiclassical Einstein equations are solved. They then argue that these regional approximations have overlapping domains of validity, which allows the geometry to be consistently patched into a global wormhole configuration. While this approach may provide evidence for a semiclassical traversable wormhole solution, a global solution would place the analysis on firmer footing when considering genuinely global obstructions like the averaged null energy condition. A global solution would make it possible to evaluate such quantities directly, rather than inferring them from local approximations patched together across different regions, which could conceivably be insensitive to global obstructions.

Because the global semiclassical problem is exceptionally difficult, we instead investigate the simpler question of whether a classical matter model can support such a geometry. For a bosonic field, a classical solution would correspond to a semiclassical solution in a coherent state, with vacuum stress-energy neglected. However, most known classical matter satisfies the null energy condition, and by extension, the averaged null energy condition. This immediately excludes most familiar classical matter models as possible sources of traversable wormholes. Indeed, an early traversable wormhole solution due to Ellis required that the stress-energy was that of a scalar field with the wrong sign, in order to produce the required null energy condition violation \cite{Ellis1973Drainhole}. Similarly, the solutions discussed by Morris and Thorne use a general anisotropic stress-energy tensor, without imposing an equation of state that enforces any of the usual energy conditions \cite{Morris:1988tu}. These authors, and later other authors like Visser, explored solutions that constrain the presence of this hypothetical matter to a small region of the spacetime, but require hypothetical ``exotic matter'' nonetheless \cite{Visser1989TraversableWormholesSimpleExamples}.

A classical Dirac field, however, is able to violate both the null energy and averaged null energy conditions, as we will show in \S \ref{sec:Dirac_Fields_Can_Violate_ANEC}. The ability of Dirac fields to violate the null energy condition, even classically, makes them a natural matter model in which to test whether the usual obstruction can be overcome without resorting to hypothetical or nonexistent forms of matter. A paper by Bl\'azquez-Salcedo, Knoll, and Radu claimed to have found a traversable wormhole solution in general relativity, where the stress-energy tensor consisted of a charged, classical Dirac field coupled to a Maxwell field \cite{Blazquez-Salcedo:2020czn}. These authors numerically solved the Einstein-Dirac-Maxwell system from the wormhole throat out to one of the asymptotic infinities. They then reflected their solution across the wormhole throat and applied matching conditions at the junction to propose a global solution. These authors did not properly account for the behavior of the Dirac field under such a reflection, which yielded an error in their matching conditions \cite{DanielsonSatishchandranWaldWeinbaum2021BSKR}. 

Shortly thereafter, a paper by Konoplya and Zhidenko \cite{Konoplya:2021hsm} and another by Kain \cite{Kain2023EDMWormholesQFT} followed up with their own numerical solutions of the classical Einstein-Dirac-Maxwell system. Although these appear free of the issues we pointed out in \cite{DanielsonSatishchandranWaldWeinbaum2021BSKR}, in order for the classical solutions to correspond to single-particle quantum states in the static setting considered in these papers, the Dirac field must be positive-frequency with respect to Killing time, and the electromagnetic self-interaction should not be treated as an interaction term in the Dirac equation. Neither of these issues is appropriately accounted for in \cite{Konoplya:2021hsm,Kain2023EDMWormholesQFT}.

The aim of this paper is therefore to determine whether a normalized, positive-frequency classical Dirac field can source a static, spherically symmetric, asymptotically flat traversable wormhole spacetime. To start, we will dedicate Sec. \ref{ch:semiclassical_description_of_dirac_fields} to understanding which classical Dirac solutions have a physical, semiclassical interpretation. We begin with a treatment of classical Dirac fields in curved spacetime, including how the Dirac stress-energy and Dirac current are defined in a general spacetime. We then explore to what extent a classical Dirac field should be taken as an accurate model of the quantum Dirac field. We first show that single-particle states of the Dirac field should, in the absence of an external electromagnetic field, be treated as free, even when the Dirac field is electrically charged and coupled to the electromagnetic field; an appropriate definition of a single-particle state in an interacting quantum field theory already accounts for the effects of the self-interaction. We then explain that, although a single-particle state of the charged Dirac field satisfies the uncharged equation of motion, we cannot describe its stress-energy using the free Dirac stress-energy tensor, while simultaneously respecting local conservation of stress-energy. As a result, we will restrict our consideration to the neutral Einstein-Dirac system. In this setting, we will argue that positive-frequency, normalizable, classical solutions of the free Dirac equation correspond to single-particle states of the free Dirac field, and the classical, noninteracting Dirac stress-energy agrees with the expected stress-energy of a single-particle state of the free Dirac field, neglecting contributions from the vacuum.

Next, in Sec. \ref{ch:dirac_fixed_background}, we will treat Dirac solutions on a fixed traversable wormhole background geometry, neglecting the effects of backreaction. We first introduce the spacetime metric for a static, spherically symmetric traversable wormhole spacetime, and write down the explicit form of the Dirac equation, Dirac current, and Dirac stress-energy in such a spacetime. Though our discussion of the Dirac equation will be specialized to traversable wormhole geometries, these same steps can be followed exactly in a general static, spherically symmetric spacetime. Finally, we will show that Dirac solutions can violate both the pointwise and averaged null energy conditions in this ``test-field'' approximation.

In Sec. \ref{ch:static_spherically_symmetric_dirac_solutions}, we will restrict to solutions of the Dirac equation that are themselves static and spherically symmetric, while continuing to work in a fixed static and spherically symmetric background. We will argue that one needs to consider normalizable, stationary bound-state solutions of the Dirac equation to obtain time-independent observables. We then show that a mixed state of these solutions is required to obtain a fully static and spherically symmetric solution. Finally, we exhibit a stationary, normalizable, positive-frequency bound state that violates the averaged null energy condition on a fixed wormhole background, which motivates our investigation into the full Einstein-Dirac system.

 In Sec. \ref{ch:EDM_system}, we consider the fully backreacting Einstein-Dirac system. Here, we present analytic and numerical arguments that provide strong evidence against static, spherically symmetric, asymptotically flat traversable wormhole solutions to the Einstein-Dirac system. We will introduce the full equations describing the system, accounting for the considerations discussed in Sec. \ref{ch:semiclassical_description_of_dirac_fields} to ensure our solution is semiclassically meaningful. We show that there is no difficulty with obtaining a ``partial-wormhole solution,'' where the fields have the correct asymptotic behavior at one end of the wormhole and form a throat at the other end, with no horizons or singularities appearing in the solution. 
 
 We restrict consideration to the case where the Dirac solution has a definite frequency $\omega$ and a definite angular momentum quantum number $\ell$, and initially study the case where the Dirac solution is of a definite parity. We will argue in \S \ref{sec:non_existence_generic} that in this setting, a partial wormhole solution cannot be extended to a second asymptotically flat end. We present strong numerical evidence that, when evolving inward from one asymptotic end, only one of the two sphere-parities permits a wormhole throat to form while the other forbids it and, crucially, that the throat-permitting parity at one end is the opposite of the throat-permitting parity at the other. Since parity is a global label, a definite-parity solution that can form a throat from one end has the wrong parity to form one from the other end, ruling out consistent, global solutions that describe generic traversable wormholes.

In a reflection-symmetric wormhole, in which the geometry on one side of the throat is the mirror image of the other, we drop the restriction that the Dirac solutions are of definite parity, as bound-state solutions of both parities are supported at the same frequency. We treat this case separately in \S \ref{sec:non_existence_symmetric} by deriving additional constraints that a reflection-symmetric throat imposes on the Dirac spinor, and we present an extensive numerical scan showing that these constraints are never approached. Together, these results provide strong evidence that the Einstein-Dirac system, when restricted to physically realistic classical Dirac solutions, does not admit traversable wormhole solutions.

\textit{Conventions:} We will work in units with $c=\hbar = 1$ and $G = 1/8\pi$, so that the Einstein equation is simply $G_{ab} = T_{ab}$. We use the metric signature $(+---)$, as this metric signature has particular utility when working with spinor fields. Latin indices on spinor/tensor/vector fields should be interpreted as abstract indices in the sense of \cite{Wald:1984rg}, and not as components with respect to any particular basis. $\N+\frac{1}{2}$ denotes the set $\{1/2,3/2,5/2,\dots\}$ and $\Z+\frac{1}{2}$ denotes the set $\{\dots,-3/2,-1/2,1/2,3/2,\dots\}.$

\section{Semiclassical Description of Dirac Fields in Curved Spacetime}\label{ch:semiclassical_description_of_dirac_fields}

In this section, our goal is to introduce the classical Dirac field and to clarify when it admits a valid semiclassical interpretation. We will begin in \S \ref{sec:general_dirac_equation} with the classical description of the Dirac field on a curved spacetime background. This discussion fixes our spinor conventions and introduces the Dirac current and stress-energy tensor. We will also present two commonly used formalisms for writing the Dirac equation in curved spacetime: the two-spinor formalism and the four-component Dirac spinor formalism. We will summarize both techniques and present the mapping between them, as we believe such a reference will prove valuable to readers who need a dictionary to translate between the two formalisms. 

Then, in \S \ref{sec:self_interaction_effects} and \S \ref{sec:semi_classical_meaning}, we will turn to the question of when the classical Dirac field is an appropriate semiclassical model. We will show that a positive-frequency, classical Dirac field corresponds to a single-particle state in the quantum theory, and single-particle states evolve according to the free equation of motion. However, since taking the total stress-energy tensor of the theory to be the sum of the free Dirac and Maxwell stress-energy tensors is inconsistent with local stress-energy conservation, we will restrict our attention to normalizable, positive-frequency, neutral, classical Dirac fields satisfying the Einstein-Dirac equations. Such considerations were not properly accounted for in earlier works \cite{Konoplya:2021hsm,Kain2023EDMWormholesQFT}, which consider classical solutions with negative frequencies and do not treat the electromagnetic self-interaction correctly.

\subsection{\label{sec:general_dirac_equation}The Dirac Equation on a General Curved Spacetime}
We will now present the equations describing the Dirac field in curved spacetime. We will restrict our attention here to the free Dirac field, but the discussion can be generalized to include interactions without much effort. This will allow us to clearly formulate our spinor conventions and present definitions of the Dirac stress-energy tensor and Dirac current.  We will present the relevant equations using the two-spinor and four-component Dirac spinor formalisms so that this work is accessible to readers familiar with either presentation.

The two-spinor formalism is a convenient formulation of spinor fields in 4-dimensional spacetimes and is commonly used in the relativity community \cite{Wald:1984rg,PenroseRindler1984SpinorsSpacetime1}. In the two-spinor formalism, Weyl spinors, denoted by symbols carrying a raised, capitalized Latin index (e.g., $\Phi^A$), are elements of a spinor space $(W,\epsilon_{AB})$, where $W$ is a two-dimensional complex vector space, and $\epsilon_{AB}$ is a nondegenerate, antisymmetric tensor from $W\times W\to\C$. We introduce a basis $\{o^A,\iota^A\}$ for $W$, normalized by the convention $\epsilon_{AB}o^A\iota^B = 1.$ Conjugate Weyl spinors, denoted by symbols carrying a raised, capitalized, primed Latin index (e.g., $X^{A^\prime}$) belong to the conjugate spinor space $(\bar W,\bar \epsilon_{A^\prime B^\prime})$, on which we introduce a basis $\{\bar o^{A^\prime},\bar \iota^{A^\prime}\}.$ From these basis vectors, we obtain the real combinations which form a basis\footnote{Note that the combination used for ${e_2}^{AA^\prime}$ has a sign difference relative to that used in \cite{Wald:1984rg,PenroseRindler1984SpinorsSpacetime1}, but is consistent with \cite{GHP:1973} and standard particle physics literature. This will result in a mismatch of the usual 2-spinor conventions as to whether ordinary or conjugate spinors correspond to left- or right-handed Weyl spinors.} for the Hermitian elements of $W\otimes \bar W$: \begin{equation}
    \begin{aligned}
        &{e_0}^{AA^\prime} = \frac{1}{\sqrt{2}}(o^A\bar o^{A^\prime} + \iota^A\bar\iota^{A^\prime}) , 
        &&{e_1}^{AA^\prime} = \frac{1}{\sqrt{2}}(o^A\bar\iota^{A^\prime} + \iota^A\bar o^{A^\prime}),\\
        &{e_2}^{AA^\prime} =-\frac{i}{\sqrt{2}}(o^A\bar \iota^{A^\prime} - \iota^A\bar o^{A^\prime}), 
        &&{e_3}^{AA^\prime} = \frac{1}{\sqrt{2}}(o^A\bar o^{A^\prime} - \iota^A\bar \iota^{A^\prime}).
    \end{aligned}
\end{equation}

We now introduce non-coordinate vector fields $\{{e_0}^a,{e_1}^a,{e_2}^a,{e_3}^a\}$ that form a basis for the tangent space at each point in the spacetime, which are orthonormal in the sense that $g_{ab}{e_\mu}^a{e_\nu}^b = \eta_{\mu\nu}:=\text{diag}(+---)_{\mu\nu}$ everywhere; such a collection of vector fields is called a \textit{tetrad}. At each point $p$ in our spacetime, we can identify elements in the tangent space at that point with Hermitian elements of $W\otimes \bar W$ by the identification \begin{equation}
    {e_\mu}^{AA^\prime} \leftrightarrow {e_\mu}^a\vert_p.\label{spinor_identification}
\end{equation} Explicitly, a vector at a point $p$ in the spacetime can be written in terms of its components with respect to the basis given by the tetrad as $v^a = v^\mu {e_\mu}^a\vert_p$. This will be identified with a real element $v^{AA^\prime} = v^\mu {e_\mu}^{AA^\prime}$ of $W\otimes \bar W$. Using the spinor basis, this can be represented as a $2\times 2$ matrix with components \begin{equation}
    \begin{aligned}
    &\frac{1}{\sqrt{2}}\begin{pmatrix}
        v^0 + v^3 & v^1 - iv^2\\
        v^1 + iv^2 & v^0 - v^3
    \end{pmatrix} = \frac{1}{\sqrt{2}}[v^0 I_{2} + v^1\sigma^1 + v^2 \sigma^2 + v^3\sigma^3],
\end{aligned}
\end{equation} where $I_{2}$ denotes the $2\times 2$ identity matrix and $\sigma^i$ are the usual Pauli matrices \begin{align}
   & \sigma^1 = \begin{pmatrix}
        0  & 1\\
        1 & 0
    \end{pmatrix},&&\sigma^2 = \begin{pmatrix}
        0 & -i \\
        i & 0 
    \end{pmatrix}, &&& \sigma^3 = \begin{pmatrix}
        1 & 0 \\
        0 & -1
    \end{pmatrix}. 
\end{align}

A Dirac field is defined to be a pair consisting of a Weyl spinor and a dual conjugate Weyl spinor $(\Phi^A,X_{A^\prime})\in W\oplus \bar W^*$. The free Dirac equation for a Dirac field with mass $\mu$ is given by \begin{align}
    &\nabla_{AA^\prime}\Phi^A = \frac{\mu}{\sqrt{2}}X_{A^\prime} ,&&\nabla^{AA^\prime}X_{A^\prime}  =- \frac{\mu}{\sqrt{2}}\Phi^A, \label{eq:two_spinor_dirac}
\end{align}  where $\nabla_{AA^\prime}$ is the spinor covariant derivative \cite{PenroseRindler1984SpinorsSpacetime1,Wald:1984rg}. We define the spinor components of $\Phi^A$ and $X_{A^\prime}$ in terms of the aforementioned bases as \begin{align}
    &\Phi^A = \Phi_0o^A + \Phi_1\iota^A, && X_{A^\prime} = X_{1^\prime}\bar o_{A^\prime} - X_{0^\prime}\bar\iota_{A^\prime},\label{eq:two_spinor_components}
\end{align} where $\bar o_{B^\prime}:= \bar\epsilon_{A^\prime B^\prime}\bar o^{A^\prime}$ and $\bar \iota_{B^\prime}:= \bar\epsilon_{A^\prime B^\prime}\bar \iota^{A^\prime}$ form a basis of $\bar W^*$. The seemingly odd signs in the definition of the $X_{A^\prime}$ components come from the fact that, because $\bar\epsilon_{A^\prime B^\prime}$ is antisymmetric, $-\bar\iota_{A^\prime}$ is the basis vector of $\bar W^*$ dual to $\bar o^{A^\prime}$, and $\bar o_{A^\prime}$ is dual to $\bar \iota^{A^\prime}$.

We now present the same equations using the four-component Dirac spinor formalism, which is more common in the quantum field theory literature \cite{PeskinSchroeder:1995}. In the four-component Dirac spinor formalism, we again choose a tetrad $\{{e_0}^a,{e_1}^a,{e_2}^a,{e_3}^a\}$ on our spacetime as before, but we must also choose gamma matrices $\gamma^\mu$ satisfying $\{\gamma^\mu,\gamma^\nu\} = 2\eta^{\mu\nu}I_{4}$, where $I_{4}$ is the $4\times 4$ identity matrix. A convenient choice that makes the mapping between the two formalisms most transparent is to take the chiral representation \cite{PeskinSchroeder:1995} \begin{align}
    &\gamma^0 = \begin{pmatrix}
        0 & I_{2}\\
        I_{2} & 0
    \end{pmatrix}, && \gamma^i=\begin{pmatrix}
        0 & \sigma^i\\
        -\sigma^i & 0
    \end{pmatrix}.\label{eq:chiral_rep}
\end{align} To write down the Dirac equation, we define the dual basis of the cotangent bundle $\{{e^0}_a,{e^1}_a,{e^2}_a,{e^3}_a\}$ by the relation ${e^\mu}_a {e_\nu}^a = \tensor{\delta}{^\mu_\nu}$. We then solve the torsion-free condition \begin{equation}
    \text{d}e^\mu  + \tensor{\omega}{^\mu_\nu} \wedge e^\nu = 0
\end{equation} to find the connection 1-forms $(\tensor{\omega}{_\mu_\nu})_a:=\eta_{\mu\rho}(\tensor{\omega}{^\rho_\nu})_a.$ These, in turn, are used to define the spin-covariant derivative \cite{CollasKlein:2019} \begin{align}
    &D_\mu := e_\mu + \Gamma_\mu ,&&\Gamma_\mu := \frac{1}{4}(\omega_{\rho\lambda})_a\tensor{e}{_\mu^a}\gamma^\rho\gamma^\lambda,
\end{align} which we use to write the Dirac equation for a free Dirac field $\Psi$ with mass $\mu$ \begin{align}
    (i\gamma^\nu D_\nu - \mu)\Psi =0 .\label{eq:four_component_dirac}
\end{align} If $(X_{A^\prime},\Phi^A)$ satisfy eq. (\ref{eq:two_spinor_dirac}), and we define the components of $\Psi$ via \begin{equation}
    \Psi = \begin{pmatrix}
        iX_{0^\prime}\\
        iX_{1^\prime}\\
        \Phi_0\\
        \Phi_1
    \end{pmatrix},\label{eq:spinor_formalisms_identification}
\end{equation} then $\Psi$ satisfies eq. (\ref{eq:four_component_dirac}) (in the representation (\ref{eq:chiral_rep}) with the same choice of tetrad). The converse is also true, which provides a mapping between both spinor formalisms.

There is a natural inner product on Dirac solutions, and physically meaningful classical Dirac solutions must be normalizable in this Dirac inner product (see \S \ref{sec:semi_classical_meaning}). In a general spacetime, the Dirac inner product is given by \begin{align}
    &\langle \Psi_1,\Psi_2 \rangle =  \int_\Sigma j^a(\Psi_1,\Psi_2)\ n_a\ d\Sigma ,\label{eq:Dirac_Inner_Product}
\end{align} where $\Sigma$ is any Cauchy surface, $n_a$ is the normal covector to $\Sigma$, and $d\Sigma$ is the induced volume element on $\Sigma$. $j^a(\Psi_1,\Psi_2)$ is the Dirac current, which can be expressed in either the four-component spinor or two-spinor formalisms as \begin{align}
    j^a(\Psi_1,\Psi_2) := \bar\Psi_1 \gamma^a \Psi_2\label{eq:Dirac_Charge_Current}\end{align} or \begin{align} j^{AA^\prime}([X_{A^\prime},\Phi^A],[\Sigma_{A^\prime},\Xi^A]) := \sqrt{2}(\bar X^A \Sigma^{A^\prime}+\bar\Phi^{A^\prime}\Xi^A ), 
\end{align} where $\Psi_1$ and $[X_{A^\prime},\Phi^A]$ are supposed to represent the same Dirac spinor under the correspondence in eq. (\ref{eq:spinor_formalisms_identification}), and similarly for $\Psi_2$ and $[\Sigma_{A^\prime},\Xi^A]$. The raised index objects are defined by $\bar X^A := \epsilon^{AB}\bar X_B$ and $\Sigma^{A^\prime}: =\bar\epsilon^{A^\prime B^\prime}\Sigma_{B^\prime}$. $\bar \Psi:=\Psi^\dagger \gamma^0$ is the Dirac adjoint, which in the two-spinor formalism corresponds to the map $[ X_{A^\prime},\Phi^A]\to [\bar\Phi^{A^\prime},\bar X_A]$, taking a Dirac spinor to a dual Dirac spinor, also called a cospinor. This Dirac current is conserved on Dirac solutions, so eq. (\ref{eq:Dirac_Inner_Product}) is independent of the choice of Cauchy surface.

The stress-energy tensor of a Dirac field is given by \cite{Tong2006QFTSymmetriesCurrents} \begin{align}
    T_{ab} = \frac{i}{2}\bar\Psi \gamma_{(a} D_{b)} \Psi+\text{h.c.} \label{eq:dirac_stress_energy}
\end{align} where ``$+\text{h.c.}$'' denotes adding the Hermitian conjugate of the preceding terms. Equivalently, in the two-spinor formalism, the Dirac stress-energy tensor can be written \cite{PenroseRindler1984SpinorsSpacetime1}
 \begin{equation}
 \begin{aligned}
      T_{AA^\prime BB^\prime} &= -\frac{i}{2\sqrt{2}}[\Phi_A\nabla_{BB^\prime}\bar\Phi_{A^\prime} - \bar X_A\nabla_{BB^\prime} X_{A^\prime}+\Phi_B\nabla_{AA^\prime}\bar\Phi_{B^\prime} - \bar X_B\nabla_{AA^\prime} X_{B^\prime} ] +  \text{c.c.}
 \end{aligned}
    \label{eq:dirac_stress_energy_2_component}
\end{equation} where $\Phi_A:=\epsilon_{BA}\Phi^B$.

\subsection{Single-Particle States of the Dirac Field Satisfy the Free Equation of Motion}
 \label{sec:self_interaction_effects}

In previous works attempting to find wormhole solutions sourced by Dirac fields \cite{Blazquez-Salcedo:2020czn,Konoplya:2021hsm,Kain2023EDMWormholesQFT}, the authors couple the Dirac fields to electromagnetism. This is done by solving the fully coupled Einstein-Dirac-Maxwell system, in which the Dirac and Maxwell equations take the form
\begin{align}
    &(i\gamma^a(D_a -iqA_a)- \mu)\Psi = 0, && \nabla_aF^{ab} = qj^b(\Psi,\Psi).
\end{align} Here, $\gamma^a$ are the gamma matrices and $D_a$ is the spin-covariant derivative, both introduced in \S \ref{sec:general_dirac_equation}. $\Psi$ is the classical Dirac field, $\mu$ is its mass, $A_a$ is the electromagnetic 4-potential, $q$ is the electric charge of the Dirac field, $F_{ab} := \nabla_{[a}A_{b]}$ is the Maxwell field strength tensor, and $j^a$ is the Dirac current (also introduced in \S \ref{sec:general_dirac_equation}).  This system describes the behavior of an electrically charged spin-$1/2$ field, such as the electron. However, we will now argue that it is not appropriate to include the vector potential $A_a$ generated by the electron itself in the Dirac equation when considering the semiclassical model of a single particle.

The key consideration is how one defines a single-particle state in interacting quantum field theory. We will argue that, when the one-particle states are appropriately defined, interactions are already accounted for in the definitions of the free theory parameters. To see this, consider an interacting quantum field theory for a massive field in flat spacetime. In fully interacting quantum field theory, we do not expect a Fock space structure to exist. Still, from the Wightman Axioms, we do expect a Hilbert space $\mathscr{H}$ to exist that admits a unitary representation of the Poincar\'e group  \cite{Haag1996LocalQuantumPhysics}. The analysis of Wigner \cite{Wigner1939InhomogeneousLorentzGroup} shows that the irreducible representations are labeled by the values of the Casimir operators $P^2 = P_\mu P^\mu $ and $W^2= W_\mu W^\mu$ of the Poincar\'e group, where $P_\mu$ is the 4-momentum operator and $W_\mu$ is the Pauli-Lubanski operator.

Suppose that the spectrum of $P^2$ has a discrete part, and that $m^2>0$ is the smallest positive eigenvalue of $P^2$. Let $\Pi_{P^2=m^2}$ denote the orthogonal projection operator onto the eigenspace of $P^2$ with eigenvalue $m^2$. Wigner's analysis then tells us that the resulting Hilbert subspace can be expressed as a direct sum of irreducible representations labeled by their spin $s$, and so we can define a second orthogonal projection operator $\Pi_s$ onto the spin-$s$ subspace. The one-particle Hilbert space $\mathscr{H}_{1p}$ is defined by applying both of these projection operators to the full Hilbert space, i.e., $\mathscr{H}_{1p}:= \Pi_s\Pi_{P^2=m^2}\mathscr{H}$. 

$\mathscr{H}_{1p}$ carries an irreducible\footnote{In principle, $\mathscr{H}_{1p}$ can be a direct sum of multiple irreducible representations of the same mass and spin, corresponding to the case of having multiple non-interacting particles with the same mass and spin. In this case, we take one of the irreducible factors to be the one-particle Hilbert space.} representation of the Poincar\'e group, labeled by its mass $m$ and its spin $s$. Wigner's analysis shows that this representation is unitarily equivalent to that of a free particle with the same mass and spin. In particular, the Hamiltonian $H = P^0$ is represented on $\mathscr{H}_{1p}$, and will be unitarily equivalent to the Hamiltonian $H_0$ for a free particle with the same mass and spin. This implies that any observable $\mathcal{O}:\mathscr{H}_{1p}\to\mathscr{H}_{1p}$ will, under Heisenberg evolution, evolve in a manner that is unitarily equivalent to the evolution of an observable $\widetilde{\mathcal{O}}$ evolving according to the free Hamiltonian, when evaluating its matrix elements in $\mathscr{H}_{1p}$. Explicitly, if $U$ is the unitary map implementing $UHU^\dagger = H_0$, then $\widetilde{\mathcal{O}}:=U\mathcal{O}U^\dagger$ will evolve according to the free Hamiltonian $H_0$: for all $\psi,\phi\in\mathscr{H}_{1p}$ and $\mathcal{O}:\mathscr{H}_{1p}\to\mathscr{H}_{1p}$, we have 
\begin{equation}
    \begin{aligned}
    \bra{\psi}\widetilde{\mathcal{O}}(t)\ket{\phi} &= \bra{\psi} (U \mathcal{O}(t)U^\dagger)  \ket{\phi} = \bra{\psi}Ue^{iHt}\mathcal{O}(0)e^{-iHt}U^\dagger \ket{\phi}  = \bra{\psi} e^{iH_0 t}\widetilde{\mathcal{O}}(0)e^{-iH_0t}\ket{\phi}.
\end{aligned}
\end{equation} In other words, a one-particle state, defined in this way, evolves in agreement with the dynamics of a free particle of the appropriate mass and spin. In essence, the interactions are already accounted for in the values of the representation's mass and spin.

There are a few technical caveats worth mentioning. First, a general observable may not map the one-particle Hilbert space into itself, but so long as it preserves the one-particle Hilbert space to good approximation, the dynamics should be in agreement with the free theory to good approximation as well. A fully general observable that is highly disruptive to the one-particle nature of the theory (e.g., one that injects sufficient energy to induce particle production) should not be expected to evolve according to free, single-particle dynamics, but such an observable should also not be expected to admit a clean semiclassical description. Furthermore, it was essential to the argument that $P^2$ has a genuine eigenvalue at $m^2$, so that $\Pi_{P^2=m^2}$ yields a projection onto a valid closed subspace of the full Hilbert space that itself carries a representation of the Poincar\'e group. This can only be the case when considering \textit{stable} particles,\footnote{In quantum electrodynamics, a single, bare electron state does not exist as a genuine mass eigenstate of the theory. Physical electron states require a radiative dressing in the manner of Faddeev and Kulish \cite{WaldDanielsonSatishchandranInPrep}.} so one could not try to use this approach to define a single quark Hilbert space in quantum chromodynamics, for example.

This result is consistent with the familiar approach to solving the energy levels of a single electron in the hydrogen atom using the Dirac equation. In this calculation, $A_a$ is taken to be that of the Coulomb field of the proton, modeled as an external field, and the self-field generated by the electron is not included in the Dirac equation \cite{AuvilBrown1978RelativisticHydrogenAtom}. The predictions of the energy levels of hydrogen are consistent with experiment, and adding this self-force na\"ively would lead to large deviations in the results of the calculation. In a renormalizable quantum field theory, this is handled via a renormalization procedure which redefines the bare parameters in the Lagrangian \cite{PeskinSchroeder:1995,Weinberg1995BoundStatesExternalFields}.

As we have argued above, a classical Dirac field corresponding to a single-particle state should satisfy the free Dirac equation
\begin{equation}
(i\gamma^aD_a-\mu)\Psi = 0.
\end{equation} One would also expect that the electromagnetic field will satisfy Maxwell’s equation with source given by the Dirac current
\begin{equation}
\nabla_aF^{ab} = qj^b(\Psi,\Psi).
\label{eq:sourced_maxwell}
\end{equation} However, if these assumptions hold, then local conservation of stress-energy fails if we take the total stress-energy to be the sum of the free Dirac stress-energy (eq. \eqref{eq:dirac_stress_energy}) and the Maxwell stress-energy \begin{equation}
T_{ab}^\text{EM}
=
-g^{cd}
\left[
F_{ac}F_{bd}
- \frac{1}{4}g_{ab}g^{ef}F_{ce}F_{df}
\right].
\end{equation} For a Maxwell field satisfying eq. \eqref{eq:sourced_maxwell}, we have
\begin{equation}
\nabla^aT_{ab}^\text{EM}
=
qF_{ab}j^a.
\end{equation} In the fully interacting classical Dirac-Maxwell system, this nonzero divergence is precisely canceled by the divergence of the minimally coupled Dirac stress-energy tensor, as one has
\begin{equation}
\nabla^aT_{ab}^\text{Dirac,int}
=
-qF_{ab}j^a,
\end{equation}
so that we have a local conservation law for the full stress-energy tensor
\begin{equation}
\nabla^a
\left(
T_{ab}^\text{Dirac,int}
+
T_{ab}^\text{EM}
\right)
=
0.
\end{equation} Physically, this tells us that local conservation of energy follows when the Dirac and Maxwell fields exchange energy-momentum through the Lorentz force. For the semiclassical single-particle state considered here, however, the Dirac field satisfies the free equation of motion. If the Dirac stress-energy tensor were given by the free field expression given in eq. \eqref{eq:dirac_stress_energy}, then its stress-energy would be separately conserved:
\begin{equation}
\nabla^aT_{ab}^\text{Dirac,free}=0.
\end{equation}
Consequently, 
\begin{equation}
\nabla^a
\left(
T_{ab}^\text{Dirac,free}
+
T_{ab}^\text{EM}
\right)
=
qF_{ab}j^a,
\end{equation}
which is generically nonzero. As we will ultimately include gravitational backreaction via the Einstein equations, local conservation of stress-energy is required so that the stress-energy tensor is compatible with the Bianchi identity $\nabla^aG_{ab}=0$.

This argument implies that the Dirac stress-energy tensor in quantum electrodynamics, when restricted to the single-particle subspace, cannot be the same as that of a single particle in free Dirac theory. To study the full Einstein-Dirac-Maxwell system consistently, we would need to employ the full machinery of perturbative, interacting quantum electrodynamics. Since we aim to study the geometry sourced by the semiclassical Dirac field itself, our approach will be to consider an electrically neutral Dirac field and study the Einstein-Dirac system
\begin{align}
&G_{ab}=T_{ab}^\text{Dirac},
&&
(i\gamma^aD_a-\mu)\Psi=0.
\end{align}

\subsection{The Expected Stress-Energy for Single-Particle States of a Free Dirac Field}
\label{sec:semi_classical_meaning}
Though classical spinor fields are perfectly well-defined mathematical objects, their interpretation in the context of quantum field theory is not immediately clear. For bosonic fields, coherent states in the quantum theory can have large amplitude relative to their fluctuations, so these can be well-described by classical solutions. However, fermionic fields, such as the Dirac field, lack a notion of a coherent state due to the Pauli exclusion principle. Nonetheless, a physical interpretation of these solutions is possible by imposing appropriate restrictions on the frequency and the definition of the classical stress-energy, as we will now argue.

Let us review the quantum field theoretic description of a free Dirac field in a stationary spacetime. Let $\mathscr{H}$ be the Hilbert space of classical Dirac solutions, given by taking the completion of the space of classical solutions to the Dirac equation that are normalizable in the Dirac inner product (introduced in \S \ref{sec:general_dirac_equation}). In stationary spacetimes, we have a natural notion of positive and negative frequencies with respect to the timelike Killing vector field. More precisely, time evolution with respect to Killing time $t$ is implemented by a strongly continuous unitary one-parameter group $U(t)$ on $\mathscr{H}$ \cite{Wald1994QFTCurvedSpacetimeBlackHoleThermo}. By Stone's theorem, we can write $U(t) = e^{-iHt}$ where $H$ is self-adjoint on its appropriate Stone domain, and so the spectral theorem allows us to decompose $\mathscr{H}$ into $\mathscr{H}_+\oplus \mathscr{H}_-$, which are defined to be the positive and negative spectral subspaces of $H$, respectively.\footnote{By the analysis of \S \ref{sec:stationary_bound_states_exist}, the static traversable wormhole spacetimes we will consider do not admit nontrivial, normalizable, zero-energy solutions, and so we do not need to concern ourselves with the presence of so-called ``zero modes.''} $\mathscr{H}_+$ is the one-particle Hilbert space, and $\overline{\mathscr{H}_-}$ is the one-antiparticle Hilbert space. From these, we obtain the Dirac Fock space \begin{align}
    \mathscr{F}_\text{Dirac} := \mathscr{F}_A(\mathscr{H}_+)\otimes \mathscr{F}_A(\overline{\mathscr{H}_-}) 
\end{align} where  \begin{equation}
    \mathscr{F}_A(\mathscr{H}):= \C\oplus \mathscr{H} \oplus (\mathscr{H}\otimes_A\mathscr{H})\oplus\dots
\end{equation}and $\otimes_A$ denotes the antisymmetrized tensor product. 

Let $\{u_i\}$ be an orthonormal basis of $\mathscr{H}_+$ and $\{v_i\}$ be an orthonormal basis of $\mathscr{H}_-$. We can write the Dirac field operator\footnote{As we are considering classical Dirac fields in the remaining sections of the paper, $\Psi$ should not be interpreted as a quantum field theory operator outside of this section!} $\Psi$ and the conjugate Dirac field operator $\overline{\Psi}$ as \begin{align}
    & \Psi = \sum_i (u_i a_i + v_i b_i^\dagger),&& \overline{\Psi} = \sum_i(\bar v_i b_i + \bar u_i a_i^\dagger), \label{eq:def_quantum_dirac_operator}
\end{align} where $a_i,a_i^\dagger$ (resp. $b_i,b_i^\dagger$) are the annihilation and creation operators of a Dirac particle (resp. antiparticle). These satisfy the canonical anticommutation relations $\{a_i,a_j^\dagger\} = \{b_i,b_j^\dagger\}=\delta_{ij}$, and all other pairs of the creation/annihilation operators anticommute. Any normalized, positive-frequency classical solution $\psi_c$ can be expanded in terms of the basis $\{u_i\}$ as $\psi_c = \sum_i \langle u_i,\psi_c\rangle u_i,$ and we define the associated one-particle state by \begin{equation}
    \ket{\psi_c} := \sum_i \langle u_i,\psi_c\rangle a^\dagger_i\ket{\Omega},
\end{equation} where $\ket{\Omega}$ is the vacuum state annihilated by all the $a_i,b_i$. Normalized, classical, positive-frequency solutions therefore correspond to single-particle states of the Dirac field.

We now turn to the interpretation of the classical stress-energy of a classical Dirac solution. In the usual semiclassical treatment of the Einstein-Dirac system, one defines the Dirac stress-energy tensor as an operator \begin{equation}
    \hat T_{ab} =\frac{i}{2}\overline{\Psi} \gamma_{(a}D_{b)} \Psi + \text{h.c.},\label{eq:quantum_dirac_stress_energy}
\end{equation}where `$+\text{h.c.}$' refers to adding the Hermitian conjugate of the previous term. This is the quantum field theory analog of the classical Dirac stress-energy tensor, introduced in \S \ref{sec:general_dirac_equation}. Using this operator, one seeks solutions to $G_{ab} = \langle \hat T_{ab}^\text{ren}\rangle_\omega$, where $\langle\cdot\rangle_\omega$ denotes the expectation value in a chosen Hadamard state, and $\hat T_{ab}^\text{ren}$ is an appropriately renormalized stress-energy tensor. When working with classical spinor fields, we are instead solving $G_{ab} = T_{ab}^\text{classical}$, where $T_{ab}^\text{classical}$ is the stress-energy tensor of a classical Dirac solution $\psi_c$, given by eq. \eqref{eq:dirac_stress_energy}. We will now show that \begin{align}
T^\text{classical}_{ab} = \expval{\normord{\hat T_{ab}}}{\psi_c},
\end{align} where $\normord{\hat T_{ab}} \  = \hat T_{ab} - \expval{\hat T_{ab}}{\Omega}$ is the normal-ordered stress-energy tensor. 

To see this, insert the definition of the field operators (eq. (\ref{eq:def_quantum_dirac_operator})) into the Dirac bilinear appearing in the definition of the stress-energy tensor (eq. (\ref{eq:quantum_dirac_stress_energy})), which gives 
\begin{equation}
    \begin{aligned}
    &\overline{\Psi}\gamma_{(a}D_{b)}\Psi=\sum_{i,j}[(\bar v_i \gamma_{(a}D_{b)} u_j)b_ia_j + (\bar u_i\gamma_{(a}D_{b)} v_j)a_i^\dagger b_j^\dagger \\
    &\qquad \qquad\qquad\qquad  + (\bar u_i \gamma_{(a}D_{b)} u_j)a_i^\dagger a_j + (\bar v_i\gamma_{(a}D_{b)} v_j)b_ib_j^\dagger] .
\end{aligned}\label{eq:expanded_quantum_stress_energy}
\end{equation}  Taking the normal-ordered expectation value of eq. (\ref{eq:expanded_quantum_stress_energy}) in a one-particle state gives \begin{align}
    \expval{\normord{ \overline{\Psi}\gamma_{(a}D_{b)}\Psi}}{\psi_c}= \bar\psi_c\gamma_{(a}D_{b)}\psi_c.
 \end{align} It follows that, for a normalized, positive-frequency classical solution, the classical Dirac stress-energy tensor tells us what the normal-ordered expectation value would be for a fully quantum single-particle state, i.e., \begin{equation}
     \expval{\normord{\hat T_{ab}}}{\psi_c} = T_{ab}^\text{classical}(\psi_c)=\frac{i}{2}\bar\psi_c \gamma_{(a} D_{b)} \psi_c+\text{h.c.}\label{eq:classical_quantum_stress_energy_corr} 
 \end{equation} for normalized, positive-frequency classical solutions $\psi_c$. 
 
 In \S \ref{sec:spherical_sym_stress_energy}, we will need to consider incoherent superpositions of one-particle states in order to obtain a spherically symmetric Dirac stress-energy tensor. The correspondence between normalized, positive-frequency classical solutions and one-particle states of the quantum theory extends naturally to such statistical mixtures. More precisely, let $\{\psi_c^{(\alpha)}\}$ be a collection of normalized, positive-frequency classical solutions and let $p_\alpha \geq 0$ with $\sum_\alpha p_\alpha = 1$. The density matrix \begin{align}
     \hat \rho :=\sum_\alpha p_\alpha \ket{\psi_c^{(\alpha)}}\bra{\psi_c^{(\alpha)}}
 \end{align} describes an incoherent superposition of one-particle states. By linearity of the trace and eq. \eqref{eq:classical_quantum_stress_energy_corr}, its normal-ordered expected stress-energy is given by \begin{equation}
 \begin{aligned}
     \Tr\left(\hat \rho\normord{{\hat T}_{ab}}\right) &= \sum_\alpha p_\alpha \expval{\normord{\hat T_{ab}}}{\psi_c^{(\alpha)}}= \sum_\alpha p_\alpha T_{ab}^\text{classical}\left(\psi_c^{(\alpha)}\right).
 \end{aligned}
 \end{equation} Thus, the expected stress-energy of an incoherent superposition is the corresponding weighted sum of the classical stress-energy tensors of its constituent normalized, positive-frequency solutions. In particular, there are no interference terms between distinct solutions. This differs from a coherent superposition of classical solutions, whose stress-energy generally contains cross terms between the different solutions.
 
 If instead we wanted to consider a single antiparticle state, we would start with a negative-frequency, normalized, classical solution $\chi_c$, which can be expanded in terms of the $\{v_i\}$ as $\chi_c = \sum_i\langle v_i,\chi_c\rangle v_i.$ Defining a single antiparticle state by $\ket{\chi_c}:= \sum_i \langle \chi_c,v_i\rangle b_i^\dagger\ket{\Omega},$ we find \begin{align}
    \expval{\normord{ \overline{\Psi}\gamma_{(a}D_{b)}\Psi}}{\chi_c}=- \bar\chi_c\gamma_{(a}D_{b)}\chi_c,
 \end{align} where the minus sign comes from $\normord{b_ib_j^\dagger}\  = -b_j^\dagger b_i$. It follows that \begin{equation}
     \expval{\normord{\hat T_{ab}}}{\chi_c} = -T_{ab}^\text{classical}(\chi_c)=-\frac{i}{2}\bar\chi_c \gamma_{(a} D_{b)} \chi_c+\text{c.c.} 
 \end{equation} for normalized, negative-frequency classical solutions $\chi_c$.

Note that, by the definition of normal-ordering, the vacuum expectation value of the Dirac stress-energy is neglected when we use the classical stress-energy. In other words, our approach does not capture contributions to the stress-energy coming from the vacuum itself, but only contributions to the quantum stress-tensor arising from single-particle excitations above the vacuum. We do not argue that the vacuum contributions to the stress-energy tensor should be unimportant; in fact, such contributions are essential in the semiclassical wormhole solutions appearing in \cite{MaldacenaMilekhinPopov2023TraversableWormholes}. In this work, however, we are isolating whether the appropriate contributions to the stress-energy could have arisen from the single-particle sector itself.
 
 In summary, classical Dirac fields have the following physical interpretation. For a normalized, positive-frequency classical solution, the classical stress-energy of this solution agrees with the normal-ordered expectation value of the stress-energy in the associated single-particle state. Similarly, for a normalized, negative-frequency classical solution, \textit{minus} the classical stress-energy of this solution agrees with the normal-ordered expectation value of the stress-energy in the associated single-antiparticle state. The wormhole solutions given in \cite{Konoplya:2021hsm,Kain2023EDMWormholesQFT} are all negative-frequency solutions with the classical stress-energy tensor (eq. (\ref{eq:dirac_stress_energy})), and so a physical interpretation of these classical Dirac solutions does not exist. In this paper, we will restrict to normalizable,\footnote{Physical solutions are normalized to 1. On a fixed background, the Dirac equation is linear, so a normalizable solution can be freely rescaled to have norm 1. Even in the fully backreacting Einstein-Dirac case, where the equations become nonlinear, we will show in \S \ref{sec:asym_initial_data} that an appropriate rescaling of the coordinates, metric, and Dirac solution can be performed to rescale the Dirac norm of a normalizable solution to 1. For this reason, we will only require that our Dirac solutions be normalizable, but will not enforce a unit-norm requirement unless explicitly stated.} positive-frequency Dirac solutions with the classical stress-energy given by eq. (\ref{eq:dirac_stress_energy}), but all the calculations and results readily extend to negative-frequency Dirac solutions with a stress-energy given by minus the stress-energy tensor in eq. (\ref{eq:dirac_stress_energy}).

\section{Dirac Fields on a Fixed Wormhole Background}\label{ch:dirac_fixed_background} 
Our aim in this section is to cleanly formulate the behavior of a Dirac field in a static, spherically symmetric, asymptotically flat traversable wormhole spacetime, and to show that it need not satisfy the pointwise or averaged null energy conditions. We begin by describing, in detail, the geometry of a static and spherically symmetric traversable wormhole spacetime in \S \ref{sec:traversable_wormhole_spacetime_geometry}. Then, in \S \ref{sec:dirac_equation}, we apply the machinery of \S \ref{sec:general_dirac_equation} to write the Dirac equation for this spacetime, making use of the spacetime symmetries to reduce the content of the Dirac equation to a system of first-order ordinary differential equations. We also present explicit expressions for the Dirac current and stress-energy tensor. Finally, in \S \ref{sec:Dirac_Fields_Can_Violate_ANEC}, we explain how one can obtain counterexamples to the null energy condition for the Dirac field on a fixed traversable wormhole background, as well as how to construct traversable wormhole geometries on which the averaged null energy condition is violated by Dirac solutions.

\subsection{\label{sec:traversable_wormhole_spacetime_geometry}Traversable Wormhole Spacetimes}

Here, we will describe the geometry of a traversable wormhole spacetime. To make the analysis tractable, we restrict our consideration to static, spherically symmetric, and asymptotically flat traversable wormhole geometries. It is convenient to use a time coordinate $-\infty<t<\infty$ associated with the static Killing field $\xi^a$ by $\xi^a = (\partial/\partial t)^a$, and to denote its norm by $f$. Spherical symmetry allows us to foliate our spacetime by two-spheres, on which we introduce the usual polar and azimuthal angular coordinates $0\leq\theta<\pi$ and $0\leq \varphi<2\pi$, respectively. In spherically symmetric spacetimes, one usually chooses the last coordinate to be the Schwarzschild radial coordinate $r$ describing the size of each two-sphere, but in wormhole geometries $r$ must be nonmonotonic, and it will not be a convenient coordinate to use. Instead, we introduce a well-behaved radial coordinate $-\infty<x<\infty$. The most general static, spherically symmetric line element is then given by \begin{equation}
    ds^2 =  f(x)^2 \ \dt^2 - h(x)^2\ \dx^2 - r(x)^2[\dtheta^2+\sin^2\theta \ \dphi^2].
\end{equation} 

It will prove convenient to choose the coordinate $x$ to be an affine parameter along radial null geodesics, in which case we have $h(x) = 1/f(x)$. To see why, let $k^a = \alpha (\partial/\partial t)^a + \beta (\partial/\partial x)^a$ be tangent to an affinely parameterized radial null geodesic. Since $k^a$ satisfies the geodesic equation $k^a\nabla_ak^b = 0$, then along the geodesics generated by $k^a$ the inner product of $k^a$ and the timelike Killing field $\xi^a$ is constant, i.e., $k^a\nabla_a(k^b\xi_b) = 0$. This allows us to freely rescale $k^a$ such that $k^a\xi_a = 1$, implying that $\alpha = 1/f^2$. Next, if $x$ is the affine parameter associated with $k^a$, then $\beta = k^a\nabla_a x = 1$. Finally, the requirement that $k^a$ is null implies $\alpha^2 f^2 - \beta^2 h^2 = 0$, which, together with $\alpha = 1/f^2$ and $\beta = 1$, implies that $h = 1/f$. In conclusion, we have that the metric of an asymptotically flat traversable wormhole spacetime $(M,g)$ can be expressed in terms of the line element \begin{equation}
    ds^2 =  f(x)^2 \ \dt^2 - \frac{\dx^2}{f(x)^2} - r(x)^2[\dtheta^2+\sin^2\theta \ \dphi^2].\label{eq:metric}
\end{equation} 

The function $r(x)$ has a clear geometric interpretation, as it describes the radius of a 2-sphere that is invariant under the flow of the rotational Killing fields. To describe a wormhole, $r(x)$ must be nonmonotonic: as $x$ ranges from $-\infty$ to $+\infty$, $r(x)$ must decrease from $+\infty$ down to some minimal value $r_t>0$, and then increase again to $+\infty$ \cite{Morris:1988tu}. We call $r_t$ the ``throat radius,'' as it describes the size of the narrowest point of the wormhole. We will not assume that $r$ is monotonic on either side of its minimum unless otherwise stated, so it is possible to have several local minima of $r$. For the metric (\ref{eq:metric}) to describe a \textit{traversable} wormhole, there must be no horizons, which is enforced by requiring $f > 0$ everywhere. For the wormhole to be asymptotically flat, we require that \begin{align}
    r(x) \sim \pm \frac{x}{f_0^\pm} + \mathcal{O}(1), && f(x) \sim f_0^\pm + \mathcal{O}(1/x)
\end{align} as $x\to\pm \infty$. Because the wormhole has two asymptotically flat ends, we cannot, in general, simultaneously normalize $f_0^+$ and $f_0^-$ to 1.

\subsection{Dirac Fields on a Static and Spherically Symmetric Spacetime} \label{sec:dirac_equation} 

\subsubsection{The Dirac Equation} 
\label{sec:explicit_dirac_equation}

Let us now specialize the discussion of \S \ref{sec:general_dirac_equation} to the wormhole spacetime metric presented in eq. (\ref{eq:metric}). Our goal in this section is to present the explicit form of the Dirac equation that we will analyze in this paper, using the static and spherical symmetry of our spacetime to reduce the Dirac equation to a system of coupled first-order ordinary differential equations. We will also introduce a set of spinor components characterized by their behavior under the discrete symmetries of parity and time-reversal admitted by the spacetimes under consideration, which will play an essential role in the analysis for \S \ref{ch:static_spherically_symmetric_dirac_solutions} onward. Analogous simplifications of the Dirac equation exist in any static, spherically symmetric spacetime.

A natural choice of tetrad is obtained by normalizing the coordinate vector fields introduced in \S \ref{sec:traversable_wormhole_spacetime_geometry}\begin{equation}
    \begin{aligned}
    &{e_{0}}^a = f(x)^{-1}\left(\frac{\partial}{\partial t}\right)^a, &&{e_{1}}^a = r(x)^{-1}\left(\frac{\partial}{\partial \theta}\right)^a,   \\
    & {e_{2}}^a = r(x)^{-1}\csc\theta \left(\frac{\partial}{\partial\varphi}\right)^a,&&{e_{3}}^a = f(x)\left(\frac{\partial}{\partial x}\right)^a.
\end{aligned} \label{eq:non_coordinate_basis_def}
\end{equation} Using the components introduced in eq. (\ref{eq:two_spinor_components}) or eq. (\ref{eq:spinor_formalisms_identification}), the Dirac equation can be written as \begin{equation}
   \begin{aligned}
   \frac{\mu}{\sqrt{2}}(r f^{1/2} X_{0^\prime}) &= l^a\nabla_a(rf^{1/2}\Phi_0) + \edth^\prime(rf^{1/2}\Phi_1),\\
   \frac{\mu}{\sqrt{2}}(rf^{1/2} X_{1^\prime})&= n^a\nabla_a (rf^{1/2}\Phi_1)+\edth(rf^{1/2}\Phi_0),\\
   -\frac{\mu}{\sqrt{2}}(rf^{1/2}\Phi_1) &= l^a\nabla_a(rf^{1/2} X_{1^\prime}) - \edth(rf^{1/2} X_{0^\prime}), \\
    -\frac{\mu}{\sqrt{2}}(rf^{1/2}\Phi_0) &= n^a\nabla_a  (rf^{1/2} X_{0^\prime}) - \edth^\prime(rf^{1/2} X_{1^\prime}),\label{eq:full_dirac_equation_in_components}
\end{aligned}
\end{equation} where \begin{align}
    &l^a = \frac{1}{\sqrt{2}}\left[{e_0}^a + {e_3}^a\right], && n^a = \frac{1}{\sqrt{2}}\left[{e_0}^a - {e_3}^a\right]\label{eq:radial_null_vectors}
\end{align} are the radial null directions,\footnote{These correspond to future-directed null directions threading the wormhole from opposite directions, in such a way that they are orthogonal to the spatial two-spheres at each point.} and the angular derivative operators $\edth$ and $\edth^\prime$, defined by \cite{GHP:1973} \begin{equation}
    \begin{aligned}
    &\edth\eta  = \frac{1}{\sqrt{2}r(x)}\left(\frac{\partial}{\partial\theta} +i\csc\theta\frac{\partial}{\partial\varphi}+\frac{1}{2}\cot\theta\right)\eta ,\\
    &\edth^\prime \eta  = \frac{1}{\sqrt{2}r(x)}\left(\frac{\partial}{\partial\theta} -i\csc\theta\frac{\partial}{\partial\varphi}+\frac{1}{2}\cot\theta\right)\eta ,
\end{aligned}
\end{equation} are introduced in appendix \ref{app:spin_weighted_Spherical_Harmonics}. If we define rescaled spinor components, distinguished by a ``hat'' \begin{align}
    &\hat\Phi_i:= rf^{1/2}\Phi_i,&& \hat X_i:= rf^{1/2}X_i,
\end{align} then eq. (\ref{eq:full_dirac_equation_in_components}) simplifies to \begin{equation}
   \begin{aligned}
   &\frac{\mu}{\sqrt{2}}\hat  X_{0^\prime} = l^a\nabla_a\hat\Phi_0 + \edth^\prime\hat\Phi_1,&&\frac{\mu}{\sqrt{2}}\hat X_{1^\prime}= n^a\nabla_a \hat \Phi_1+\edth\hat\Phi_0,\\
   -&\frac{\mu}{\sqrt{2}}\hat \Phi_1 = l^a\nabla_a\hat X_{1^\prime} - \edth\hat X_{0^\prime},&&-\frac{\mu}{\sqrt{2}}\hat\Phi_0= n^a\nabla_a  \hat X_{0^\prime} - \edth^\prime\hat X_{1^\prime}.
\end{aligned}\label{eq:rescaled_dirac_equation_in_components}
\end{equation}

We also define the rescaled 4-component Dirac spinor $\hat\Psi$ to have components given by eq. (\ref{eq:spinor_formalisms_identification}), but using rescaled two-spinor components $\hat X_{i}$ and $\hat \Phi_i$.

We will now make use of the time-translation and spherical symmetries of our spacetime to obtain a further simplified Dirac equation. In stationary spacetimes, it is convenient to take the Fourier transform of each spinor component with respect to Killing time $t$, e.g., \begin{align}
    \hat \Phi_0^{(\omega)}(x,\theta,\varphi) = \int_{-\infty}^\infty dt\ e^{i\omega t}\hat \Phi_0(t,x,\theta,\varphi), 
\end{align} and similarly for the other components. In a static spacetime, the various frequencies $\omega$ decouple from one another in the Dirac equation, allowing us to consider each fixed $\omega$ independently.

Now, let us make use of spherical symmetry by expanding the components in terms of spherical harmonics. The components $(\hat\Phi_0^{(\omega)},\hat\Phi_1^{(\omega)},\hat X_{0^\prime}^{(\omega)},\hat X_{1^\prime}^{(\omega)})$ carry nontrivial spin weights, so we must expand their angular dependence in terms of spin-weight $\pm1/2$ spherical harmonics\footnote{The spin-weighted spherical harmonics are a generalization of the usual scalar/vector/tensor spherical harmonics that account for the fact that these components do not come back to themselves under a $2\pi$-rotation, but instead pick up an overall sign. See appendix \ref{app:spin_weighted_Spherical_Harmonics} for details.} as follows: \begin{equation}
    \begin{aligned}
        &\hat\Phi_0^{(\omega)}(x,\theta,\varphi)=\sum_{\ell,m}\phi_0^{(\omega,\ell,m)}(x)\ _{-1/2}Y_{\ell,m}(\theta,\varphi),\\
        &\hat\Phi_1^{(\omega)}(x,\theta,\varphi)=\sum_{\ell,m}\phi_1^{(\omega,\ell,m)}(x)\ _{+1/2}Y_{\ell,m}(\theta,\varphi),\\
        &\hat X_{0^\prime}^{(\omega)}(x,\theta,\varphi)=\sum_{\ell,m}\chi_{0^\prime}^{(\omega,\ell,m)}(x)\ _{-1/2}Y_{\ell,m}(\theta,\varphi),\\
        &\hat X_{1^\prime}^{(\omega)}(x,\theta,\varphi)=\sum_{\ell,m}\chi_{1^\prime}^{(\omega,\ell,m)}(x)\ _{+1/2}Y_{\ell,m}(\theta,\varphi),
    \end{aligned} \label{eq:spinor_component_decomposition}
\end{equation} where the sum over $(\ell,m)$ is taken over $\ell,m\in\Z+\frac{1}{2}$ satisfying $\ell\geq1/2$ and $|m|\leq \ell.$ In the resulting radial Dirac equation, the distinct $(\ell,m)$ decouple from one another, and the equations become ordinary differential equations in $x$ which depend only on $\omega$ and $\ell$:
 \begin{equation}
    \begin{aligned}
        \mu  \chi _{0^\prime}^{(\omega,\ell,m)}(x)&=f(x) \frac{d}{dx}\phi _0^{(\omega,\ell,m)}(x)-\frac{i \omega  }{f(x)}\phi _0^{(\omega,\ell,m)}(x)+\frac{(\ell+1/2)}{r(x)}\phi _1^{(\omega,\ell,m)}(x),\\
        -\mu\chi_{1^\prime}^{(\omega,\ell,m)}(x)&=f(x)\frac{d}{dx} \phi _1^{(\omega,\ell,m)}(x)+\frac{i \omega }{f(x)} \phi _1^{(\omega,\ell,m)}(x)+\frac{(\ell+1/2)}{r(x)}\phi
   _0^{(\omega,\ell,m)}(x),\\
   \mu\phi_0^{(\omega,\ell,m)}(x)&=f(x) \frac{d}{dx}\chi _{0^\prime}^{(\omega,\ell,m)}(x)+\frac{i \omega }{f(x)}\chi _{0^\prime}^{(\omega,\ell,m)}(x)+\frac{(\ell+1/2) }{r(x)}\chi _{1^\prime}^{(\omega,\ell,m)}(x),\\
   -\mu  \phi _1^{(\omega,\ell,m)}(x)&=f(x)\frac{d}{dx} \chi
   _{1^\prime}^{(\omega,\ell,m)}(x)-\frac{i \omega }{f(x)} \chi _{1^\prime}^{(\omega,\ell,m)}(x)+\frac{(\ell+1/2) }{r(x)}\chi _{0^\prime}^{(\omega,\ell,m)}(x).
    \end{aligned} \label{eq:reduced_dirac_equation}
\end{equation} 

We will formulate the remainder of this section in terms of the radial spinor components used in eq. \eqref{eq:reduced_dirac_equation}, but from Sec. \ref{ch:static_spherically_symmetric_dirac_solutions} onwards it will prove more convenient to use a different set of radial spinor components which take advantage of the fact that static, spherically symmetric spacetimes admit two discrete symmetries. Working in a basis of Dirac components with a nice behavior under these discrete symmetries further simplifies the analysis and form of the Dirac equation. In spherically symmetric spacetimes, the metric admits a discrete symmetry under the antipodal map on each two-sphere $\Upsilon$ (discussed in detail in appendix \ref{app:antipodal}).  This has a corresponding action on solutions $\Psi$ to the Dirac equation, taking them to a solution $\Upsilon^*\Psi\vert_{(t,x,\theta,\varphi)}:= i\gamma^0\gamma^2\gamma^3\Psi\vert_{\Upsilon(t,x,\theta,\varphi)}$. Importantly, $(\Upsilon^*)^2\Psi = -\Psi$, since it maps the spinor back to itself at the point $\varphi+2\pi$, which induces a sign flip for spin $1/2$ fields.\footnote{In standard quantum field theory conventions of, e.g. \cite{PeskinSchroeder:1995}, the parity map squares to 1. The complex Dirac field also has a $U(1)$ freedom associated with multiplication by a constant phase $\eta$. Replacing $\Upsilon^* \mapsto \eta\Upsilon^*$ changes its square by $\eta^2$ while leaving its action on the spacetime frame and on all Dirac bilinears (and hence on all physical quantities) unchanged. The choice $\eta = \pm i$ recovers the convention of \cite{PeskinSchroeder:1995}. We use the phase that arises from the geometric construction of appendix \ref{app:antipodal}, though the sign of $(\Upsilon^*)^2$ has no physical consequence in this work. See appendix \ref{app:antipodal} for further discussion. } Since the Dirac equation is linear, any linear combination of $\Psi$ and $\Upsilon^*\Psi$ also solves the Dirac equation. In particular, \begin{align}
    \Psi_\pm := \frac{1}{\sqrt{2}}(\Psi\mp i\Upsilon^*\Psi)\label{eq:def_of_sphere_parity_even_odd_solutions}
\end{align} are both solutions and eigenstates of $\Upsilon^*$ with eigenvalues $\pm i$, respectively. Since any $\Psi$ can be written as $\Psi = \frac{1}{\sqrt{2}}(\Psi_++\Psi_-)$, we call a solution  \textit{sphere-parity\footnote{Suppose we have a wormhole geometry where the throat is at $x=0$. In some cases, we will consider reflection-symmetric wormholes, where the metric is invariant under $x\to-x$. In such cases, it is unclear whether the word ``parity'' should refer to the antipodal map $\Upsilon$ or the reflection map $x\to-x$, so we refer to the former as ``sphere-parity'' and the latter as ``reflection'' to avoid potential confusion.} even} (resp. \textit{odd}) if $\Psi_-=0$ (resp. $\Psi_+=0$). 

Given a solution $\Psi$ on a fixed background geometry, we can construct the associated linear combination $\Psi_\pm$ and consider this new solution instead, allowing us to restrict to either sphere-parity even or sphere-parity odd solutions. If we work instead with components of $\Psi_\pm$, then (after taking the Fourier transform in $t$, expanding in spin-weighted spherical harmonics, and rescaling by $rf^{1/2}$ as we did for the components $(\hat\Phi_0,\hat\Phi_1,\hat X_{0^\prime},\hat X_{1^\prime})$) we obtain a Dirac equation in which the components decouple into sphere-parity even and odd parts:
\begin{equation}
    \begin{aligned}
    &0= \left[f\frac{d}{dx} + \frac{i\omega}{f}\sigma^3 + \frac{(\ell+1/2)}{r}\sigma^1 + \kappa_\ell \mu \sigma^2\right]\begin{pmatrix}
        \tilde\psi_0^{(\omega,\ell,m)}\\
        \tilde\psi_1^{(\omega,\ell,m)}
    \end{pmatrix},\\
    &0= \left[f\frac{d}{dx} + \frac{i\omega}{f}\sigma^3 + \frac{(\ell+1/2)}{r}\sigma^1 -\kappa_\ell \mu \sigma^2\right]\begin{pmatrix}
        \tilde\psi_2^{(\omega,\ell,m)}\\
        \tilde\psi_3^{(\omega,\ell,m)}
    \end{pmatrix},
\end{aligned}\label{eq:intermediate_dirac_equation}
\end{equation} where $\kappa_\ell :=(-1)^{\ell-1/2}$ and the new radial functions are defined by \begin{equation}\begin{aligned}
    &\tilde \psi_0^{(\omega,\ell,m)}(x):= \frac{1}{\sqrt{2}}(\phi_1^{(\omega,\ell,m)}+ i(-1)^{\ell-1/2} \chi_{0^\prime}^{(\omega,\ell,m)}), \\
    & \tilde\psi_1^{(\omega,\ell,m)}(x):= \frac{1}{\sqrt{2}}(\phi_0^{(\omega,\ell,m)}+i(-1)^{\ell-1/2} \chi_{1^\prime}^{(\omega,\ell,m)}),\\
    &\tilde \psi_2^{(\omega,\ell,m)}(x):= \frac{1}{\sqrt{2}}(\phi_1^{(\omega,\ell,m)}- i(-1)^{\ell-1/2} \chi_{0^\prime}^{(\omega,\ell,m)}) ,\\
    & \tilde\psi_3^{(\omega,\ell,m)}(x):=  \frac{1}{\sqrt{2}}(\phi_0^{(\omega,\ell,m)}-i(-1)^{\ell-1/2} \chi_{1^\prime}^{(\omega,\ell,m)}).
\end{aligned}\label{eq:pre_sphere_parity_components}
\end{equation} On sphere-parity even (resp. odd) solutions we have $\tilde\psi_2^{(\omega,\ell,m)}=\tilde\psi_3 ^{(\omega,\ell,m)}= 0$ (resp. $\tilde\psi_0^{(\omega,\ell,m)}=\tilde\psi_1^{(\omega,\ell,m)} = 0$).

Finally, because our spacetime is static, we have a second discrete symmetry given by time-reversal, which we can use to obtain a \textit{real}, radial Dirac equation. Static spacetimes admit a time-reversal symmetry $t\mapsto -t$, whose action on Dirac solutions is given by an antilinear map $T^*$ (discussed in detail in appendix \ref{app:time_reversal}). Time-reversal interchanges the radial components we introduced within each sphere-parity sector, up to complex conjugation and an overall phase. This motivates defining the following linear combinations of $(\tilde\psi_0^{(\omega,\ell,m)},\tilde\psi_1^{(\omega,\ell,m)})$ and of $(\tilde\psi_2^{(\omega,\ell,m)},\tilde\psi_3^{(\omega,\ell,m)})$\begin{equation}
    \begin{aligned}
    &\psi_0^{(\omega,\ell,m)}:= \frac{1}{\sqrt{2}}(\tilde\psi_0^{(\omega,\ell,m)} -i \tilde\psi_1^{(\omega,\ell,m)}), 
    &&\psi_1^{(\omega,\ell,m)}:= \frac{1}{\sqrt{2}}(i\tilde\psi_0^{(\omega,\ell,m)} - \tilde\psi_1^{(\omega,\ell,m)}),\\
    &\psi_2^{(\omega,\ell,m)}:= \frac{1}{\sqrt{2}}(-\tilde\psi_2^{(\omega,\ell,m)}-i\tilde\psi_3^{(\omega,\ell,m)}), && \psi_3^{(\omega,\ell,m)} := \frac{1}{\sqrt{2}}(i\tilde\psi_2^{(\omega,\ell,m)}+\tilde\psi_3^{(\omega,\ell,m)}),
\end{aligned} \label{eq:real_spinor_component_definition}
\end{equation} which we use to obtain the following radial equation with only real coefficients
    \begin{equation}
    \begin{aligned}
    &0= \left[f\frac{d}{dx} + \frac{\omega}{f}(i\sigma^2) - \frac{(\ell+1/2)}{r}\sigma^1 + \kappa_\ell \mu  \sigma^3\right]\begin{pmatrix}
        \psi_0^{(\omega,\ell,m)}\\
        \psi_1^{(\omega,\ell,m)}
    \end{pmatrix},\\
    &0= \left[f\frac{d}{dx} - \frac{\omega}{f}(i\sigma^2) - \frac{(\ell+1/2)}{r}\sigma^1 + \kappa_\ell\mu \sigma^3\right]\begin{pmatrix}
        \psi_2^{(\omega,\ell,m)}\\
        \psi_3^{(\omega,\ell,m)}
    \end{pmatrix},
\end{aligned}\label{eq:real_dirac}
\end{equation} 
where again $\kappa_\ell:=(-1)^{\ell-1/2}$. The matrices $\sigma^1,i\sigma^2,\sigma^3$ have all real entries and their coefficients are real.  Thus, using the radial components $({\psi_0}^{(\omega,\ell,m)},{\psi_1}^{(\omega,\ell,m)},{\psi_2}^{(\omega,\ell,m)},{\psi_3}^{(\omega,\ell,m)})$ results in a set of ordinary differential equations in $x$, in which $({\psi_0}^{(\omega,\ell,m)},{\psi_1}^{(\omega,\ell,m)})$ is decoupled from $({\psi_2}^{(\omega,\ell,m)},{\psi_3}^{(\omega,\ell,m)})$. Moreover, all the coefficients appearing in the differential equation are real, and thus the real and imaginary parts of solutions are decoupled.

\subsubsection{The Dirac Inner Product and Stress-Energy Tensor}
We turn now to the explicit form of the Dirac inner product. In spacetimes with a timelike Killing field, it is convenient to take our Cauchy surface in eq. (\ref{eq:Dirac_Inner_Product}) to be a surface of constant Killing time. For our static, spherically symmetric wormhole geometry (eq. (\ref{eq:metric})), this gives \begin{equation}
    \begin{aligned}
    &\langle \Psi_1,\Psi_2\rangle = \langle [X_{A^\prime},\Phi^A],[\Sigma_{A^\prime},\Xi^A]\rangle \\
    & = \int_{t=\text{const}} [(\bar\Phi_0 \Xi_0 + \bar\Phi_1 \Xi_1 +\bar X_{0^\prime}\Sigma_{0^\prime} + \bar X_{1^\prime}\Sigma_{1^\prime})f^{-1}r^2\sin\theta\ dxd\theta d\varphi],
\end{aligned}
\end{equation} where the bars on the components denote complex conjugation. Replacing these spinor components by their rescaled counterparts $\hat\Phi_i := rf^{1/2}\Phi_i$ and $\hat X_i:= rf^{1/2}X_i$ used in eq. (\ref{eq:rescaled_dirac_equation_in_components}), we find that the Dirac norm is given by \begin{equation}
\begin{aligned}
  \|\Psi\|^2&= \|[ X_{A^\prime}, \Phi^A]\|^2 \\
  &= \int_{t=\text{const}} \biggr[\left(|\hat \Phi_0|^2 +|\hat \Phi_1|^2 + |\hat X_{0^\prime}|^2 + |\hat X_{1^\prime}|^2\right)f^{-2}\sin\theta\ dxd\theta d\varphi \biggr] .\label{eq:dirac_norm}
\end{aligned}      
\end{equation}

Next, we consider the Dirac stress-energy tensor. For the analysis in this section, we will not require explicit expressions for all the stress-energy tensor components; however, as we will present violations of the pointwise and averaged null energy conditions in \S \ref{sec:Dirac_Fields_Can_Violate_ANEC}, we will need the stress-energy tensor components that appear in the formulation of these conditions. If we consider the affinely parameterized radial null geodesics threading the wormhole in different directions, their future-directed null tangent vectors are given by  \begin{align}
    &\left(\frac{\partial}{\partial \lambda}\right)^a = f(x)^{-2}\left(\frac{\partial}{\partial t}\right)^a + \left(\frac{\partial}{\partial x}\right)^a,&&\left( \frac{\partial}{\partial\tilde\lambda}\right)^a= f(x)^{-2}\left(\frac{\partial}{\partial t}\right)^a-\left(\frac{\partial}{\partial x}\right)^a. \label{eq:affine_param_null_geos}
\end{align} In terms of the rescaled spinor components of \S \ref{sec:explicit_dirac_equation}, we have \begin{equation}
    \begin{aligned}
    &T_{ab}\left(\frac{\partial}{\partial \lambda}\right)^a\left(\frac{\partial}{\partial \lambda}\right)^b=\frac{i}{\sqrt{2}f^3r^2}\left[\bar {\hat X}_{0^\prime} l^a\nabla_a\hat X_{0^\prime} - \hat \Phi_1l^a\nabla_a\bar{\hat \Phi}_1\right]+\text{c.c.}\ ,\\
    &T_{ab}\left(\frac{\partial}{\partial \tilde \lambda}\right)^a\left(\frac{\partial}{\partial \tilde \lambda}\right)^b=\frac{i}{\sqrt{2}f^3r^2}\left[\bar {\hat X}_{1^\prime} n^a\nabla_a\hat X_{1^\prime} - \hat \Phi_0n^a\nabla_a\bar{\hat \Phi}_0\right]+\text{c.c.}\ ,
\end{aligned}\label{eq:general_null_null_components}
\end{equation} where the radial null vectors $l^a$ and $n^a$ were introduced in eq. \eqref{eq:radial_null_vectors}. If the components have a definite frequency $\omega$ and spherical harmonic dependence $(\ell,m)$, these expressions become \begin{equation}
    \begin{aligned}
    T_{ab}\left(\frac{\partial}{\partial \lambda}\right)^a\left(\frac{\partial}{\partial \lambda}\right)^b &=\frac{1}{2f^2r^2}\left[\frac{\omega}{f^2}|\chi_{0^\prime}|^2+i\bar\chi_{0^\prime}\frac{\partial}{\partial x}\chi_{0^\prime}\right]\left|_{-1/2}Y_{\ell,m}\right|^2 \\
    &\qquad \qquad\qquad +\frac{1}{2f^2r^2}\left[\frac{\omega}{f^2}|\phi_1|^2-i\phi_1\frac{\partial}{\partial x}\bar \phi_1\right]\left|_{1/2}Y_{\ell,m}\right|^2+\text{c.c.}\ , \\
    T_{ab}\left(\frac{\partial}{\partial \tilde \lambda}\right)^a\left(\frac{\partial}{\partial \tilde \lambda}\right)^b&=\frac{1}{2f^2r^2}\left[\frac{\omega}{f^2}|\chi_{1^\prime}|^2-i\bar\chi_{1^\prime}\frac{\partial}{\partial x}\chi_{1^\prime}\right]\left|_{1/2}Y_{\ell,m}\right|^2 \\
    &\qquad\qquad \qquad +\frac{1}{2f^2r^2}\left[\frac{\omega}{f^2}|\phi_0|^2+i\phi_0\frac{\partial}{\partial x}\bar \phi_0\right]\left|_{-1/2}Y_{\ell,m}\right|^2+\text{c.c.}\ ,
\end{aligned}\label{eq:explicit_null_null_components}
\end{equation}where we suppressed the $(\omega,\ell,m)$ superscripts on the radial components $(\chi_{0^\prime},\chi_{1^\prime},\phi_0,\phi_1)$ to write this more compactly. Finally, we can eliminate the $x$ derivatives using eq. (\ref{eq:reduced_dirac_equation}) to obtain 
\begin{widetext}
    \begin{equation}
    \begin{aligned}
    T_{ab}\left(\frac{\partial}{\partial \lambda}\right)^a\left(\frac{\partial}{\partial \lambda}\right)^b&=\frac{1}{2f^2r^2}\left[\frac{2\omega}{f^2}|\chi_{0^\prime}|^2+\frac{i\mu\phi_0\bar\chi_{0^\prime}}{f} - \frac{i(\ell+1/2)\chi_{1^\prime}\bar\chi_{0^\prime}}{fr}\right]\left|_{-1/2}Y_{\ell,m}\right|^2 \\
    &\qquad +\frac{1}{2f^2r^2}\left[\frac{2\omega}{f^2}|\phi_1|^2-\frac{i\mu\chi_{1^\prime}\bar\phi_1}{f} - \frac{i(\ell+1/2)\phi_0\bar\phi_1}{fr}\right]\left|_{1/2}Y_{\ell,m}\right|^2+\text{c.c.}\ ,\\
    T_{ab}\left(\frac{\partial}{\partial \tilde \lambda}\right)^a\left(\frac{\partial}{\partial \tilde \lambda}\right)^b&=\frac{1}{2f^2r^2}\left[\frac{2\omega}{f^2}|\chi_{1^\prime}|^2+\frac{i\mu\phi_1\bar\chi_{1^\prime}}{f}+\frac{i(\ell+1/2)\chi_{0^\prime}\bar\chi_{1^\prime}}{fr}\right]\left|_{1/2}Y_{\ell,m}\right|^2 \\
    &\qquad +\frac{1}{2f^2r^2}\left[\frac{2\omega}{f^2}|\phi_0|^2-\frac{i\mu\chi_{0^\prime}\bar\phi_0}{f}+\frac{i(\ell+1/2)\phi_1\bar\phi_0}{fr}\right]\left|_{-1/2}Y_{\ell,m}\right|^2+\text{c.c.}\ 
\end{aligned}\label{eq:pointwise_NEC}
\end{equation} 
\end{widetext}

\subsection{Dirac Fields Can Violate the Pointwise and Averaged Null Energy Conditions}\label{sec:Dirac_Fields_Can_Violate_ANEC}

As discussed in the introduction, traversable wormhole solutions in general relativity require violations of the pointwise and averaged null energy conditions. If one could prove that Dirac fields on a fixed wormhole background satisfy either the pointwise or averaged null energy condition, then immediately one could conclude that Dirac fields are not a possible source for the stress-energy of a traversable wormhole spacetime. 

Dirac fields can easily violate the null energy condition at a point, which we will now show by an explicit example. Note that, because the Dirac equation is a first-order ordinary differential equation, the radial spinor components $(\chi_{0^\prime},\chi_{1^\prime},\phi_0,\phi_1)$ at a point are freely specifiable data. For a given background metric, one can choose values for the spinors and the parameters $\omega,\mu,$ and $\ell$ to make eq. (\ref{eq:pointwise_NEC}) negative. For example, suppose that at a given point $f=r=1$; setting $\omega = 1$, $\mu =1$, $\ell=m=1/2$, and \begin{align}
    &\chi_{0^\prime}(0) = 3, &&\chi_{1^\prime}(0) = 0, &&&\phi_0(0) = i,&&& \phi_1(0) = 0,
\end{align} results in a value for \begin{align}
    T_{ab}\left(\frac{\partial}{\partial \lambda}\right)^a\left(\frac{\partial}{\partial \lambda}\right)^b\biggr\vert_{\theta = 0}= -\frac{1}{2\pi}<0,
\end{align}  so the null energy condition is clearly violated. Examples of null energy condition violations for the Dirac field at a point are easy to obtain in a wide range of background geometries, so this example is not a special case.

For this solution to have a legitimate semiclassical interpretation, it must be normalizable in the Dirac norm of eq. (\ref{eq:dirac_norm}), which is not immediately clear from prescribing its initial data at a point. As $x \to \pm \infty$, asymptotic flatness gives the leading-order behavior of $f(x)\sim f_0^\pm+\mathcal{O}(1/x)$ and $r(x)\sim \mathcal{O}(|x|)$ for the metric functions.\footnote{In an asymptotically flat spacetime, one usually chooses to normalize the timelike Killing field by having its norm $f$ go to 1 at infinity. However, as we now have two asymptotic infinities (one at each end of the wormhole), it is not generally possible to simultaneously normalize $f$ at both ends.} Replacing the metric functions by their asymptotic values in eq. (\ref{eq:reduced_dirac_equation}) tells us that the leading-order behavior of each spinor component is \begin{align}
    \phi_0 \sim A^\pm e^{x\sqrt{\mu^2-\omega^2/{f_0^\pm}^2}}+B^\pm e^{-x\sqrt{\mu^2-\omega^2/{f_0^\pm}^2}}
\label{eq:asymptotic_dirac_solution_behavior}\end{align} as $x\to\pm\infty$, where $A^\pm,B^\pm\in\C$, and similarly for the other components.\footnote{A more precise asymptotic falloff on the metric and spinor components is reserved for \S \ref{sec:asym_initial_data}, but this crude estimate suffices for the purposes of the discussion here.} This means that, for any solution with $\mu < \omega/f_0^\pm$, the behavior of the components $(\hat\Phi_0,\hat\Phi_1,\hat X_{0^\prime},\hat X_{1^\prime})$ appearing in eq. (\ref{eq:dirac_norm}) will be oscillatory. Thus, any Dirac solution with $\mu  < \omega/f_0^\pm$ will not be normalizable if we consider stationary solutions with a single frequency $\omega$.

Nevertheless, we will now show that normalizable solutions exist which violate the null energy condition. In the null-energy-condition-violating example given above, let us choose the background geometry so that the solution is asymptotically oscillatory by choosing sufficiently small asymptotic values for $f$. Then we can construct a normalizable solution violating the null energy condition by forming a wavepacket, i.e., by coherently superposing a range of frequencies to obtain a normalizable solution, similar to how superposing non-normalizable plane wave solutions in flat spacetime can result in normalizable solutions to the wave equation. More explicitly, let $\Psi_\omega$ denote a smooth family of positive-frequency, stationary (but not necessarily normalizable) Dirac solutions labeled by their frequency $\omega$. We choose these solutions to have the standard delta-function normalization in the Dirac inner product (eq. \eqref{eq:Dirac_Inner_Product}) 
\begin{equation}
    \langle\Psi_\omega,\Psi_{\omega^\prime}\rangle = \delta(\omega-\omega^\prime).
\end{equation} For any smooth, compactly supported, positive-frequency profile $a(\omega)$, the wavepacket \begin{equation}
    \Psi[a] := \int_{0}^\infty a(\omega)\Psi_\omega\,d\omega
\end{equation} will have norm \begin{align}
    \langle \Psi[a],\Psi[a]\rangle = \int_{0}^\infty |a(\omega)|^2\, d\omega < \infty.
\end{align}
 
Define the null-energy kernel by \begin{equation}
\mathcal{A}(\omega,\omega^\prime) :=   T_{ab}[\Psi_\omega,\Psi_{\omega^\prime}]\left(\frac{\partial}{\partial\lambda}\right)^a\left(\frac{\partial}{\partial\lambda}\right)^b \end{equation} where $T_{ab}[\Psi,\Psi^\prime]$ denotes the polarized Dirac stress-tensor, treated as a Hermitian sesquilinear form. We must have \begin{equation}
    \overline{\mathcal{A}(\omega,\omega^\prime)} = \mathcal{A}(\omega^\prime,\omega)
\end{equation} so that $\mathcal{A}(\omega,\omega)$ is real. Suppose that, for some $\omega_0>0$, the corresponding stationary Dirac solution $\Psi_{\omega_0}$ strictly violates the null energy condition: \begin{equation}
\mathcal A(\omega_0,\omega_0)<0 .
\end{equation} By continuity, for all $\epsilon>0$ there exists some $\delta_0 > 0$ for which \begin{equation}
    |\mathcal{A}(\omega,\omega^\prime) - \mathcal{A}(\omega_0,\omega_0)|< \epsilon
\end{equation}whenever \begin{align}
    &|\omega-\omega_0|<\delta_0,&& |\omega^\prime-\omega_0|<\delta_0.
\end{align} Fix $0<\epsilon < |\mathcal{A}(\omega_0,\omega_0)|/2$, and let $h\not \equiv 0$ be a smooth, non-negative frequency profile with compact support on the interval $[-1,1]$. For $0<\delta<\min\{\delta_0,\omega_0\}$, we define
\begin{equation}
a_\delta(\omega)= \delta^{-1/2}
h\left(\frac{\omega-\omega_0}{\delta}\right),
\end{equation} which has support contained in an interval of radius $\delta$ about $\omega_0$, and hence is supported only on positive frequencies. We can use this to form a normalizable, positive-frequency Dirac solution
\begin{equation}
\Psi_\delta :=\Psi[a_\delta] = \int_{0}^\infty a_\delta(\omega)\Psi_\omega\,d\omega 
\end{equation} whose null energy is \begin{equation}
    \begin{aligned}
    \text{NEC}[\Psi_\delta]  &=\iint \overline{a_\delta(\omega)}a_\delta(\omega^\prime) \mathcal{A}(\omega,\omega^\prime) \, d\omega d\omega^\prime=  \iint a_\delta(\omega)a_\delta(\omega^\prime) \Re \mathcal{A}(\omega,\omega^\prime)\,d\omega d\omega^\prime.
\end{aligned}
\end{equation} Over the support of $a_\delta$, we have by continuity \begin{equation}
    \Re\mathcal{A}(\omega,\omega^\prime) < \mathcal{A}(\omega_0,\omega_0) + \epsilon,
\end{equation}from which it follows 
\begin{align}
\text{NEC}[\Psi_\delta] & 
< (\mathcal{A}(\omega_0,\omega_0) +\epsilon)\left|\int d\omega\, a_\delta(\omega)\right|^2 < \frac{1}{2}\mathcal{A}(\omega_0,\omega_0) \left|\int d\omega\, a_\delta(\omega)\right|^2 <0.
\end{align} Thus, we have shown that there exist normalizable solutions which violate the null energy condition.

A more nontrivial question is whether the Dirac field can violate the averaged null energy condition. The averaged null energy condition is given by \begin{align}
    \int_{-\infty}^\infty T_{ab}\left(\frac{\partial}{\partial\lambda}\right)^a\left(\frac{\partial}{\partial\lambda}\right)^b \ d\lambda,
\end{align} where $\lambda$ is an affine parameter along a complete, inextendible, future-directed radial null geodesic.\footnote{The radial null geodesics we are considering here are \textit{achronal}, meaning that for any two points $p$ and $q$ on the geodesic, there does not exist another causal path connecting $p$ and $q$. For consistent, semiclassical solutions to the Einstein equations, the averaged null energy condition is believed to hold along achronal null geodesics \cite{KontouSanders2020EnergyConditions}. Indeed, counterexamples are known along chronal null geodesics \cite{GrahamOlum2007AchronalANEC,Visser1996BoulwareVacuumEnergyConditions}, or when neglecting backreaction \cite{WaldYurtsever1991ANEC,Visser1995ScaleAnomaliesANEC}.} Indeed, without further restriction on the character of the solutions, we will now argue that violations of the averaged null energy condition are possible for the Dirac field. 

Consider first the case of a massless Dirac field ($\mu = 0$), and suppose we have a solution $\hat \Psi$ to the Dirac equation on a wormhole spacetime $(M,g)$. The massless Dirac equation is conformally invariant, i.e., if we choose a non-vanishing conformal factor $\Omega$ on $M$ and replace the spacetime metric $g\mapsto \Omega^2 g$, then $\hat \Psi$ will be a solution to the Dirac equation on the conformally rescaled spacetime $(M,\Omega^2 g)$.\footnote{The conformal rescaling of a massless Dirac field in a four-dimensional spacetime is $\Psi\to\Omega^{-3/2}\Psi$, but our definition of the rescaled Dirac spinor $\hat\Psi$ already contains a factor of $rf^{1/2}$ in its definition (see \S \ref{sec:explicit_dirac_equation}) to simplify the form of the Dirac equation, which cancels the factor $\Omega^{-3/2}$ by which the original spinor $\Psi$ rescales under a conformal rescaling of the metric. Thus, the rescaled spinor $\hat\Psi$ we are using in this paper is invariant under a conformal rescaling of the metric.} 

In the conformally rescaled spacetime, one also needs to rescale the affine parameter $d\lambda \mapsto \Omega^{2}\ d\lambda$ and the null tangent vectors $(\partial/\partial\lambda)^a\mapsto \Omega^{-2}(\partial/\partial\lambda)^a$ so that the geodesic is still affinely parameterized. The stress-energy tensor will also rescale by a factor $T_{ab}\to \Omega^{-2}T_{ab}$, as is required to preserve $\nabla_aT^{ab}=0$ on the conformally rescaled spacetime \cite{Wald:1984rg}, or as can be verified directly by rescaling the metric functions. Putting this all together, the averaged null energy integrand rescales by a nontrivial conformal factor \begin{equation}
    \begin{aligned}
     &\int_{-\infty}^\infty T_{ab}\left(\frac{\partial}{\partial\lambda}\right)^a\left(\frac{\partial}{\partial\lambda}\right)^b \ d\lambda\mapsto \int_{-\infty}^\infty \Omega^{-4}T_{ab}\left(\frac{\partial}{\partial\lambda}\right)^a\left(\frac{\partial}{\partial\lambda}\right)^b\ d\lambda.
\end{aligned}\label{eq:conformal_ANEC_rescaling}
\end{equation} We now exploit the freedom afforded by the arbitrary conformal factor. Take any normalizable Dirac solution that violates the null energy condition at some point; by continuity, this means the null energy condition is violated over some open interval. By choosing an appropriate conformal factor $\Omega(x)$, we can make $\Omega(x)^{-4}$ sufficiently large on this interval and sufficiently small away from this interval, while still having $\Omega(x)$ tend to a constant as $x\to\pm\infty$ (so that the resulting spacetime is still asymptotically flat). It is clear that, by judicious choice of $\Omega$, one can make the right-hand side of eq. (\ref{eq:conformal_ANEC_rescaling}) as negative as one would like, and thereby we can find a normalizable Dirac solution violating the averaged null energy condition on a wormhole spacetime. 

In the massive case $(\mu\not = 0)$, a similar argument holds. Though we lose exact conformal invariance when we consider a massive Dirac field, we expect solutions with $\mu \ll \omega/f$ everywhere to behave as massless fields to a good approximation. Therefore, massive fields can violate the averaged null energy condition by considering sufficiently high-frequency solutions.  Thus, there exist normalizable, positive-frequency Dirac solutions that violate the averaged null energy condition.

\section{Stationary, Spherically Symmetric Bound-State Dirac Solutions on Traversable Wormhole Backgrounds} \label{ch:static_spherically_symmetric_dirac_solutions}

The examples that violate the averaged null energy condition presented in the previous section are unsatisfactory in the context of static, spherically symmetric spacetimes for several reasons. First, in order to obtain normalizable counterexamples, we needed to construct a wavepacket over a range of frequencies; however, such solutions no longer give rise to time-independent observables (e.g., the stress-energy tensor) as would be required to source a static spacetime. Moreover, these solutions are not spherically symmetric, which can be seen from the appearance of the spherical harmonics in eq. (\ref{eq:explicit_null_null_components}).

In \S \ref{sec:stationary_bound_states} and \S \ref{sec:spherical_sym_stress_energy}, we will show how static, spherically symmetric Dirac solutions can be obtained. By considering an incoherent superposition of Dirac solutions with different angular dependence, the result will have both a spherically symmetric stress-energy tensor and Dirac current. Moreover, if each Dirac solution in the mixed state is itself a stationary bound state, the resulting stress-energy tensor and Dirac current will also be static with appropriate falloff to be compatible with asymptotic flatness.

Then, in \S \ref{sec:stationary_bound_states_exist}, we find numerical examples of static, spherically symmetric Dirac solutions and characterize under what conditions they exist. We will find that generically, sphere-parity even and odd bound states cannot simultaneously exist at a given frequency, unless the spacetime possesses additional symmetries. Finally, we present an example of a static, spherically symmetric bound-state Dirac solution that violates the pointwise and averaged null energy conditions.

\subsection{Static and Normalizable Dirac Solutions Require Stationary Bound States}\label{sec:stationary_bound_states}

Let us first address the question of finding normalizable solutions that yield time-independent observables. In order for the stress-energy tensor or Dirac current (which are constructed from the spinor field and its complex conjugate) to be time-independent, the spinor components must have stationary time dependence;\footnote{A further generalization of this could be considered using an incoherent superposition of stationary solutions, but we restrict our attention in this paper to solutions with a fixed frequency dependence.} each spinor component must be of the form \begin{align}
    \hat \Phi_0(t,x,\theta,\varphi) = e^{-i\omega t}\hat\Phi_0(x,\theta,\varphi)
\end{align} for a fixed $\omega\in\R$, and similarly for the other spinor components. Clearly, any observable constructed from quadratic combinations of these components multiplied by their conjugates will be time-independent, as the time-dependent phases will cancel. Expanding the angular dependence in terms of spin-weighted spherical harmonics as we did in \S \ref{sec:explicit_dirac_equation} results in the following decomposition for these solutions \begin{align}
    \hat \Phi_0(t,x,\theta,\varphi) = e^{-i\omega t}\sum_{\ell,m}{\phi_0}^{(\ell,m)}(x)\ _{-1/2}Y_{\ell,m}(\theta,\varphi),
\end{align} and similarly for the other components, as in eq. (\ref{eq:spinor_component_decomposition}). Plugging this decomposition into the Dirac norm (eq. (\ref{eq:dirac_norm})) gives \begin{equation}
    \begin{aligned}
    &\|[X_{A^\prime},\Phi^A]\|^2 \\
    &= \sum_{\ell,m} \int_{-\infty}^\infty\biggr( \frac{|{\phi_0}^{(\ell,m)}|^2}{f^2}+\frac{|{\phi_1}^{(\ell,m)}|^2}{f^2}+\frac{|{\chi_{0^\prime}}^{(\ell,m)}|^2}{f^2}+\frac{|{\chi_{1^\prime}}^{(\ell,m)}|^2}{f^2}\biggr) dx,
\end{aligned}
\end{equation} where we used the orthonormality of the spin-weighted spherical harmonics to evaluate the angular integral (see appendix \ref{app:spin_weighted_Spherical_Harmonics}).

For a Dirac solution with stationary time-dependence to be normalizable, each term in the series must be finite, so the spinor components $({\phi_0}^{(\ell,m)},{\phi_1}^{(\ell,m)},{\chi_{0^\prime}}^{(\ell,m)},{\chi_{1^\prime}}^{(\ell,m)})$ must have the appropriate falloff as $x\to\pm \infty$. In an asymptotically flat spacetime, each spinor component has a falloff that is described at leading order by eq. (\ref{eq:asymptotic_dirac_solution_behavior}). As was pointed out in \S \ref{sec:Dirac_Fields_Can_Violate_ANEC}, a stationary Dirac solution cannot be normalizable if $\omega > \min(\mu f_0^\pm)$, where $f_0^\pm := \lim_{x\to\pm \infty}f(x)$, so we will restrict to $\omega < \min(\mu f_0^\pm)$, and continue to allow only $\omega > 0$ so that the solutions have a valid semiclassical interpretation (see \S \ref{sec:semi_classical_meaning}). In the interval $0<\omega < \min(\mu f_0^\pm)$, the solutions can have exponentially growing or decaying behavior as $x\to\pm \infty$; exponential growth will also prevent the solution from being normalizable, so we need to find solutions that are exponentially decaying at both asymptotic infinities. This is only possible for particular, discrete values of $\omega$ in our interval, and such solutions are the stationary bound states.

When considering stationary bound states, it becomes particularly useful to work instead with the spinor components\footnote{As we are now treating $\omega$ as fixed, we suppressed the $\omega$ that appeared in the superscript when these components were initially introduced.}  $({\psi_0}^{(\ell,m)},{\psi_1}^{(\ell,m)},{\psi_2}^{(\ell,m)},{\psi_3}^{(\ell,m)})$ introduced in eq. (\ref{eq:real_spinor_component_definition}). These components were motivated by their behavior under sphere-parity and time-reversal and have the property that for sphere-parity even (resp. odd) solutions of the Dirac equation, we have $\psi_2^{(\ell,m)}=\psi_3 ^{(\ell,m)}= 0$ (resp. $\psi_0^{(\ell,m)}=\psi_1^{(\ell,m)} = 0$). Using the linearity of the Dirac equation on a fixed background, we can always restrict ourselves to a sphere-parity-definite solution, in which one pair of the components vanishes.

We will now show that, for stationary bound states, these spinor components can be taken to be real, making them a convenient choice for our analysis moving forward. For stationary bound states, we remarked that each of the spinor components must vanish as $x\to\pm\infty$, and a direct application of eq. (\ref{eq:real_dirac}) can be used to show that \begin{align}
    \frac{d}{dx}\Im[{\psi_0}^{(\ell,m)}{\bar\psi_1}^{(\ell,m)}]=\frac{d}{dx}\Im[ {\bar\psi_2}^{(\ell,m)}{\psi_3}^{(\ell,m)}]=0,\label{eq:conserved_Dirac_current}
\end{align} where `Im' denotes the imaginary part of a complex quantity. Since each component must vanish as $x\to\pm \infty$, we conclude that \begin{align}
    \Im[{\psi_0}^{(\ell,m)}{\bar\psi_1}^{(\ell,m)}]=\Im[ {\bar\psi_2}^{(\ell,m)}{\psi_3}^{(\ell,m)}]=0 \label{eq:component_constraint}
\end{align} everywhere. We can use this to show that, without loss of generality, we can take the components $({\psi_0}^{(\ell,m)},{\psi_1}^{(\ell,m)},{\psi_2}^{(\ell,m)},{\psi_3}^{(\ell,m)})$ to be real everywhere. Consider the pair $({\psi_0}^{(\ell,m)},{\psi_1}^{(\ell,m)})$: the condition $ \Im[{\psi_0}^{(\ell,m)}\bar {\psi_1}^{(\ell,m)}]=0$ implies that  ${\psi_0}^{(\ell,m)}(x)=e^{i\alpha(x)}R_0(x)$ and ${\psi_1}^{(\ell,m)}(x)=e^{i\alpha(x)}R_1(x)$ with $R_0,R_1$ being real-valued functions, using the same real phase $\alpha(x)$. Plugging this into eq. (\ref{eq:real_dirac}) gives
\begin{equation}
    \begin{aligned}
    0&= \left[f\frac{d}{dx} + \frac{\omega}{f}(i\sigma^2) - \frac{(\ell+1/2)}{r}\sigma^1 + \kappa_\ell \mu  \sigma^3\right]e^{i\alpha}\begin{pmatrix}
        R_0\\
        R_1
    \end{pmatrix}\\
    &=e^{i\alpha} \biggr[if\frac{d\alpha}{dx} + f\frac{d}{dx} + \frac{\omega}{f}(i\sigma^2)  - \frac{(\ell+1/2)}{r}\sigma^1 + \kappa_\ell \mu  \sigma^3\biggr]\begin{pmatrix}
        R_0\\
        R_1
    \end{pmatrix}.\end{aligned}
\end{equation} After multiplying the equation by $e^{-i\alpha(x)}$, the imaginary part of the resulting equation tells us that $d\alpha/dx = 0$, so the two components can be taken to be real functions times a constant phase. Since the Dirac equation is linear, we can freely multiply the components $({\psi_0}^{(\ell,m)},{\psi_1}^{(\ell,m)})$ by a constant, complex phase to make them real everywhere. Note that multiplication by a constant phase also leaves all Dirac bilinears (and hence all observable quantities) unchanged. An identical story holds for ${\psi_2}^{(\ell,m)}$ and ${\psi_3}^{(\ell,m)}.$ 

In summary, for a Dirac solution to be normalizable and give rise to a time-independent stress-energy and current, the solution must be one of these stationary bound states, characterized by exponential decay as $x\to\pm \infty$ and stationary time dependence. When working with stationary bound states, we will opt to use the real components $(\psi_0,\psi_1,\psi_2,\psi_3)$ for convenience.

\subsection{Constructing Spherically Symmetric Dirac Solutions}
\label{sec:spherical_sym_stress_energy}
We now turn to the issue of obtaining a Dirac solution whose observables are spherically symmetric. There are well-known difficulties with obtaining solutions to the Dirac equation with spherically symmetric stress-energy tensors or Dirac currents \cite{Alcubierre2025SphericallySymmetricSolutions}. The resolution is to take an incoherent superposition of stationary solutions with different angular dependence, in a manner that gives rise to static, spherically symmetric observables.

To obtain a spherically symmetric solution, we start with a Dirac solution that has fixed spherical harmonic dependence $(\ell,m)$. As we pointed out in the introduction to this section, observables such as the stress-energy tensor for this solution will not be spherically symmetric. Notice, however, that the equations for the radial Dirac components, presented in eq. (\ref{eq:real_dirac}), depend on $\ell$ but not on $m$. If we consider solutions for which the functions $({\psi_0}^{(\ell,m)},{\psi_1}^{(\ell,m)},{\psi_2}^{(\ell,m)},{\psi_3}^{(\ell,m)})$ are the same for all $m$, we can take an incoherent superposition of solutions by summing over all allowed values of $m$ to construct spherically symmetric observables. Considering first the Dirac stress-energy tensor (introduced in \S \ref{sec:general_dirac_equation}), let ${T_{ab}}^{(\ell,m)}$ be the stress-energy for a solution with a given spherical harmonic dependence. We define $T_{ab}^{(\ell)}$ to be the stress-energy tensor for a mixed state obtained by considering the following incoherent superposition\footnote{See \S \ref{sec:semi_classical_meaning} for a discussion as to why the sum of classical stress-energy tensors corresponds to a statistical mixture of one-particle states in the quantum theory.} over $m$: \begin{align}
    {T_{ab}}^{(\ell)}:= \frac{4\pi}{(2\ell+1)}\sum_{m=-\ell}^{\ell} {T_{ab}}^{(\ell,m)}\label{eq:spherically_symmetric_stress_tensor_def}
\end{align} (with the normalization factor of $4\pi$ chosen for convenience), and the result will be spherically symmetric on Dirac solutions. To see why this is the case, consider the component $T_{\hat t\hat t}=T_{ab}{e_{0}}^a{e_{0}}^b$ (with ${e_0}^a$ introduced in eq. (\ref{eq:non_coordinate_basis_def})), which gives the energy density of a Dirac field as measured by an observer following the timelike Killing field. For a given solution $({\psi_0}^{(\ell,m)},{\psi_1}^{(\ell,m)},{\psi_2}^{(\ell,m)},{\psi_3}^{(\ell,m)})$, we have \begin{widetext}
    \begin{equation}\begin{aligned}
   { T_{\hat t\hat t}}^{(\ell,m)} &=\frac{\omega}{2f^2r^2}\biggr[\left(|{\psi_0}^{(\ell,m)}-{\psi_2}^{(\ell,m)}|^2 + |{\psi_1}^{(\ell,m)}+{\psi_3}^{(\ell,m)}|^2\right)\left|\ _{1/2}Y_{\ell,m}\right|^2 \\
   &\qquad\qquad +\left(|{\psi_0}^{(\ell,m)}+{\psi_2}^{(\ell,m)}|^2 + |{\psi_1}^{(\ell,m)}-{\psi_3}^{(\ell,m)}|^2\right)\left|\ _{-1/2}Y_{\ell,m}\right|^2\biggr].
\end{aligned}\end{equation}
\end{widetext}If we now take an incoherent superposition over $m$ as in eq. \eqref{eq:spherically_symmetric_stress_tensor_def}, assuming that the spinor components $({\psi_0}^{(\ell,m)},{\psi_1}^{(\ell,m)},{\psi_2}^{(\ell,m)},{\psi_3}^{(\ell,m)})$ are the same for all $m$, we find that ${T_{\hat t\hat t}}^{(\ell)}$ no longer has any angular dependence \begin{align}
    {T_{\hat t\hat  t}}^{(\ell)} = \frac{\omega}{f^2r^2}\left(|{\psi_0}^{(\ell)}|^2+|{\psi_1}^{(\ell)}|^2+|{\psi_2}^{(\ell)}|^2+|{\psi_3}^{(\ell)}|^2\right),
\end{align} using the identity \begin{equation}
    \sum_{m=-\ell}^\ell \ _sY_{\ell,m}\ \overline{_{s^\prime}Y_{\ell,m}}= \frac{2\ell+1}{4\pi}\delta_{ss^\prime}\label{eq:spin_weighted_harmonics_orthogonality}
\end{equation} proven in appendix \ref{app:spin_weighted_Spherical_Harmonics}.

Henceforth, we drop the index $(\ell,m)$ on all quantities since we treat $\omega$ and $\ell$ as fixed, and assume that the functions $({\psi_0}^{(\ell,m)},{\psi_1}^{(\ell,m)},{\psi_2}^{(\ell,m)},{\psi_3}^{(\ell,m)})$ are the same for all $m$. Other stress-tensor components that need to vanish in a spherically symmetric stress-energy tensor either vanish identically, or vanish because the spin-indices $s,s^\prime$ on the spin-weighted spherical harmonics do not match, and are annihilated in the sum over $m$ via eq. (\ref{eq:spin_weighted_harmonics_orthogonality}). 

For the stress-energy tensor to be considered static, the mixed time-space components of the stress-energy must vanish. The time-angle components vanish under the incoherent superposition taken in eq. (\ref{eq:spherically_symmetric_stress_tensor_def}), but we also need $T_{\hat t\hat x}=T_{ab}{e_{0}}^a{e_{3}}^b$ to vanish. Indeed, $T_{\hat t\hat x}=0$ on Dirac solutions so long as \begin{align}
    \Im[\psi_0(x)\bar\psi_1(x) - \bar\psi_2(x)\psi_3(x)]=0,
\end{align} but this holds for stationary bound-state solutions (see eq. (\ref{eq:component_constraint})).

After taking the incoherent superposition necessary to produce a spherically symmetric stress-energy tensor, and restricting ourselves to stationary bound-state solutions, the non-vanishing components of the stress-energy tensor are \begin{equation}
\begin{aligned}
&T_{\hat t\hat t} := T_{ab}{e_0}^a{e_0}^b= \frac{\omega}{f^2 r^2} S,
\\
&T_{\hat x\hat x} := T_{ab}{e_3}^a{e_3}^b= \frac{\omega}{f^2 r^2} S
+ \frac{(\ell + 1/2)}{f r^3} D
+ \frac{\mu}{f r^2} P, \\
&T_{\hat\theta\hat\theta}:= T_{ab}{e_1}^a{e_1}^b 
= -\frac{(\ell + 1/2)}{2 f r^3} D,\\
&T_{\hat\varphi\hat\varphi}:=T_{ab}{e_2}^a{e_2}^b=-\frac{(\ell + 1/2)}{2 f r^3} D ,
\end{aligned} \label{eq:Dirac_stress_energy_tensor_components}
\end{equation} on Dirac solutions, expressed in the tetrad basis $\{{e_{0}}^a,{e_1}^a,{e_2}^a,{e_3}^a\}$ defined in eq. (\ref{eq:non_coordinate_basis_def}), and where we used eq. (\ref{eq:real_dirac}) to eliminate all derivatives of $(\psi_0,\psi_1,\psi_2,\psi_3)$. The quantities $S,D,P$ are Dirac bilinears, introduced for convenience, which are given by \begin{equation}
\begin{gathered}
S :=\psi_0^2+\psi_1^2+\psi_2^2+\psi_3^2,\qquad   D:= \psi_0^2-\psi_1^2-\psi_2^2+\psi_3^2, \\[1.2ex]
P:= 2(-1)^{\ell-1/2}(\psi_0\psi_1 -\psi_2\psi_3).
\end{gathered} \label{eq:Dirac_bilinear_def}
\end{equation}   We note that, for positive-frequency solutions, the stress-energy tensor component $T_{\hat t\hat t} > 0$; we will use this in \S \ref{sec:ED_equations} to prove a positive mass theorem for Einstein-Dirac solutions describing static, asymptotically flat, traversable wormhole spacetimes. The stress-energy tensor components that appear in the pointwise or averaged null energy condition are \begin{equation}
    \begin{aligned}
    &T_{ab}\left(\frac{\partial}{\partial\lambda}\right)^a\left(\frac{\partial}{\partial\lambda}\right)^b=T_{ab}\left(\frac{\partial}{\partial\tilde\lambda}\right)^a\left(\frac{\partial}{\partial\tilde\lambda}\right)^b = \frac{2\omega}{f^4 r^2} S
+ \frac{(\ell + 1/2)}{f^3 r^3} D
+ \frac{\mu}{f^3 r^2} P, \label{eq:bilinear_NEC}
\end{aligned}
\end{equation} where $(\partial/\partial \lambda)^a$ and $(\partial/\partial\tilde \lambda)^a$ are tangent to affinely parameterized radial null geodesics traversing the wormhole in opposite directions (see eq. \eqref{eq:affine_param_null_geos}).

The same procedure can be used to obtain a static and spherically symmetric Dirac current. Explicitly, if $j^a_{(\ell,m)}$ is the Dirac current (given by eq. (\ref{eq:Dirac_Charge_Current})) for a stationary bound-state solution with a particular spherical harmonic dependence $(\ell,m)$, the incoherent superposition \begin{equation}
    j^a_{(\ell)} = \frac{4\pi}{2\ell+1}\sum_{m=-\ell}^\ell j^a_{(\ell,m)} 
\end{equation} results in a static and spherically symmetric Dirac current, with the only nonvanishing component\footnote{The component $j_{(\ell)}^{\hat x}$ vanishes by eq. \eqref{eq:component_constraint}. Moreover, eq. \eqref{eq:conserved_Dirac_current}, which was used to derive eq. \eqref{eq:component_constraint}, implies that the Dirac current is conserved.} being \begin{align}
    j^{\hat t}_{(\ell)} := j_{(\ell)}^a{e^0}_a= \frac{|{\psi_0}^{(\ell)}|^2+|{\psi_1}^{(\ell)}|^2+|{\psi_2}^{(\ell)}|^2+|{\psi_3}^{(\ell)}|^2}{fr^2}.
\end{align}

We have now fully developed the machinery necessary to examine normalizable, static, and spherically symmetric Dirac solutions. The conclusion is that we must consider incoherent superpositions (in the sense of eq. (\ref{eq:spherically_symmetric_stress_tensor_def})) of stationary bound-state solutions. The resulting solutions are described by a set of real components $(\psi_0,\psi_1,\psi_2,\psi_3)$ for each fixed $\omega$ and $\ell$, satisfying eq. (\ref{eq:real_dirac}).

\subsection{Existence and Properties of Stationary Bound-State Solutions of the Dirac Equation} \label{sec:stationary_bound_states_exist}

We have established that stationary bound-state solutions are essential for normalizable, static, and spherically symmetric Dirac solutions. We will show here that traversable wormhole geometries can support stationary bound-state Dirac solutions by presenting a numerical example. We elucidate necessary conditions for bound state solutions to exist, and discuss when it is possible for bound state solutions of indefinite sphere-parity to exist at a single frequency. Finally, we present an example of a bound-state solution that violates the averaged null energy condition.

For this analysis, it is convenient to recast eq. (\ref{eq:real_dirac}) into the form of an eigenvalue equation \begin{equation}
        \begin{aligned}
    \omega \begin{pmatrix}
        {\psi_0}\\
        {\psi_1}
    \end{pmatrix}&=
         \left(i\sigma^2 f^2\frac{d}{dx} - (\ell+1/2)\frac{f}{r} \sigma^3 - \kappa_\ell\mu f\sigma^1\right)\begin{pmatrix}
        {\psi_0}\\
        {\psi_1}
    \end{pmatrix}, \\
    -\omega \begin{pmatrix}
        {\psi_2}\\
        {\psi_3}
    \end{pmatrix} &= \left(i\sigma^2 f^2\frac{d}{dx} - (\ell+1/2)\frac{f}{r} \sigma^3 - \kappa_\ell\mu f\sigma^1\right)\begin{pmatrix}
        {\psi_2}\\
        {\psi_3}
    \end{pmatrix},
    \end{aligned}\label{eq:discrete_dirac}
    \end{equation} where $\kappa_\ell:=(-1)^{\ell-1/2}$. Finding stationary bound-state solutions now corresponds to solving for eigenvalues and eigenfunctions of the operator \begin{equation}
        M = \left(i\sigma^2 f(x)^2\frac{d}{dx} - (\ell+1/2)\frac{f(x)}{r(x)} \sigma^3 - \kappa_\ell\mu f(x)\sigma^1\right).\label{eq:Dirac_operator}
    \end{equation} Sphere-parity even solutions will correspond to solutions with positive eigenvalue and sphere-parity odd solutions to those with negative eigenvalue.  If we consider stationary bound-state solutions at a fixed frequency $\omega$ in a generic wormhole spacetime, the solution will have definite sphere-parity. This is because a generic spacetime has no symmetry which enforces that both $\omega$ and $-\omega$ will be in the discrete spectrum of $M$, so the solution at that frequency will be either sphere-parity even or sphere-parity odd, but not a linear combination of the two.

    There do exist non-generic exceptions for which $\omega$ and $-\omega$ are simultaneously in the discrete spectrum of the Dirac operator. For one, this occurs when the wormhole geometry is reflection symmetric about the throat. If the metric functions are even functions about the throat, reflection symmetry forces the Dirac operator to have the property that, for all $\omega\in \R$, $\omega \in\text{Spec}(M)\Rightarrow -\omega\in\text{Spec}(M)$, where $\text{Spec}(M)$ denotes the discrete spectrum of $M$. Suppose that the wormhole throat is at $x=0$ (by translating the coordinate $x$ if necessary), so that reflection is implemented by the map $x\mapsto -x$. This map has an associated action on the radial spinor components \begin{equation}
        \begin{aligned}
        &\psi_0(x)\mapsto (-1)^{\ell-1/2}\psi_3(-x),
        &&\psi_1(x)\mapsto -(-1)^{\ell-1/2}\psi_2(-x), \\
        &\psi_2(x) \mapsto -(-1)^{\ell-1/2}\psi_1(-x),
        &&\psi_3(x)\mapsto (-1)^{\ell-1/2}\psi_0(-x),
    \end{aligned}
    \end{equation} which follows from the discussion of reflection symmetry in appendix \ref{app:reflection}. Thus, for every sphere-parity even solution on a reflection-symmetric spacetime, we have a sphere-parity odd solution at the same frequency (or vice versa).\footnote{Any static and spherically symmetric spacetime also admits the discrete symmetries given by sphere-parity and time-reversal, both introduced in \S \ref{sec:dirac_equation} and discussed in detail in appendix \ref{app:discrete_symmetries}. However, these solutions take bound-state solutions of a particular sphere-parity into themselves, and do not give rise to additional degeneracies.} This can also occur when two unrelated discrete eigenvalues accidentally satisfy $\omega_i=-\omega_j$. Such a pairing is not forbidden, but should be regarded as exceptional in the space of wormhole backgrounds.  In either case, a bound-state solution will be some linear combination of a sphere-parity even and sphere-parity odd solution.

   We will now show that stationary bound-state Dirac solutions exist by presenting a numerical example. Since $M$ is a linear ordinary differential operator on $-\infty<x<\infty$, we first pass $M$ to a differential operator on the finite interval $(-1,1)$ by introducing a compactified coordinate $X$ via $x= X/(1-X^2)$, and we will treat $f,r,$ and the spinor components as functions of $X$. Then, by replacing the continuous variable $X$ with a discrete grid of $N$ points $(X_0=-1,X_1,\dots,X_{N-2},X_{N-1}=1)$, the operator $M$ becomes a $2N\times 2N$-dimensional matrix 
   \begin{equation}
       \begin{aligned}
        M_D &= \text{Diag}\left(f^2\frac{dX}{dx}\right)D_1\otimes (i\sigma^2) - (\ell+1/2)\text{Diag}\left(\frac{f}{r}\right)\otimes \sigma^3 - (-1)^{\ell-1/2} \mu\  \text{Diag}(f)\otimes \sigma^1 \label{eq:discrete_matrix}
    \end{aligned}
   \end{equation} where `$\text{Diag}(F)$' is the diagonal matrix whose entries are $(F(X_0),F(X_1),\dots,F(X_{N-1}))$, $D_1$ is a differentiation matrix representing the linear operator $d/dX$ on $(X_0,\dots,X_{N-1})$, and $\otimes$ denotes the Kronecker product of two matrices. The eigenvalues of $M_D$ will give the frequency, and its eigenvectors are the corresponding values of the solution on the grid points, e.g., \begin{equation}(\psi_0(X_0),\psi_1(X_0),\dots,\psi_0(X_{N-1}),\psi_1(X_{N-1})).
    \end{equation} We choose our grid $\{X_k\}$ to be the Chebyshev-Gauss-Lobatto points $X_k:= \cos\left(\frac{\pi k}{N-1}\right)$, and the derivative matrix $D_1$ is the pseudospectral derivative matrix \cite{Trefethen2000SpectralMethodsMATLAB,Boyd2001ChebyshevFourierSpectralMethods}. Solutions were validated using convergence tests on the obtained eigenvalues and eigenvectors, and by reproducing the solutions using a finite-difference scheme. For $\mu = 1$ and $\ell = 1/2$, we explored the family of traversable wormhole metrics \begin{align}
        &r(x) = \sqrt{x^2+1}, && f(x) = 1- \frac{a}{r(x)}
    \end{align} for $10^{-3} < a < 1-10^{-3}$ (in 100 linearly spaced increments) with $N=500$ points and found that bound-state solutions exist in each case.

     When stationary bound-state solutions do exist, there appear to be infinitely many of them in the examples we have obtained. The upper frequency bound $\mu \min(f_0^\pm)$ acts as an accumulation point for the spectrum. This shows that the gravitational potential generated by an asymptotically flat wormhole falls off slowly enough to support an infinite number of bound states, as one would expect from standard theorems about potentials with slow falloff  \cite{ReedSimon1978MMMP4}.

    However, not all wormhole geometries admit stationary bound-state solutions; whether or not such solutions are permitted is largely controlled by the behavior of the metric function $f$. To see why, define the sphere-parity even Dirac bilinears by taking the sphere-parity even part of eq. (\ref{eq:Dirac_bilinear_def}): \begin{align}
         S_\text{even} = \psi_0^2+\psi_1^2, \  D_\text{even} = \psi_0^2-\psi_1^2, \ P_\text{even} = 2\kappa_\ell\psi_0\psi_1,\label{eq:Dirac_parity_even_bilinears}
     \end{align} where $\kappa_\ell:=(-1)^{\ell-1/2}$. We will proceed with an analysis of sphere-parity even Dirac solutions for definiteness, though identical results would hold if we analyzed the sphere-parity odd case. Using the Dirac equation (eq. (\ref{eq:real_dirac})), $D_\text{even}$ and $P_\text{even}$ satisfy \begin{equation}
         \begin{aligned}
         -\frac{\kappa_\ell}{2}\frac{dD_\text{even}}{dx} &=  \frac{\mu}{f}S_\text{even} + \frac{\omega}{f^2}P_\text{even} ,
         \\
         \frac{\kappa_\ell}{2}\frac{dP_\text{even}}{dx} &= \frac{\omega}{f^2}D_\text{even} + \frac{(\ell+1/2)}{fr} S_\text{even}.
     \end{aligned}
     \end{equation} For bound-state solutions, $D_\text{even},P_\text{even}\to 0$ as $|x|\to\infty$, so by integrating over all $x\in\R$, we can replace the left-hand side by its (vanishing) boundary contribution, and obtain \begin{equation}
         \begin{aligned}
         0 & = \int_{-\infty}^\infty\left[ \frac{\mu}{f}S_\text{even} + \frac{\omega}{f^2}P_\text{even}\right]dx, \\
         0 &= \int_{-\infty}^\infty \left[\frac{\omega}{f^2}D_\text{even} + \frac{(\ell+1/2)}{fr}S_\text{even}\right]dx.\label{eq:dirac_bilinear_equalities}
     \end{aligned}
     \end{equation} $D_\text{even}$ and $P_\text{even}$ satisfy the algebraic inequalities $|D_\text{even}|\leq S_\text{even}$ and $|P_\text{even}|\leq S_\text{even}$.  Assuming the solution is normalized in the Dirac norm of eq. \eqref{eq:dirac_norm}, these expressions imply \begin{equation}
         \begin{aligned}
         &\omega \geq \max\left\{\int_{-\infty}^\infty \frac{\mu }{f}S_\text{even},\int_{-\infty}^\infty \frac{\ell+1/2}{fr}S_\text{even}\right\},\label{eq:lower_bound_omega}
     \end{aligned}
     \end{equation}which rules out zero-frequency stationary bound-state solutions, since the right-hand side is positive for nontrivial solutions. By manipulating the first of the two equations in eq. \eqref{eq:dirac_bilinear_equalities}, we find \begin{equation}
         \begin{aligned}
         0 &\geq \int_{-\infty}^{\infty} \ \frac{S_\text{even}}{f}\left[\mu-\frac{\omega}{f}\right]dx\\
         &=\int_{-\infty}^\infty \frac{S_\text{even}}{f}\left[\mu - \frac{\omega}{f_0^\pm} +\frac{\omega}{f_0^\pm} - \frac{\omega}{f}\right]dx\\
         &\geq \int_{-\infty}^\infty\frac{\omega S_\text{even}}{f^2}\left[\frac{f}{f_0^\pm} - 1\right]\ dx,
     \end{aligned}
     \end{equation} where $f_0^\pm:=\lim_{x\to\pm\infty}f(x),$ and we used that bound-state solutions must satisfy $\mu > \omega/f_0^\pm$, as discussed in \S \ref{sec:stationary_bound_states}. The result is that we cannot have bound-state solutions if $f \geq f_0^\pm$ everywhere; in other words, $f$ must dip below its asymptotic values, supporting the usual interpretation of $f$ as a gravitational potential. In other words, the wormhole must generate an attractive gravitational potential to support any bound states.

     We searched for examples of stationary bound states that violate the null energy condition and the averaged null energy condition. For the explicit metric profile \begin{equation}
         \begin{aligned}
         &f(x) = 300\left[1-0.8 e^{-(x/300)^2} \right]\\
         & r(x) = \left[\frac{1}{\sqrt{(x/300)^2+1}}+f(x)e^{-(3x)^2}\right]^{-1},\label{eq:anec_metric_profile}
     \end{aligned}
     \end{equation} plotted in Fig. \ref{fig:anec_violating_profile}, we obtained a sphere-parity even bound-state Dirac solution with $\mu = 1$, $\ell=1/2$, and frequency $\omega \approx 111.88$ that produced a robust violation of the averaged null energy condition \begin{equation}
         \int \frac{2\omega}{f^4r^2}S + \frac{(\ell+1/2)}{f^3r^3}D + \frac{\mu}{f^3r^2}P \ dx \approx -3.59<0. \label{eq:ANEC_integral}
     \end{equation}  The factor of 300 in the metric profile was used to stretch the profile along the $x$-axis to make the sharp peak in the $r$ profile easier to resolve numerically. This solution was validated by convergence testing to ensure it was not a numerical artifact.
     
     To obtain this counterexample, we made use of the fact that we found bound-state solutions with the second term in eq. \eqref{eq:ANEC_integral} negative at the throat, and designed a localized perturbation to make $r$ smaller while keeping $D<0$, until this term was able to dominate the other two terms in the averaged null energy integral. 

     \begin{figure}[htbp!]
    \centering
    
    \hfill
        \centering
        \begin{subfigure}[t]{0.48\textwidth}
        \centering
            \centering
            \includegraphics[width=\linewidth]{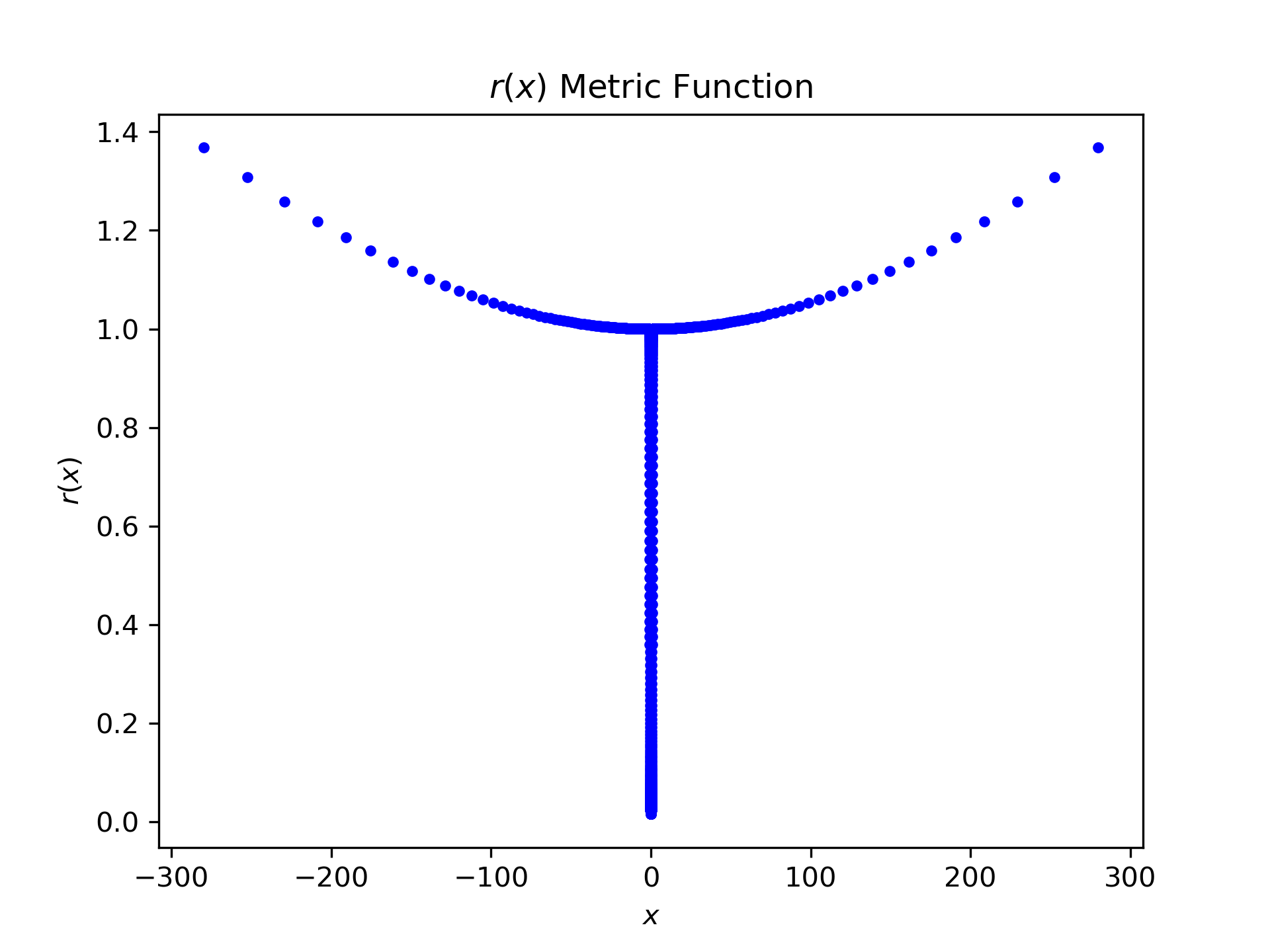}
        \caption{Plot of $r$.}
    \end{subfigure}
    \begin{subfigure}[t]{0.48\textwidth}
            \centering
            \includegraphics[width=\linewidth]{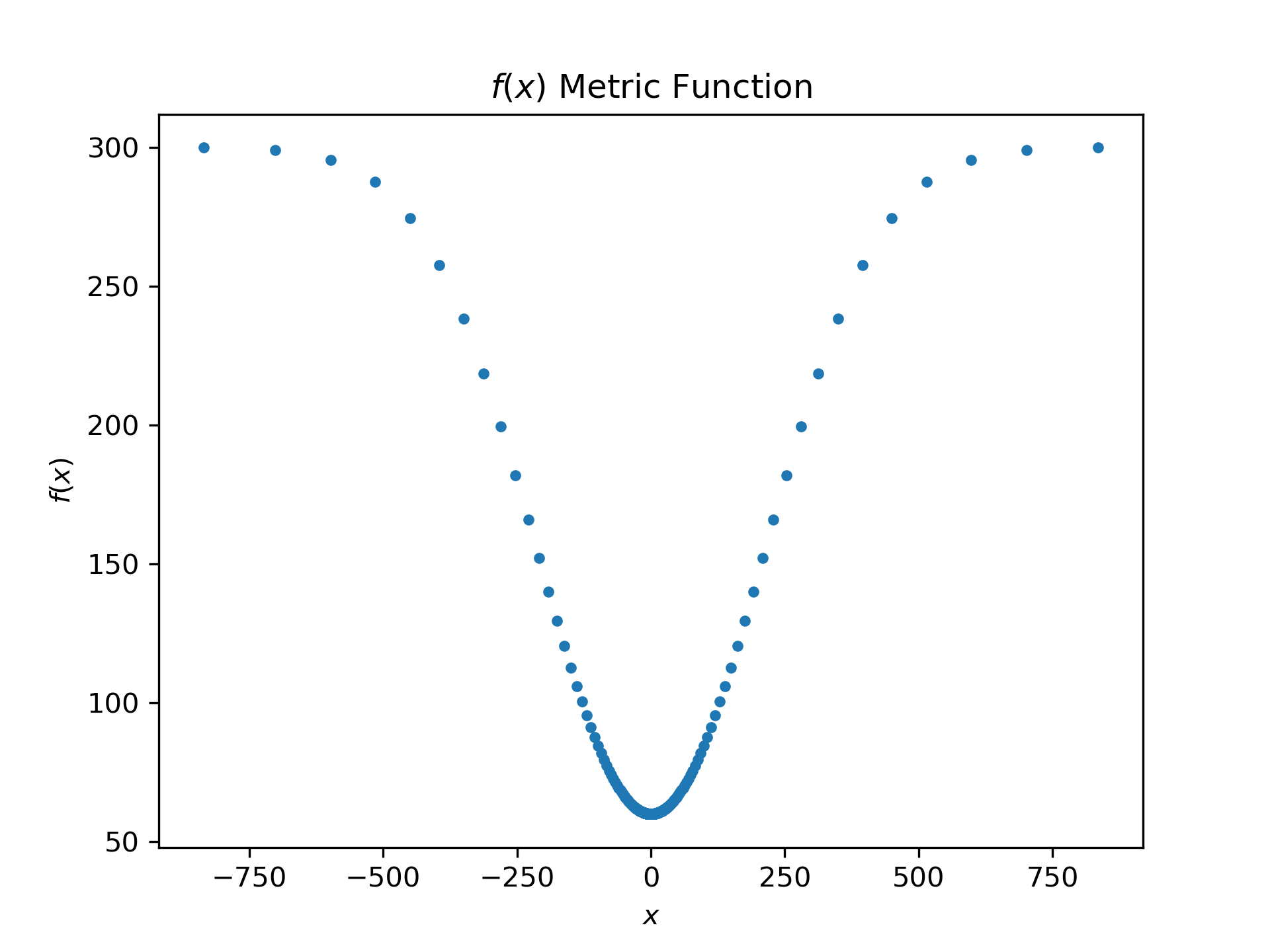}
        \caption{Plot of $f$.}
    \end{subfigure}
    \caption{Metric profile for eq. \eqref{eq:anec_metric_profile}, for which violations of the averaged null energy condition were explicitly obtained.}
    \label{fig:anec_violating_profile}
    \end{figure}

     This counterexample is important because it shows that the restrictions imposed by staticity and spherical symmetry do not themselves restore the averaged null energy condition. In particular, a normalizable, positive-frequency stationary bound state can provide the negative averaged null energy required of matter sourcing a traversable wormhole. Thus, any obstruction to static, spherically symmetric Einstein-Dirac wormholes cannot arise simply because the physically admissible stationary Dirac states considered here necessarily satisfy the averaged null energy condition; any obstruction must arise from the requirement of self-consistency between the Dirac field and the geometry it sources. We therefore turn in the next section to the fully backreacting Einstein-Dirac system.

\section{Obstructions to Traversable Wormholes in Einstein-Dirac Theory}\label{ch:EDM_system}

We now turn to the treatment of the full Einstein-Dirac system. In static, spherically symmetric spacetimes, the equations take the form of a system of ordinary differential equations, which we write explicitly in \S \ref{sec:ED_equations}, and prove a positive mass theorem for wormhole spacetimes. The remainder of this section is dedicated to numerical investigations of the Einstein-Dirac system. It will prove most convenient to provide initial data for the Einstein-Dirac system asymptotically, and we will formulate the parameter space of asymptotic initial data in \S \ref{sec:asym_initial_data}. We will then present our numerical strategy for evolving this asymptotic initial data to produce an Einstein-Dirac solution in \S \ref{sec:numerical_strategy}.

We will then investigate whether any full traversable wormhole solutions to the Einstein-Dirac equations exist. In \S \ref{sec:non_existence_generic}, we first demonstrate that there is no issue with obtaining a ``partial-wormhole solution'' of the Einstein-Dirac system, which shows that there are no inconsistencies with the Einstein-Dirac equations admitting a local solution in the neighborhood of a wormhole throat, or with extending these solutions to one of the asymptotic infinities. However, we will argue that solutions with definite $\omega$, $\ell$, and sphere-parity cannot be extended to a second asymptotically flat end. We then drop the definite sphere-parity assumption in \S \ref{sec:non_existence_symmetric} in the case of reflection-symmetric wormholes, where we use an extensive numerical search to argue that the conditions required for a reflection-symmetric wormhole throat cannot be satisfied. These facts together support the conclusion that traversable wormhole solutions do not exist as valid solutions to the Einstein-Dirac equations.

\subsection{The Einstein-Dirac System}
\label{sec:ED_equations}

We now incorporate backreaction, coupling the Dirac field to the spacetime geometry via Einstein's equations. Having already presented the Dirac equation in \S \ref{sec:dirac_equation} and the Dirac stress-energy tensor components in \S \ref{sec:spherical_sym_stress_energy}, we need only compute the Einstein tensor for the spacetime metric of eq. (\ref{eq:metric}). Its components, in terms of the tetrad of eq. \eqref{eq:non_coordinate_basis_def}, are given by 
\begin{equation}
    \begin{aligned}
        G_{\hat t\hat t} &:=G_{ab}{e_0}^a{e_0}^b=   -\frac{2 f^2 {r^{\prime\prime}}}{r}-\frac{f^2 {r^\prime}^2}{r^2}-\frac{2 ff^\prime r^\prime}{r}+\frac{1}{r^2} ,\\
        G_{\hat x\hat x} &:= G_{ab}{e_3}^a{e_3}^b=  \frac{f^2 {r^\prime}^2}{r^2}+\frac{2 f f^\prime r^\prime}{r}-\frac{1}{r^2}, \\
        G_{\hat\theta\hat\theta} &:=G_{ab}{e_1}^a{e_1}^b= \frac{f^2 {r^{\prime\prime}}}{r}+f f^{\prime\prime}+\frac{2 f f^\prime r^\prime}{r}+{f^\prime}^2,\\
        G_{\hat\varphi\hat\varphi} &:=G_{ab}{e_2}^a{e_2}^b= \frac{f^2 {r^{\prime\prime}}}{r}+f f^{\prime\prime}+\frac{2 f f^\prime r^\prime}{r}+{f^\prime}^2,    \end{aligned}\label{eq:Einstein_tensor_components}
\end{equation} where primes denote derivatives with respect to the $x$-coordinate. Equating the Einstein tensor to the Dirac stress-energy tensor gives us the Einstein field equations \begin{align}
    &-\frac{2 f^2 {r^{\prime\prime}}}{r}-\frac{f^2 {r^\prime}^2}{r^2}-\frac{2 ff^\prime r^\prime}{r}+\frac{1}{r^2} = \frac{\omega}{f^2r^2}S ,\label{eq:tt_eqn}\\
    &\frac{f^2 {r^\prime}^2}{r^2}+\frac{2 f f^\prime r^\prime}{r}-\frac{1}{r^2} = \frac{\omega}{f^2 r^2} S
+ \frac{(\ell + 1/2)}{f r^3} D
+ \frac{\mu}{f r^2} P,\label{eq:xx_eqn}\\
    &\frac{f^2 {r^{\prime\prime}}}{r}+f f^{\prime\prime}+\frac{2 f f^\prime r^\prime}{r}+{f^\prime}^2=-\frac{(\ell + 1/2)}{2 f r^3} D ,\label{eq:angleangle_eqn}
\end{align} where $S,D,$ and $P$ are the Dirac bilinears introduced in eq. \eqref{eq:Dirac_bilinear_def}.

Using these equations, we will now prove that positivity of $T_{\hat t\hat t}$ (which we pointed out in \S \ref{sec:spherical_sym_stress_energy}) forces the spacetime to have positive ADM mass. The usual positive mass theorem does not apply here, as it relies on the dominant energy condition \cite{SchoenYau1979PositiveMass,SchoenYau1981PositiveMassII,Witten1981PositiveEnergy} which is not satisfied by Dirac fields. Beyond being an interesting result in its own right, this will be a crucial ingredient in our numerical investigations, as one of the free parameters will be the spacetime mass.

\begin{theorem}
    All static, spherically symmetric, asymptotically flat traversable wormhole solutions to the Einstein-Dirac equations have ADM mass bounded below by half the (nearest) throat radius; in particular, the ADM mass is positive at each asymptotic end.\label{thm:spacetime_mass}
\end{theorem}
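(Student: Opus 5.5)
The plan is to use the Misner--Sharp (quasilocal) mass function adapted to the metric \eqref{eq:metric}. Since $g^{ab}\nabla_a r\nabla_b r=-f^2{r^\prime}^2$ in signature $(+---)$, I define
\begin{equation}
    m(x):=\frac{r(x)}{2}\left[1-f(x)^2 r^\prime(x)^2\right].
\end{equation}
This is a scalar built from the areal radius, so it is independent of the choice of radial coordinate and of how $t$ is normalized at either end.

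The first step is to differentiate $m$ and compare with eq.~\eqref{eq:Einstein_tensor_components}. The derivative is
\begin{equation}
    m^\prime=\frac{r^\prime}{2}\left(1-f^2{r^\prime}^2\right)-r f f^\prime {r^\prime}^2-r f^2 r^\prime r^{\prime\prime}.
\end{equation}
Factoring out $r^2 r^\prime/2$ leaves exactly the bracket $\frac{1}{r^2}-\frac{f^2{r^\prime}^2}{r^2}-\frac{2ff^\prime r^\prime}{r}-\frac{2f^2r^{\prime\prime}}{r}=G_{\hat t\hat t}$. The $\hat t\hat t$ Einstein equation \eqref{eq:tt_eqn} then gives
\begin{equation}
    m^\prime(x)=\frac{r^2r^\prime}{2}\,T_{\hat t\hat t}=\frac{\omega\, S}{2f^2}\,r^\prime .
\end{equation}
Because $\omega>0$ (positive frequency) and $S=\psi_0^2+\psi_1^2+\psi_2^2+\psi_3^2\ge 0$, the sign of $m^\prime$ is the sign of $r^\prime$. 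Here we are using positivity of $T_{\hat t\hat t}$, noted in \S\ref{sec:spherical_sym_stress_energy}.

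The second step is to choose the right throat. Asymptotic flatness gives $r^\prime\to 1/f_0^+>0$ as $x\to+\infty$, and $r$ attains a minimum, so the set of critical points of $r$ is nonempty and bounded above. Let $x_t$ be its supremum, which is a critical point by continuity. On $(x_t,\infty)$ we have $r^\prime>0$, so $r^\prime>0$ there and $r(x_t)$ is the throat radius nearest the $+\infty$ end. At $x_t$ we have $r^\prime=0$, so $m(x_t)=r(x_t)/2$. Integrating over $(x_t,\infty)$, where $m^\prime\ge 0$, gives $m(x)\ge r(x_t)/2>0$ for all $x>x_t$. The same argument at the $-\infty$ end, using the infimum of the critical set, gives the analogous bound there.

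The remaining step is to identify $\lim_{x\to\pm\infty}m(x)$ with the ADM mass at each end, and I expect this to be the main technical point. First, the limit exists. The integrand $\omega S r^\prime/(2f^2)$ is integrable, because bound states decay exponentially (\S\ref{sec:stationary_bound_states}), $r^\prime$ is bounded, and $f\to f_0^\pm>0$. So $m$ is monotone and bounded on the tail. Second, the limit is the ADM mass. In terms of the proper radial distance $s$ we have $dr/ds=f r^\prime$, so $1-2m/r=(dr/ds)^2$. For a metric approaching Schwarzschild form $f^2\simeq f_0^2(1-2M/r)$ with $f r^\prime\simeq\sqrt{1-2M/r}$, this gives $m\to M$. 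I would verify this by inserting the asymptotic expansions $f=f_0^\pm+\mathcal{O}(1/x)$ and $r=\pm x/f_0^\pm+\mathcal{O}(1)$, sharpened by the vacuum-like behavior of the Einstein equations once $S,D,P$ are exponentially small. Those equations then force the exterior to be Schwarzschild up to exponentially small corrections, so the $1/r$ coefficient of $f^2/f_0^2$ coincides with $\lim m$. This yields $M_{\rm ADM}^\pm\ge r_t^\pm/2>0$ at each end.
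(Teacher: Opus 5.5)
Your proof is correct and is essentially the paper's argument: the Misner--Sharp mass $M=\frac{r}{2}(1-f^2{r^\prime}^2)$, the identity $M^\prime=\frac{r^\prime}{2}\frac{\omega}{f^2}S$ from the $\hat t\hat t$ Einstein equation, monotonicity outward from the outermost critical point of $r$ (where $M=r/2$), and $\lim_{x\to\pm\infty}M=M_\pm$, which the paper simply cites from Hayward rather than checking against the Schwarzschild asymptotics as you propose. One small looseness: the largest critical point of $r$ need not be a local minimum (it could be a degenerate inflection point), so calling $r(x_t)$ the nearest throat radius is not quite right; but $r$ is nondecreasing to the right of the rightmost local minimum, so your bound $r(x_t)/2$ is at least as strong as the one stated.
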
 

\begin{proof}
    In a spherically symmetric spacetime, it is convenient to introduce the Misner-Sharp mass \cite{Hayward1996GravitationalEnergySphericalSymmetry} \begin{equation}
        M(x):= \frac{r(x)}{2}(1-r^\prime(x)^2f(x)^2).
    \end{equation}
 A direct application of the time-time component of the Einstein equations (eq. (\ref{eq:tt_eqn})) gives \begin{equation}
    \frac{dM}{dx} = \frac{r^\prime}{2}(r^2T_{\hat t\hat t} )= \frac{r^\prime}{2}\left(\frac{\omega}{f^2}S\right).
\end{equation}  If $x_t^+$ is the $x$ coordinate of the rightmost (i.e., largest $x$ coordinate) wormhole throat, then $r^\prime(x)\geq 0$ for $x>x_t^+$. If not, there would exist another minimum of $r$ at some larger $x$ for $r$ to have the correct asymptotic behavior, which contradicts our assumption that $x_t^+$ was the position of the rightmost wormhole throat. Similarly, if the leftmost wormhole throat (i.e., the smallest $x$-coordinate) is at $x_t^-$, we have $r^\prime(x) \leq 0$ for $x<x_t^-$. Since $\omega S/f^2>0$, this implies that $M$ is non-decreasing as we move to the right of the rightmost throat or to the left of the leftmost throat. At either throat (which is by definition a minimum of $r$), we have $r^\prime(x_t^\pm) = 0$, so $M(x_t^\pm) = r(x_t^\pm)/2>0$. Combining these observations, we learn that \begin{equation}
    \begin{aligned}
    &2M_+ = r(x_t^+)  +\omega  \int_{x_t^+}^\infty \frac{ r^\prime(x^\prime)}{f(x^\prime)^2}S(x^\prime) \ dx^\prime > 0, \\
    &2M_- = r(x_t^-)  - \omega\int_{-\infty}^{x_t^-} \frac{ r^\prime(x^\prime)}{f(x^\prime)^2}S(x^\prime) \ dx^\prime > 0 ,
\end{aligned}\label{eq:asym_spacetime_mass_equation}
\end{equation} where $M_\pm := \lim_{x\to\pm \infty}M(x)$ is equal to the ADM mass at each asymptotic end \cite{Hayward1996GravitationalEnergySphericalSymmetry}. In particular, eq. \eqref{eq:asym_spacetime_mass_equation} gives that \begin{equation}
    M_\pm > \frac{r(x_t^\pm)}{2} > 0.
\end{equation} In other words, the ADM mass at each asymptotic end is bounded below by half the throat radius of the wormhole throat nearest to that asymptotic end, and is therefore positive.\end{proof}

We also point out two additional consequences of the positivity of $T_{\hat t\hat t}$. As was shown in the proof, we have positivity of the Misner-Sharp mass on $(x_t^+,\infty)$ and $(-\infty,x_t^-)$. This, in turn, implies that $|fr^\prime|\leq 1$ on both intervals.\footnote{When this condition is met, we can take a static slice of the wormhole and subsequently take a cross-section at a particular angle $\theta$, and the resulting 2-dimensional surface can be embedded into $\R^3$ as we did in \S \ref{sec:traversable_wormhole_spacetime_geometry}. This means that the spacetime from the rightmost throat to $x\to \infty$ is embeddable in this way, and from the leftmost throat to $x\to -\infty$; this also implies that, for an Einstein-Dirac wormhole with only one throat (where $x_t^+=x_t^-$), the resulting 2-dimensional surface is always embeddable in $\R^3$.} The second consequence is that the region where the null energy condition must be violated cannot be made too small (relative to the throat radius). Some solutions have been proposed in which the authors minimize the size of the region over which the null energy condition is violated, e.g., \cite{Visser1989TraversableWormholesSimpleExamples}. Morris and Thorne pointed out that when $T_{\hat t\hat t}>0$, the null energy condition must be violated over a macroscopically large region of space \cite{Morris:1988tu}.

\subsection{Asymptotic Initial Data for the Einstein-Dirac System}\label{sec:asym_initial_data}
As the Einstein-Dirac system forms a set of ordinary differential equations, we need to characterize the parameters and initial data that determine a particular solution, in order to set up a well-posed numerical problem. In general, there is no clear restriction on the initial data ensuring that we have $f$ and $r$ everywhere nonvanishing, that the Dirac field has the correct asymptotic falloff necessary to describe a stationary bound state, and that the metric functions describe an asymptotically flat geometry. To account for the latter two issues, we will prescribe initial data asymptotically for the Dirac field and spacetime metric that forces the metric and Dirac components to have the correct asymptotic behavior at one of the asymptotic ends of the spacetime.

Let us now describe the precise asymptotic behavior that is required of a solution to the Einstein-Dirac system. Taking linear combinations of the Einstein equations [eqs. (\ref{eq:tt_eqn})-(\ref{eq:angleangle_eqn})], we obtain second-order evolution equations for $r$ and $f$
    \begin{align}
        -\frac{2f^2r^{\prime\prime}}{r} &= \frac{2\omega}{f^2r^2}S + \frac{(\ell+1/2)}{fr^3}D + \frac{\mu}{fr^2}P,\label{eq:EFE_r_ev}\\
        ff^{\prime\prime} &= \frac{\omega}{f^2r^2}S +\frac{\mu}{2fr^2}P - {f^\prime}^2 - \frac{2ff^\prime r^\prime}{r} \label{eq:EFE_f_ev},
    \end{align}
 as well as a constraint equation relating their first derivatives \begin{equation}
     \frac{f^2 {r^\prime}^2}{r^2}+\frac{2 f f^\prime r^\prime}{r}-\frac{1}{r^2} = \frac{\omega}{f^2 r^2} S
+ \frac{(\ell + 1/2)}{f r^3} D
+ \frac{\mu}{f r^2} P. \label{eq:EFE_constraint}
\end{equation}  It is straightforward to check that if eq. (\ref{eq:EFE_constraint}) is satisfied at a point, it is preserved under evolution according to eqs. (\ref{eq:EFE_r_ev}), (\ref{eq:EFE_f_ev}), and the Dirac equation (eq. (\ref{eq:real_dirac})).

As pointed out in \S \ref{ch:static_spherically_symmetric_dirac_solutions}, the spinor field will need to decay exponentially, meaning that the metric functions $r$ and $f$ for very large $|x|$ will satisfy the vacuum Einstein equations to very good approximation; thus, the solution will be well-described by the Schwarzschild metric asymptotically. In our coordinates, the solution is (up to exponentially small terms) given by\begin{equation}
    \begin{aligned}
    &r(x) \sim \pm \frac{x}{f_0^\pm} + d^\pm, 
    && f(x) \sim f_0^\pm \left(1 - \frac{2 M_\pm}{(\pm x/f_0^\pm + d^\pm)}\right)^{1/2}, \label{eq:metric_asymptotic_solution}
\end{aligned}
\end{equation} as $x\to\pm\infty$, where $M_\pm$ is the ADM mass, which we showed in theorem \ref{thm:spacetime_mass} is positive. We will use the $\pm$ superscripts and subscripts throughout to keep track of whether a particular quantity is defined at $x\to +\infty$ or $x\to-\infty$. Thus, the metric initial data is encoded in only three free parameters at either asymptotic end ($f_0^\pm$, $d^\pm$, $M_\pm$), with the restrictions $f_0^\pm >0$ and $M_\pm > 0$.

Some of these asymptotic parameters are coordinate artifacts which can be normalized or set to zero by appropriately rescaling the coordinates, metric functions, and Dirac solution, or by translating the coordinates. Fix a spherical harmonic index $\ell\in \N+\frac{1}{2}$, and suppose $(f(x),r(x),\psi_i(x),\omega,\mu)$ are the metric functions, spinor components, frequency, and Dirac mass for an Einstein-Dirac solution using the coordinates $t,x$ introduced in \S \ref{sec:traversable_wormhole_spacetime_geometry}. The coordinate $x$ was defined as an affine parameter along null geodesics, but this means any coordinate affinely related to $x$ (e.g. $\tilde x =  \alpha x+\beta $ with $\alpha,\beta\in\R$) would be just as good. By rescaling the coordinate $x\mapsto \tilde x := ax$ (with $a>0)$ and simultaneously rescaling the metric and spinor functions/parameters as follows, we can map one solution of the Einstein-Dirac system to another:\begin{equation}
    \begin{aligned}
   & (\tilde f(\tilde x),\tilde r(\tilde x),\tilde \psi_i(\tilde x),\tilde x,\tilde t,\tilde \omega,\tilde \mu) = (f(x),ar(x),a^{1/2}\psi_i(x),ax,t,a^{-1}\omega,a^{-1}\mu).\label{eq:rescaling_1}
\end{aligned} 
\end{equation}  This preserves asymptotic flatness and maps solutions of the Einstein-Dirac system into themselves. Similarly, if we rescale the $t$ coordinate\footnote{Because we are in a static spacetime, translating the $t$ coordinate is also a symmetry, but this is not useful for simplifying the analysis.} $t\mapsto\tilde t:=  bt$ (with $b>0$), then we can rescale everything else by \begin{equation}
    \begin{aligned}
    &(\tilde f(\tilde x),\tilde r(\tilde x), \tilde\psi_i(\tilde x),\tilde x , \tilde t, \tilde\omega,\tilde \mu) = (b^{-1}f(x),r(x),b^{-1/2}\psi_i(x),b^{-1}x,bt,b^{-1}\omega,\mu),\label{eq:rescaling_2}
\end{aligned}
\end{equation}and again find that asymptotically flat Einstein-Dirac solutions map into other asymptotically flat solutions.

These rescaling freedoms allow us to make three crucial simplifications. First, using the freedom to translate our coordinate $x$ allows us to set either $d^-$ or $d^+$ to zero.\footnote{Note that this prevents us from assuming that the wormhole throat is at $x=0$ when we impose such an assumption. Unless the wormhole is reflection symmetric, we cannot fix \textit{both} $d^\pm$ simultaneously.} Second, the rescaling freedom in eq. (\ref{eq:rescaling_2}) allows us to normalize either $f_0^+$ or $f_0^-$ to 1. Finally, the rescaling in eq. (\ref{eq:rescaling_1}) allows us to set $\mu =1$. Note that we could have used eq. \eqref{eq:rescaling_1} to rescale the norm of any normalizable Dirac solution to the Einstein-Dirac system: 
\begin{equation}
    \begin{aligned}
&\int_{-\infty}^\infty \frac{|\tilde \psi_0(\tilde x)|^2+|\tilde \psi_1(\tilde x)|^2+|\tilde \psi_2(\tilde x)|^2+|\tilde \psi_3(\tilde x)|^2}{\tilde f(\tilde x)^2}\ d\tilde x \\
& = a^2\int_{-\infty}^\infty \frac{|\psi_0(x)|^2+|\psi_1(x)|^2+|\psi_2(x)|^2+|\psi_3(x)|^2}{f(x)^2} \ dx.\end{aligned}
\end{equation}
A physical Dirac solution must have norm 1, but this freedom would allow us to take any Einstein-Dirac solution with finite Dirac norm and rescale its value to 1. Thus, finding a normalizable solution with $\mu = 1$ is equivalent to finding a solution with norm 1 and some other mass.

With the asymptotic behavior of the metric functions taken care of, we turn to the asymptotic behavior of the spinor fields themselves. To determine the appropriate asymptotic initial data for the spinor fields, we need to strip off the universal asymptotic behavior of a bound state as $x\to\pm\infty$ by defining asymptotic spinor components $\Xi_i^\pm$: \begin{align}
    &\Xi_i^\pm(x) := e^{\pm \mu x\sqrt{1-\Omega_\pm^2}}(\pm x)^{A_\pm} \psi_i(x) ,\label{eq:def_asymptotic_variables}\end{align}where \begin{align}
    A_\pm := \frac{1-2\Omega_\pm^2}{\sqrt{1-\Omega_\pm^2}}\tilde M_\pm. 
\end{align} We also introduce here the dimensionless parameters \begin{align}
    &\Omega_\pm := \frac{\omega}{\mu f_0^\pm}\in (0,1), && \tilde M_\pm :=  \mu M_\pm > 0.
\end{align} The Dirac equations imply that each of the functions $\Xi_i^\pm$ admits an asymptotic expansion of the form \begin{align}
    \Xi_i^\pm \sim \Xi_{i,0}^\pm + \Xi_{i,1}^\pm/x + \dots \qquad \text{ as }x\to\pm\infty, \label{eq:spinor_asym_expansion}
\end{align} where the coefficients $\Xi_{i,j}^\pm$ for $j \geq 1$ can be determined from the lowest-order coefficient $\Xi_{i,0}^\pm$ by solving the Dirac equation order-by-order in $1/x.$ Moreover, the lowest-order equation relates the leading coefficient for each pair $(\Xi_0^\pm,\Xi_1^\pm)$ and $(\Xi_2^\pm,\Xi_3^\pm)$: \begin{align}
    &\Xi_{1,0}^\pm = B_\pm \Xi_{0,0}^\pm, && \Xi_{3,0}^\pm = -B_\pm \Xi_{2,0}^\pm,\label{eq:spinor_asym_reln}\end{align} where \begin{align} B_\pm := \frac{(-1)^{\ell+1/2} \pm \sqrt{1-\Omega_\pm^2}}{\Omega_\pm}.
\end{align}  This means that the asymptotic spinor initial data (at fixed $\omega$, $\mu$, and $\ell$) is encoded in our choice of just two parameters by specifying one of $(\Xi_{0,0}^\pm ,\Xi_{1,0}^\pm)$ and one of $(\Xi_{2,0}^\pm ,\Xi_{3,0}^\pm)$ at either asymptotic end.

\begin{table*}
    \centering
    \hspace*{-1cm}
    \renewcommand{\arraystretch}{1.5} 
    \begin{tabular}{c|c|c}
    Parameter & Definition & Meaning \\
    \hline $\Omega_\pm$ & $\omega/(\mu f_0^\pm)$ & Dimensionless frequency (between 0 and 1)\\
    $\tilde M_\pm$ & $\mu M_\pm$ & Dimensionless spacetime mass (positive)\\
    $\Xi_{i,0}^\pm$ & eqs. (\ref{eq:def_asymptotic_variables}) and (\ref{eq:spinor_asym_expansion}) & Leading-order asymptotic spinor component for $\Xi_i^\pm$
\end{tabular}
\caption[Review of definitions of asymptotic parameters]{Review of definitions for asymptotic parameters.}
\end{table*}

Henceforth, we will restrict our attention to evolving initial data specified at the left asymptotic end ($x\to-\infty$) into the interior of the spacetime. Given a solution to the Einstein-Dirac system whose asymptotic parameters are given in terms of $(\Omega_-,\tilde M_-,\Xi_{0,0}^-,\Xi_{3,0}^-)$ at $x\to-\infty$, there is also the reflected solution, obtained by taking $x\mapsto -x$ and appropriately interchanging the spinor components (see appendix \ref{app:reflection}). The right asymptotic parameters for the reflected solution are related to the original left asymptotic parameters by \begin{equation}
    (\Omega_+ ,\tilde M_+,\Xi_{0,0}^+,\Xi_{3,0}^+) = (\Omega_-,\tilde M_-, \Xi_{3,0}^-, \Xi_{0,0}^-),\label{eq:reflection_asymptotic_data}
\end{equation} since the sphere-parity even and odd spinor components are interchanged. Thus, we will drop the $\pm$ subscripts and superscripts on asymptotic quantities moving forward; unless otherwise specified, our asymptotic parameters are taken to be those at $x\to-\infty$, but by applying eq. \eqref{eq:reflection_asymptotic_data} one can obtain the analogous result for asymptotic data at $x\to+\infty$.

Note that the Einstein-Dirac system is invariant under independently multiplying both sphere-parity even components $(\Xi_0,\Xi_1)$ or both sphere-parity odd components $(\Xi_2,\Xi_3)$ by $-1$. Indeed, the Dirac equation is linear, and the two parity sectors are decoupled, while the Dirac stress-energy tensor is quadratic within each sector. These sign changes, therefore, produce the same stress-energy tensor and hence the same geometry. This means that, whichever one of $(\Xi_{0,0} ,\Xi_{1,0})$ and of $(\Xi_{2,0} ,\Xi_{3,0})$ we choose to specify, we can take both to be non-negative without loss of generality. We choose to prescribe initial data in terms of $\Xi_{0,0}$ and $\Xi_{2,0}$, and take both to be non-negative.

In summary, a stationary bound-state solution to the Einstein-Dirac system (at fixed $\ell \in \N+\frac{1}{2}$) is picked out uniquely by specifying four parameters \begin{align}
    \Omega \in (0,1), \ \tilde M >0,\ \Xi_{0,0} \geq 0, \ \Xi_{2,0}  \geq 0.\label{eq:asym_initial_data}
\end{align} We normalize $\mu=1$, $d = 0,$ and $f_0 = 1$ using eqs. \eqref{eq:rescaling_1} and \eqref{eq:rescaling_2}. This data determines a local asymptotic solution near $x\to-\infty$, and a global traversable wormhole solution is one for which the corresponding evolution remains regular, develops a throat, and decays appropriately at the opposite asymptotic end.

\subsection{Strategy for Evolving Asymptotic Initial Data}
\label{sec:numerical_strategy}

Recall that, per the discussion of \S \ref{sec:asym_initial_data}, the initial data for the Einstein-Dirac system can be given in terms of four freely specified parameters $(\Omega,\tilde M,\Xi_{0,0},\Xi_{2,0})$, within the ranges specified in eq. \eqref{eq:asym_initial_data}. Rather than specifying $\Xi_{0,0}$ and $\Xi_{2,0}$ directly, we parameterize our choice in terms of a modulus and angle via \begin{align}
(\Xi_{0,0},\Xi_{2,0})
=
(|\Xi|\cos\rho,|\Xi|\sin\rho).
\label{eq:spinor_angle}
\end{align}
In lieu of $|\Xi|$, we introduce a more physical quantity $I$, proportional to the actual spinor magnitude, defined by 
\begin{equation}
    \begin{aligned}
    I&:=\sqrt{(\Xi_{0,0})^2+(\Xi_{1,0})^2+(\Xi_{2,0})^2+(\Xi_{3,0})^2}= |\Xi|\sqrt{1+B_-^2},
\end{aligned}
\end{equation} where $B_-$ was defined in eq. \eqref{eq:spinor_asym_reln}. We refer to $\rho$ as the parity angle: $\rho = 0$ gives a sphere-parity even Dirac solution, $\rho = \pi/2$ gives a sphere-parity odd solution, while more general $\rho$ correspond to linear combinations of the two. The restriction that $\Xi_{0,0},\Xi_{2,0}\geq0$ restricts $\rho$ to the interval $[0,\pi/2]$.  We will encode our asymptotic spinor initial data in terms of $(I,\rho)$. In summary, at a particular $\ell\in\N+\frac{1}{2}$, we may freely specify \begin{align}
    &\Omega \in (0,1), &&\tilde M > 0 ,&&& I >0, &&&&\rho\in[0,\pi/2]
\end{align} as independent asymptotic parameters that determine a particular Einstein-Dirac solution.

Let us now turn to the numerical strategy for propagating asymptotic initial data into the interior of the spacetime. We define a new metric variable \begin{equation}
    p(x):= \frac{r(x)}{\sqrt{x^2+1}}
\end{equation} so that, in asymptotically flat wormhole spacetimes, $p$ is everywhere finite and nonzero, approaching a constant as $x\to\pm \infty.$ We also introduce a compactified coordinate \begin{equation}
    x = \frac{X}{1-X^2}
\end{equation} so that $X$ ranges over $(-1,1).$

Now fix asymptotic initial data $(\Omega,\tilde M,I,\rho)$ at $x\to-\infty$, and recall we have set the metric asymptotic coefficients $f_0 = 1$, $d = 0$, and the Dirac mass $\mu = 1$ by coordinate redefinitions and rescaling as in eqs. (\ref{eq:rescaling_1}) and (\ref{eq:rescaling_2}). We first propagate our initial data from $X\to-1$ to $X= -0.999$ ($x\approx -500$) using the asymptotic series to $O(1/x^3)$ for the asymptotic spinor components and using the asymptotic form of the metric (eq. \eqref{eq:metric_asymptotic_solution}) for the metric functions.\footnote{\label{ft:asymptotic_scale}The choice $X= -0.999$ could be concerning if we wanted to consider solutions where the length scale associated with the exponential falloff (eq. \eqref{eq:exp_length_scale}) were very large, in which case the asymptotic series solution would not be valid even at $X=-0.999$. We will find in \S \ref{sec:non_existence_generic} that no solutions can be obtained where this length scale exceeds $|X|\approx 0.66$ ($|x| \approx 1.2$), so the choice $X = -0.999$ is exceedingly safe.}  Next, we evolve the asymptotic spinor components $\Xi_i$ and the metric functions $p,f$ directly using a fifth-order Runge-Kutta method with a fourth-order error estimator, setting $\text{atol} = 10^{-6}$ and using the default $\text{rtol} = 10^{-3}$ \cite{SciPyRK45Docs,DormandPrince1980EmbeddedRungeKutta}. We use the two evolution equations [eqs. (\ref{eq:EFE_r_ev}) and (\ref{eq:EFE_f_ev})] to evolve the metric functions $p(x)$ and $f(x)$, and we reserve eq. (\ref{eq:EFE_constraint}) as a means of verifying that solutions are well behaved.  Once we have evolved to a point where the exponential decay does not make the spinor components $\psi_i$ too small for numerical accuracy, we switch from evolving the asymptotic spinor components $\Xi_i$ to the original spinor components $\psi_i$. More precisely, we define the exponential length scale as the one that appears in eq. \eqref{eq:def_asymptotic_variables} (after rescaling $\mu = 1$), \begin{equation}
    x_\text{e} := 1/\sqrt{1-\Omega^2}, \label{eq:exp_length_scale}
\end{equation} and evolve in terms of $\Xi_i$ until we reach $x= -3x_\text{e}$. We then evolve in terms of $\psi_i$ until we reach the wormhole throat or the equations become singular.

Our numerical strategy is summarized concisely in the following five steps: \begin{enumerate}
    \item For a given $\ell\in\N+\frac{1}{2},$ prescribe asymptotic initial data at $x\to-\infty$ in the form of $(\Omega,\tilde M,I,\rho)$.
    \item Using the asymptotic form of the metric and the asymptotic series solution to the Dirac equation, propagate the initial data to finite $X = -0.999$.
    \item Use an adaptive Runge-Kutta method to evolve the Einstein-Dirac system in terms of the asymptotic spinor components $\Xi_i$ until we have reached a point on the order of the exponential length scale.
    \item Switch from the asymptotic spinor components $\Xi_i$ to the ordinary spinor components $\psi_i$, and continue evolving using an adaptive Runge-Kutta method.
    \item Stop evolution at a predetermined point beyond the wormhole throat, or when the evolution breaks down due to singular behavior.
\end{enumerate}

\subsection{Obstructions to Traversable Wormhole Solutions with Definite Sphere-Parity}
\label{sec:non_existence_generic}

Recall from \S \ref{sec:stationary_bound_states_exist} that in a generic wormhole geometry (without additional symmetries beyond being static and spherically symmetric), a bound-state solution of the Dirac equation will have a definite sphere-parity. It is sometimes possible that the spectrum of the Dirac operator admits an accidental degeneracy, allowing for bound-state solutions that are a linear combination of both sphere-parities; however, this is highly exceptional in the space of possible traversable wormhole geometries. We will ultimately consider the case of reflection-symmetric wormholes in \S \ref{sec:non_existence_symmetric}, where we will address more general bound-state solutions that are linear combinations of sphere-parity even and odd. In this section, however, we will restrict to the generic case with fixed frequency $\omega$ and angular momentum quantum number $\ell$ and consider Dirac solutions of a definite sphere-parity. We will present strong numerical evidence that there are no traversable wormhole solutions to the Einstein-Dirac system in such cases. 

We will first show that there is no issue in producing a solution describing a traversable wormhole throat connected to one asymptotic end. Using the numerical scheme laid out in \S \ref{sec:numerical_strategy}, we present in Fig. \ref{fig:half_wormhole_solution} such a solution, where the metric and spinor functions have the correct asymptotic behavior and the evolution produces a wormhole throat (i.e., a local minimum of $r$). We validated the legitimacy of this solution by verifying that the evolution and constraint equations were satisfied with an absolute error less than $10^{-6}$. The existence of these partial solutions allows us to draw important conclusions. First, the Einstein-Dirac equations admit local solutions in a neighborhood of the wormhole throat. Indeed, we can analytically find initial data for the Einstein-Dirac system compatible with $r>0$, $f>0$, $dr/dx=0$, and $d^2r/dx^2>0$, and the existence of a local solution is then guaranteed by standard existence theorems for ordinary differential equations. Since a local inconsistency in the Einstein-Dirac system for traversable wormholes is unlikely to arise anywhere other than at the throat, this suggests that only a nonlocal result could provide a valid nonexistence argument.

\begin{figure}[htbp!]
    \centering
    
    \hfill
        \centering
        \begin{subfigure}[t]{0.48\textwidth}
        \centering
            \centering
            \includegraphics[width=\linewidth]{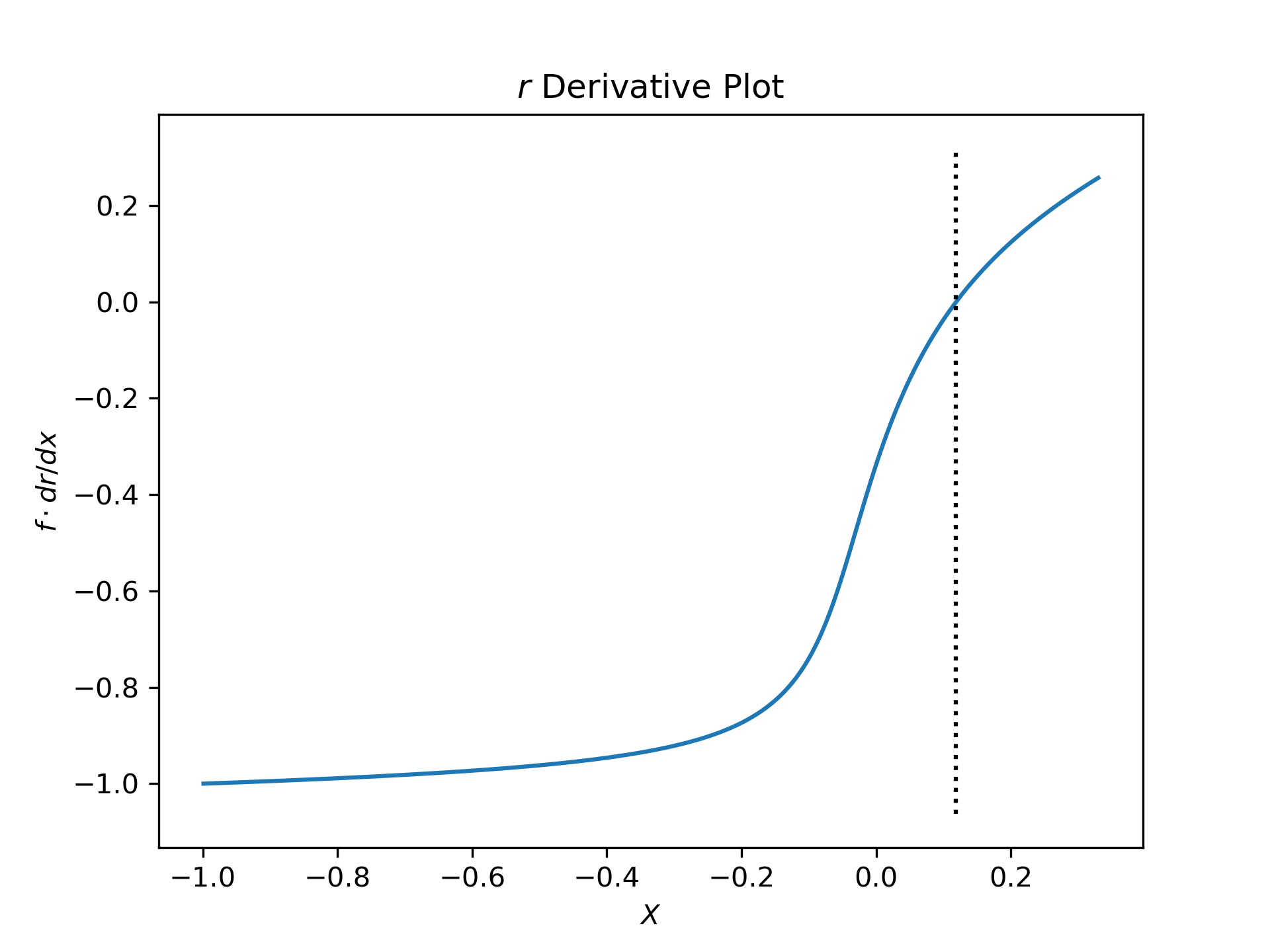}
        \caption{Plot of $f\cdot dr/dx$, showing clearly that a throat was obtained.}
    \end{subfigure}
    \begin{subfigure}[t]{0.48\textwidth}
            \centering
            \includegraphics[width=\linewidth]{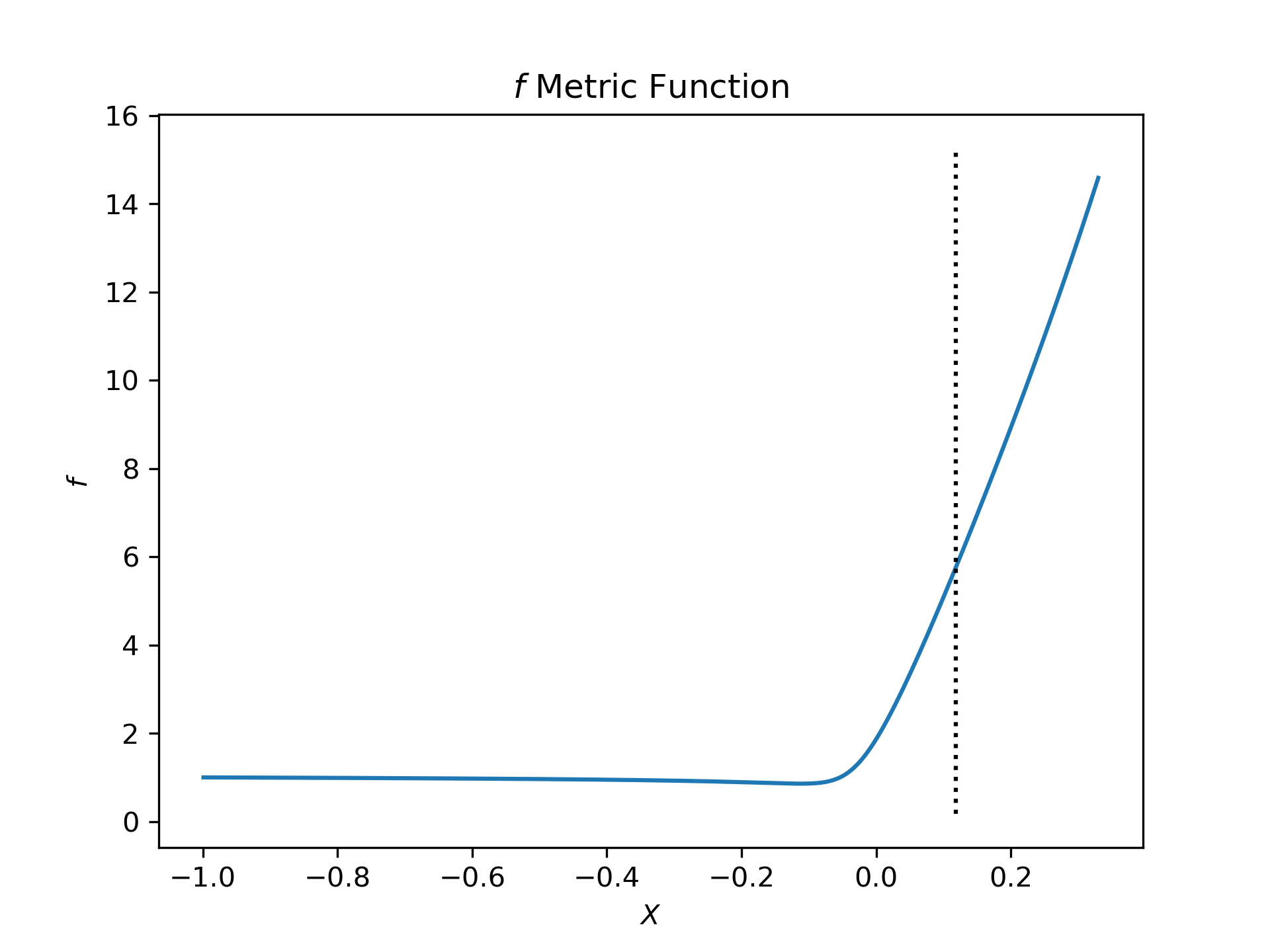}
        \caption{Plot of $f$.}
    \end{subfigure}
    \caption[An example of a partial-wormhole solution with one asymptotic end]{A partial-wormhole solution is presented, connecting a throat to an asymptotically flat region. The dashed black line denotes the $X$-position of the throat in all plots. This example was generated using a sphere-parity even Dirac solution with $(\Omega,\tilde M,I) = (0.1,0.01,0.1)$.}
    \label{fig:half_wormhole_solution}
    \end{figure}

We will now characterize which asymptotic parameter values give rise to a wormhole throat by conducting a numerical scan of the asymptotic parameter space and recording when a local minimum of $r$ is produced before the evolution breaks down. For $\ell \in \{1/2,3/2\}$, we performed a scan over ten logarithmically spaced values of $\Omega$ ranging from $10^{-6}$ to $0.5$, and six logarithmically spaced values of $\tilde M$ from $10^{-3}$ to $10^3$. For each of these values of $\ell$ and $\tilde M$ (except the largest $\tilde M$ value, for which we encountered an overflow error), we also included a search at $\Omega = 0.8$ to include dimensionless frequencies closer to the maximum value $\Omega = 1$. For each value of $(\Omega,\tilde M,\ell)$, we scanned 100 logarithmically spaced asymptotic spinor magnitudes $I$ from $10^{-10}$ to $10^{10}$ to determine whether any wormhole throats were produced using sphere-parity even initial data, sphere-parity odd initial data, neither, or both. The resulting scan is presented in Fig. \ref{fig:omega_m_plane}.

\begin{figure}[!ht]
    \centering
    \begin{subfigure}[t]{0.49\textwidth}
        \centering        \includegraphics[width=\linewidth]{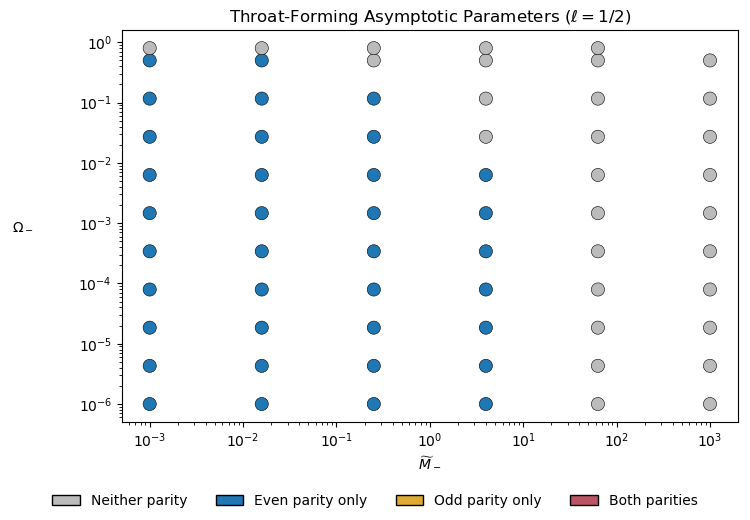}
    \end{subfigure}
    \hfill
    \begin{subfigure}[t]{0.49\textwidth}
        \centering
            \centering
            \includegraphics[width=\linewidth]{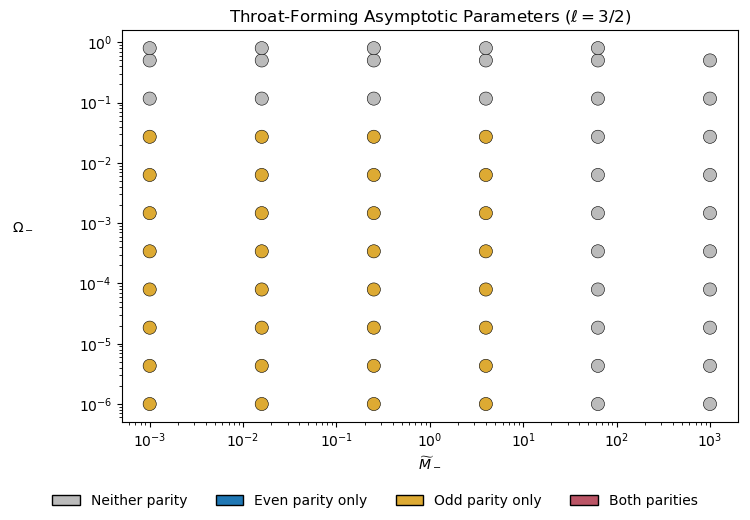}
    \end{subfigure}\\
    \caption[Scan of throat-forming asymptotic parameters]{Scan depicting, for each $\Omega$ and $\tilde M$, whether a wormhole throat could be obtained using asymptotic initial data of either sphere-parity.}
    \label{fig:omega_m_plane}
\end{figure}
\begin{figure}[H]
    \centering
    \includegraphics[width=\linewidth]{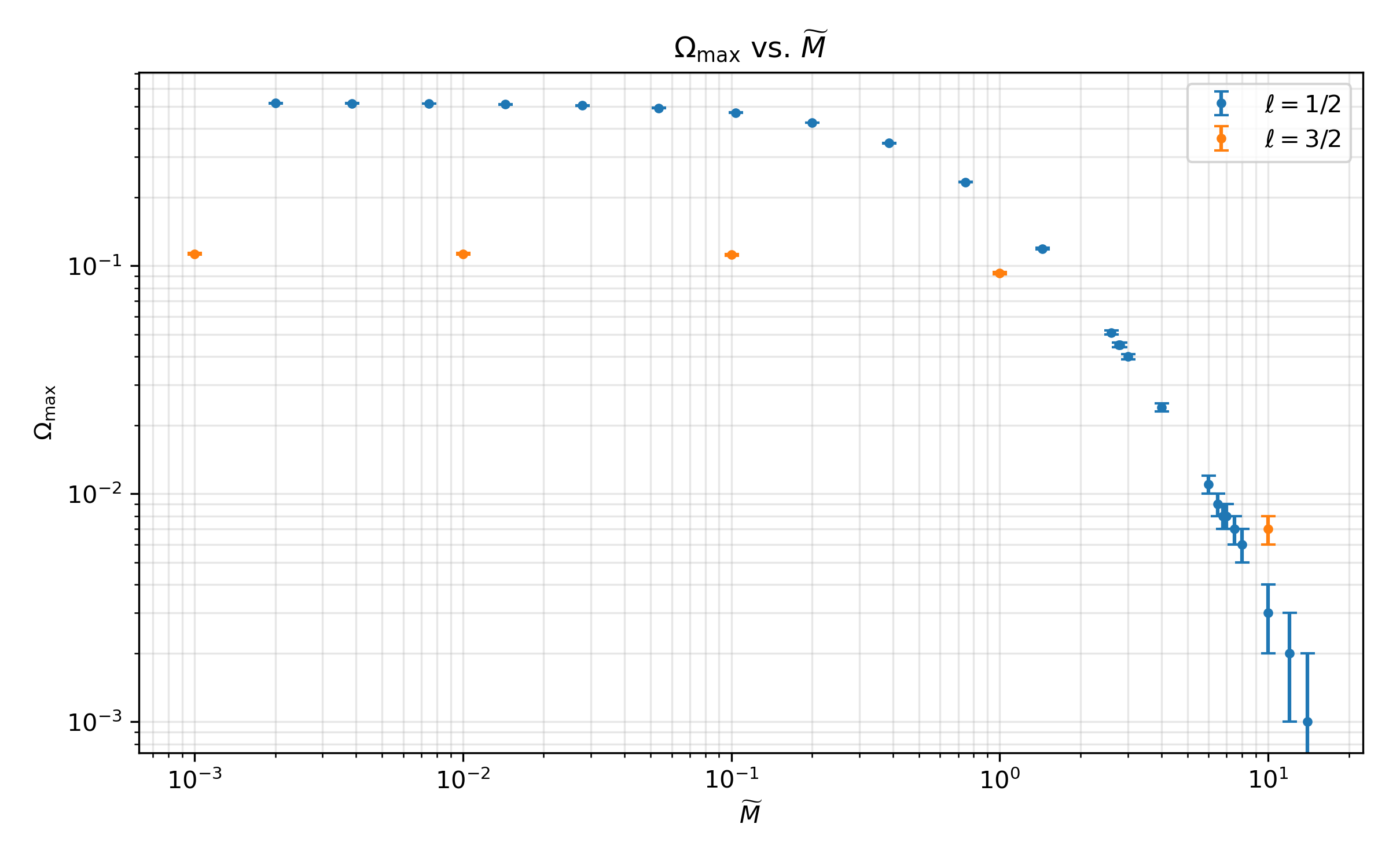}
    \caption[Maximum dimensionless frequency that allows for wormhole throat formation]{Maximum allowed $\Omega$ as a function of $\tilde M$,  computed to an accuracy of $10^{-3}$.}
    \label{fig:max_Omega}
\end{figure}

Figure \ref{fig:omega_m_plane} reveals a few essential features of the throat-forming asymptotic parameter space. First, we observe that for each $\tilde M$, there is a maximum $\Omega$ above which we no longer find any asymptotic initial data that evolves to a wormhole throat. To understand this, we observed in all cases that throat-forming asymptotic spinor magnitudes live within a bounded interval. This behavior is consistent with eq. \eqref{eq:asym_spacetime_mass_equation}, which relates the throat radius and spinor amplitude to the spacetime mass at each end. For a fixed spacetime mass, one can make the spinor amplitude larger at the cost of making the wormhole throat smaller, up to a point. Similarly, one can make the spinor amplitudes smaller and make the wormhole throat larger to compensate, but a spinor amplitude that is too small will result in a horizon; the solution approaches Schwarzschild, and the Dirac field becomes singular at the horizon. 

As $\Omega$ increases, the widths of the allowed bands of asymptotic spinor magnitudes shrink. We numerically tracked the location of the throat-forming spinor magnitude interval as we slowly increased $\Omega$, using increasingly refined searches in $I$ until no throat-forming parameters were observed; the results at various values of $\tilde M$ are presented in Fig. \ref{fig:max_Omega}. Note that wormhole throats were not found with $\Omega \gtrsim 0.52$, which means we do not expect to find solutions with very large exponential length scales (eq. \eqref{eq:exp_length_scale}). This protects our assumption that the solutions are well-described by the asymptotic forms presented in \S \ref{sec:asym_initial_data} for the larger values of $X$ we used in our numerics (see footnote \ref{ft:asymptotic_scale}).

Another striking observation of Fig. \ref{fig:omega_m_plane} is that, at a fixed $\ell$, at most one sphere-parity is capable of forming a wormhole throat in all examples searched. To ensure that this is not an artifact of undersampling in the spinor amplitude $I$, we re-ran the scan for the non-throat-forming parity in the cases $\Omega,\tilde M \in\{10^{-1},10^{-2},10^{-3}\}$ for $\ell \in \{1/2,3/2\}$ at a density of 1000 logarithmically spaced values of $I$, still ranging from $10^{-10}$ to $10^{10}$, and found no exceptions to this claim. One plausible mechanism follows from sign differences in the contributions of the two parities to the Dirac bilinears $S,D$, and $P$. More precisely, define the sphere-parity even and odd parts of the Dirac bilinears as  \begin{equation}
    \begin{aligned}
    &S_\text{even}={\psi_0}^2+{\psi_1}^2,&&S_\text{odd} ={\psi_2}^2 + {\psi_3}^2,\\
    &D_\text{even} = {\psi_0}^2-{\psi_1}^2,&&D_\text{odd} = {\psi_3}^2-{\psi_2}^2,\\
    &P_\text{even} = 2\kappa_\ell \psi_0\psi_1,&& P_\text{odd} = -2\kappa_\ell\psi_2\psi_3,
\end{aligned}\label{eq:parity_sector_bilinears}
\end{equation} where $\kappa_\ell :=(-1)^{\ell-1/2}$, so that the Dirac bilinears introduced in eq. (\ref{eq:Dirac_bilinear_def}) are \begin{equation}
    \begin{aligned}
    &S = S_\text{even} + S_\text{odd},
    &&D = D_\text{odd} + D_\text{even},
    &&&P = P_\text{even} + P_\text{odd}.\label{eq:bilinear_decomp}
\end{aligned}
\end{equation} Though each of these bilinears decays to zero as $|x|\to\infty$ for bound states, each has a definite sign for very large $|x|$. $S_\text{even}$ and $S_\text{odd}$ are manifestly positive everywhere, but for the sign-indefinite bilinears $D,P$, we can use the asymptotic expansion for the spinor coefficients in eq. \eqref{eq:spinor_asym_expansion} and the relations in eq. \eqref{eq:spinor_asym_reln} to obtain  \begin{equation}
    \begin{aligned}
    &\frac{D_\text{even}}{S_\text{even}} \to \pm(-1)^{\ell-1/2}\sqrt{1-\Omega_\pm^2}, 
    && \frac{D_\text{odd}}{S_\text{odd}} \to \mp (-1)^{\ell-1/2}\sqrt{1-\Omega_\pm^2}, 
    &&& \frac{P_\text{even}}{S_\text{even}},\frac{P_\text{odd}}{S_\text{odd}}\to-\Omega_\pm, \label{eq:asymptotic_bilinears}
\end{aligned}
\end{equation} as $x\to\pm \infty$. For a partial-wormhole solution to exist, the partial\footnote{The usual averaged null energy condition involves an integral over a complete, inextendible null geodesic, so we add the ``partial'' modifier to distinguish this condition from the true averaged null energy condition.} averaged null energy condition must be violated, i.e., \begin{equation}
    \int_{-\infty}^{x_t} \left[\frac{2\omega}{f^4r^2}S + \frac{(\ell+1/2)}{f^3r^3}D + \frac{\mu}{f^3r^2}P\right]\ dx\leq 0
\end{equation} when integrating from $x\to-\infty$ to a wormhole throat, which can be verified directly by integrating eq. \eqref{eq:EFE_r_ev}. The first term is manifestly positive, so the case where $D,P<0$ most effectively drives violations of the partial averaged null energy condition. 

In each case examined in our scan, the throat-forming sphere-parity is the one whose asymptotic behavior gives rise to the more favorable sign structure ($D,P<0$) for violating the partial averaged null energy condition. As we are evolving from the left asymptotic end ($x\to-\infty$), we consider the lower sign in eq. \eqref{eq:asymptotic_bilinears}, and see that for $\ell = 1/2$ the sphere-parity even solutions are favored, while for $\ell=3/2$ the sphere-parity odd solutions are favored, and so forth. This matches the pattern observed in Fig. \ref{fig:omega_m_plane}; additionally, a coarse scan consisting of three logarithmically spaced values of $\Omega$ between $10^{-6}$ and $0.5$ and three logarithmically spaced values of $\tilde M$ between $10^{-3}$ and $10^3$ was run at $\ell=5/2$ to validate that the pattern continues to hold, and indeed only sphere-parity even solutions formed wormhole throats.

The results we have presented thus far are for initial data at the left asymptotic end ($x\to-\infty$) evolving into the interior of the spacetime; from the discussion surrounding eq. \eqref{eq:reflection_asymptotic_data}, we can deduce that these results also hold for evolution from the right asymptotic end ($x\to+\infty$), with ``sphere-parity even'' and ``sphere-parity odd'' interchanged. For example, when $\ell = 1/2$, we find only sphere-parity odd Dirac solutions that evolve to a wormhole throat from the right asymptotic end, and only sphere-parity even solutions form throats for $\ell=3/2$, and so forth. Crucially, in each case, the throat-forming sphere-parity from the right asymptotic end is \textit{opposite} that of the left asymptotic end.

This is strong numerical evidence against generic, static, spherically symmetric traversable wormhole solutions. If such a solution did exist, then generically the Dirac solution will have a definite sphere-parity (as discussed in \S \ref{sec:stationary_bound_states_exist}). However, we have seen that the sphere-parity which allows a wormhole throat to form when evolving inward from $x\to -\infty$ is the opposite of the sphere-parity which allows a wormhole throat to form when evolving inward from $x\to +\infty$. Since sphere-parity is a global label of the Dirac solution, a single solution cannot have one sphere-parity on one asymptotic end and the opposite sphere-parity on the other. Thus, if the solution has the parity required to form a throat from the left end, it has the wrong parity to form a throat when evolving in from the right end, and conversely. This suggests that a partial-wormhole solution with definite $\omega,\ell$ and sphere-parity cannot be extended to a second asymptotically flat end.

\subsection{Obstructions to Reflection-Symmetric Traversable Wormhole Solutions}
\label{sec:non_existence_symmetric}

In this section, we will address the symmetry-protected case in which both sphere-parities can be present at the same frequency. As discussed in \S \ref{sec:stationary_bound_states_exist}, reflection symmetry guarantees that sphere-parity even and odd solutions can exist at the same frequency, so a general bound-state solution will be some linear combination of the two. To obtain a numerical obstruction to reflection-symmetric wormholes, we first derive conditions that must be satisfied by the spinor field and metric functions at a reflection-symmetric throat, and introduce an objective function that measures how well these conditions are satisfied. Our approach will be to undertake an extensive scan of the throat-forming region of asymptotic parameter space and show that the reflection-symmetric throat conditions never approach satisfaction over the scanned parameter range.

We begin by establishing the necessary conditions at the throat of a reflection-symmetric wormhole. We will assume, for this discussion, that the wormhole throat is at $x=0$ by translating the $x$-coordinate if necessary. This means that reflection symmetry will be implemented by the map \begin{equation}
    Z(t,x,\theta,\varphi) = (t,-x,\theta,\varphi).
\end{equation} The conditions we derive at the throat will still hold even if the throat is translated away from $x=0$. By definition of a reflection-symmetric geometry, the metric functions must be symmetric about the wormhole throat, which implies that their derivatives must vanish there. We already have $dr/dx = 0$ at the throat, but we must also have $df/dx = 0$. 

Imposing conditions on the Dirac spinor at the throat is a more delicate matter. We do not require each of the spinor components to be symmetric about the wormhole throat; rather, the physically motivated restriction is that all the observables (e.g., Dirac stress-energy and Dirac current) are reflection symmetric. In fact, the condition that $S,D$, and $P$ are reflection-symmetric functions of $x$ about the throat is both a necessary and sufficient condition for a reflection-symmetric Einstein-Dirac solution: from the Einstein equations, we can solve for $S,D$, and $P$ in eqs. \eqref{eq:tt_eqn}-\eqref{eq:angleangle_eqn} and observe that the corresponding geometric expressions are reflection symmetric about the wormhole throat.

We will prove below that the only Dirac solutions that have $S$, $D$, and $P$ reflection symmetric on a reflection-symmetric, asymptotically flat, static, spherically symmetric wormhole background are eigenstates of the reflection operator $Z^*$, which implements $x\mapsto -x$ on spinor fields (see the discussion of reflection symmetry in appendix \ref{app:reflection}). If $\Psi$ solves the Dirac equation on a reflection-symmetric background, then applying the reflection operator $Z^*\Psi$ also gives a solution. If our Dirac solution is an eigenstate of the reflection operator $Z^*$ with eigenvalue $\pm 1$, then its components must satisfy \begin{align}
    & \psi_0(x) = \pm \psi_3(-x), && \psi_1(x) = \mp \psi_2(-x),\label{eq:ref_sym_spinor_cond}
\end{align} and these give rise to a local relation between the sphere-parity even and sphere-parity odd spinor components at the throat, $x=0$.

We now proceed with the proof: consider a Dirac solution on a fixed, reflection-symmetric wormhole background, at a fixed frequency $\omega<\mu f_0^\pm$ and $\ell\in\N+\frac{1}{2}$, where we have taken an incoherent superposition of Dirac solutions over the spherical harmonic index $m$ to produce a spherically symmetric Dirac solution, as discussed in \S \ref{sec:spherical_sym_stress_energy}. This leaves us with only the radial parts of the Dirac spinor, satisfying eq. \eqref{eq:real_dirac}. As eq. \eqref{eq:real_dirac} consists of two pairs of coupled first-order ordinary differential equations, the initial spinor data at the throat is four-dimensional, consisting of a two-dimensional subspace of sphere-parity even solutions and a two-dimensional subspace of sphere-parity odd solutions. Within each parity sector, generic initial data gives rise to a sum of an exponentially growing mode and an exponentially decaying mode asymptotically; when restricting to bound states, the condition that the solutions decay exponentially as $x\to\pm \infty$ imposes an additional constraint which, for a generic $\omega$, does not admit any solutions. However, if the frequency $\omega$ is such that there exists a bound state, then restricting to the exponentially decaying subspace reduces the dimension count within each parity sector by 1, resulting in a two-dimensional solution space consisting of one sphere-parity even bound state and one sphere-parity odd bound state (up to rescaling). The reflection operator $Z^*$ leaves the full two-dimensional space of bound-state solutions invariant, squares to 1, and exchanges the sphere-parity even and sphere-parity odd bound states.

If we choose any sphere-parity even bound state \begin{align}
    \Psi_\text{even} = \begin{pmatrix}
        \psi_0(x)\\
        \psi_1(x)\\
        0 \\
        0 
    \end{pmatrix},
\end{align}then  $\Psi_\text{odd}:=Z^*\Psi_\text{even}$ will be a sphere-parity odd bound state. A general bound-state solution is given by the linear combination \begin{equation}
    \Psi = a\Psi_\text{even} + b\Psi_\text{odd}
\end{equation} with $a,b\in\R$. The Dirac bilinears $S,D$ and $P$ can be decomposed as in eq. \eqref{eq:bilinear_decomp}, for example, \begin{equation}
    S(x)= a^2S_\text{even}(x)+b^2S_\text{odd}(x).
\end{equation} Since the even and odd parts of each bilinear are interchanged under a reflection, this implies $S_\text{even}(x)= S_\text{odd}(-x)$, and similarly for the other bilinears. Because each of these Dirac bilinears is reflection-symmetric about the wormhole throat, it follows that either the even/odd parts of each bilinear are themselves reflection-symmetric functions of $x$, or $a^2=b^2$. From eq. \eqref{eq:asymptotic_bilinears}, the former is not possible, since $D_\text{even}$ must have different signs as it approaches each asymptotic end, and similarly for $D_\text{odd}$. Thus, the only reflection-symmetric solutions have $a^2=b^2\Rightarrow a = \pm b$, which are eigenstates of $Z^*$ with eigenvalue $\pm1$, corresponding exactly to the conditions in eq. \eqref{eq:ref_sym_spinor_cond}.

In summary, we would like to check whether any asymptotic Einstein-Dirac initial data evolves to produce a wormhole throat where \begin{align}
    &df/dx= 0, &&\psi_0 = \pm\psi_3, &&&\psi_1=\mp\psi_2\label{eq:throat_condition}
\end{align} are all satisfied at the throat. It turns out that an obstruction already arises when trying to find throat data meeting the necessary conditions on the spinor components, without restricting to $df/dx = 0$. To assess if the spinor condition is met, it is convenient to consider the ratio \begin{align}
    Q = \frac{2\psi_0\psi_3 - 2\psi_1\psi_2}{{\psi_0}^2+{\psi_1}^2+{\psi_2}^2+{\psi_3}^2}.
\end{align} It is straightforward to check that $|Q|\leq  1$, and that the definite-parity cases $\rho =0$ or $\rho = \pi/2$ give $Q = 0$ everywhere, since one of the two sphere-parities vanishes identically. When the throat condition in eq. \eqref{eq:throat_condition} is met, we find that the inequality is saturated, with $Q = + 1$ in the case $(\psi_0,\psi_1) = (\psi_3,-\psi_2)$ and $Q=-1$ in the case $(\psi_0,\psi_1) = (-\psi_3,\psi_2)$. Defining $Q_t$ to be the value of $Q$ at the throat, this gives a straightforward objective function for the spinor components, by seeking initial data producing a wormhole throat where the spinor data satisfies $Q_t^2= 1$.

For each $\tilde M \in \{10^{-3},10^{-2},10^{-1},1,10,20\}$ and $\ell \in\{1/2,3/2\}$, we scanned 20 linearly spaced values of the parity angle $\rho$ between 0 and $\pi/2$ and 10 logarithmically spaced values of $\Omega$ from $10^{-5}$ to the maximum throat-forming value of $\Omega$ for that $\tilde M$ (see Fig. \ref{fig:max_Omega} and the corresponding discussion). For each $(\Omega,\tilde M,\rho,\ell)$, we used a bisection search to determine the interval of throat-forming spinor magnitudes $I$ (if it exists) and minimized $-Q_t^2$ over this bounded interval using SciPy's ``minimize\_scalar'' function \cite{SciPyMinimizeScalarDocs}. In each case, the throat-forming region contained an interior point at which $Q_t^2$ was maximized. We refined the largest $Q_t^2$ found in our grid search by maximizing over $(\rho,\Omega,\tilde M)$, seeded at that point, using the derivative-free BOBYQA algorithm \cite{BOBYQA}. The value of $Q_t^2$ never exceeded $\approx 1.4\times10^{-4}$, far from the value $Q_t^2 = 1$ required for a reflection-symmetric wormhole throat. Thus, throughout the searched parameter range, the reflection-symmetric case fails before imposing the full set of throat conditions in eq. \eqref{eq:throat_condition}.

In summary, we find no evidence for reflection-symmetric traversable wormhole solutions of the Einstein-Dirac system. Combined with the generic numerical obstruction of the previous section, our results strongly indicate that the Einstein-Dirac system does not admit traversable wormhole solutions within the static, spherically symmetric, asymptotically flat setting studied in this work, when considering a physically meaningful, positive-frequency Dirac source.

\begin{acknowledgments}
The author is deeply grateful to his advisor, Robert M. Wald, for his guidance and support throughout this work. The author also thanks Pau Figueras for his mentorship and valuable assistance with the numerical searches conducted as part of this research. This work was presented as a thesis to the Department of Physics at the University of Chicago, in partial fulfillment of the requirements for the Ph.D. degree.

Part of this research was conducted with the support of the National Science Foundation Graduate Research Fellowship under Grant No. DGE 1706045. This research was also supported in part by NSF Grant 24-03584 and Templeton Foundation Grant No. 62845 to the University of Chicago.
\end{acknowledgments}

\appendix

  \section{Spin-Weighted Spherical Harmonics}\label{app:spin_weighted_Spherical_Harmonics} In this appendix, we introduce definitions and conventions regarding the spin-weighted spherical harmonics. We also highlight key properties of these functions that are relied upon in the paper. We mostly follow the definitions of Goldberg et al. \cite{SpinSHarmonics}, but make adjustments to the angular derivative operators to align with the conventions of Geroch, Held, and Penrose \cite{GHP:1973} used in this paper.

We will consider the static, spherically symmetric line element used in this paper \begin{equation}
    ds^2 = f(x)^2\dt^2-\frac{\dx^2}{f(x)^2} - r(x)^2\dtheta^2-r(x)^2\sin^2\theta \dphi^2.
\end{equation} We now introduce a null tetrad consisting of four complex vector fields $(l^a,n^a,m^a,\bar m^a)$ where $l^a,n^a$ are real and $l_an^a = -m_a\bar m^a = 1$, with all other inner products vanishing. We choose \begin{equation}
    \begin{aligned}
    &l^a = \frac{1}{\sqrt{2}}\left[f(x)^{-1}\left(\frac{\partial}{\partial t}\right)^a + f(x)\left(\frac{\partial}{\partial x}\right)^a\right], 
    && n^a = \frac{1}{\sqrt{2}}\left[f(x)^{-1}\left(\frac{\partial}{\partial t}\right)^a - f(x)\left(\frac{\partial}{\partial x}\right)^a\right],\\
    &m^a = \frac{1}{\sqrt{2}r(x)}\left[\left(\frac{\partial}{\partial\theta}\right)^a + i\csc\theta\left(\frac{\partial}{\partial\varphi}\right)^a\right] ,
    &&\bar m^a = \frac{1}{\sqrt{2}r(x)}\left[\left(\frac{\partial}{\partial\theta}\right)^a - i\csc\theta\left(\frac{\partial}{\partial\varphi}\right)^a\right],
\end{aligned}
\end{equation} so that $l^a,n^a$ are future-directed null vectors\footnote{As our spacetime is spherically symmetric, it is of Petrov type D (so long as the Weyl tensor is not identically zero), which gives rise to two (repeated) principal null vectors of the Weyl tensor. The vectors $l^a$ and $n^a$ are conveniently chosen to align with these principal null directions.} and $m^a,\bar m^a$ are tangent to the 2-spheres. Much like how vector or tensor components are only meaningful with regard to a particular choice of basis, spinor components are also only defined with reference to a particular tetrad. We can preserve the normalization condition $m_a\bar m^a = -1$ under a rescaling of $m^a$ by an arbitrary phase $e^{i\psi}$. A quantity $\eta$ has spin-weight $s$ if $\eta \to e^{is\psi}\eta$ under $m^a \to e^{i\psi}m^a$. 

This has a convenient, equivalent description in the two-spinor language. The identification (\ref{spinor_identification}) in terms of the null tetrad above is simply \begin{equation}
    \begin{aligned}
    &o^A\bar o^{A^\prime}\leftrightarrow l^a, && \iota^A\bar\iota^{A^\prime} \leftrightarrow n^a, 
    &&& o^A\bar\iota^{A^\prime} \leftrightarrow m^a, &&&& \iota^A\bar o^{A^\prime}\leftrightarrow \bar m^a.
\end{aligned} 
\end{equation}The normalization conditions on the null tetrad are equivalent to the normalization conditions on the spinor dyad $\epsilon_{AB}o^A\iota^B=1$. These are preserved under rescaling by a nowhere-vanishing complex scalar function $\lambda$, i.e., we can rescale our spinor basis by \begin{equation}
    \begin{aligned}
    &o^A \to \lambda o^A ,&& \iota^A \to \lambda^{-1}\iota^A ,
    &&& \bar o^{A^\prime} \to \bar\lambda \bar o^{A^\prime} ,&&&& \bar \iota^{A^\prime} \to \bar \lambda^{-1}\bar \iota^{A^\prime}.\label{spinor_basis_rescale}
\end{aligned} 
\end{equation}This, in turn, rescales our spinor components by \begin{equation}
    \begin{aligned}
    &\Phi_0\to\lambda^{-1}\Phi_0, && \Phi_1 \to \lambda \Phi_1, 
    &&&  X_{0^\prime} \to \bar\lambda  X_{0^\prime}, &&&&  X_{1^\prime} \to\bar \lambda^{-1} X_{1^\prime}.
\end{aligned}
\end{equation} We say that a quantity $\eta$ is of type $\{p,q\}$ if $\eta \to \lambda^p\bar\lambda^q\eta$ under (\ref{spinor_basis_rescale}). $\eta$ is said to have \textit{boost-weight} $\frac{1}{2}(p+q)$ and \textit{spin-weight} $\frac{1}{2}(p-q)$.

The spin-weighted spherical harmonics $_sY_{\ell,m}$, where $s,\ell,m\in\Z+\frac{1}{2}$ subject to $\ell\geq |s|$ and $-\ell\leq m\leq \ell$, comprise an orthonormal basis of spin-weight $s$ functions on $S^2$, under the usual $L^2$ inner product on $S^2$. Letting $D^\ell(\alpha,\beta,\gamma)$ denote the Wigner D-matrices (as defined in \cite{SpinSHarmonics}), a convenient definition of the spin-weighted spherical harmonics is given by \begin{align}
    _sY_{\ell,m}(\theta,\varphi) := \sqrt{\frac{2\ell+1}{4\pi}} D^{\ell}_{-s,m}(\varphi,\theta,0).\label{eq:spin_weighted_spherical_harmonics_def}
\end{align} Although we require only half-integer values of $s$ in this work, one can see that the $s=0$ case aligns with the familiar spherical harmonics that are eigenfunctions of the scalar Laplacian on $S^2$. The spin-weight-$s$ spherical harmonics are orthonormal with respect to the usual $L^2$ inner product on the sphere \begin{equation}
    \int_{S^2}\overline{_{s}Y_{\ell,m}(\theta,\varphi)}\ _{s}Y_{\ell^\prime,m^\prime}(\theta,\varphi)\ \sin\theta \dtheta \dphi = \delta_{\ell\ell^\prime}\delta_{mm^\prime}.
\end{equation}

The angular derivative operators $\edth$ and $\edth^\prime$ are defined by their action on a spin-weight-$s$ quantity $\eta$ as \begin{equation}
    \begin{aligned}
    &\edth\eta  =\frac{1}{\sqrt{2}r(x)}\left[\frac{\partial}{\partial\theta} +i\csc\theta\frac{\partial}{\partial\varphi}-s\cot\theta\right]\eta ,
    &&\edth^\prime\eta  =\frac{1}{\sqrt{2}r(x)}\left[\frac{\partial}{\partial\theta} -i\csc\theta\frac{\partial}{\partial\varphi}+s\cot\theta\right]\eta ,\label{eq:edth_def}
\end{aligned}
\end{equation} where we have rescaled the definitions from \cite{SpinSHarmonics} by $-1/\sqrt{2}r(x)$ and, following Ref. \cite{GHP:1973}, opted to use the symbol $\edth^\prime$ instead of $\bar \edth$ throughout this work.

We note a few crucial properties of the spin-weighted spherical harmonics. Under conjugation and under antipodal mapping, they satisfy \begin{equation}
    \begin{aligned}
    &\overline{_sY_{\ell,m}} = (-1)^{m+s}\ _{-s}Y_{\ell,-m}, 
    && _sY_{\ell,m}(\pi-\theta,\varphi+\pi) = (-1)^\ell\ _{-s}Y_{\ell,m}(\theta,\varphi).\label{eq:spin_weighted_harmonics_properties}
\end{aligned}
\end{equation} The angular derivative operators $\edth,\edth^\prime$ act as spin-raising and spin-lowering operators, respectively:\begin{equation}
    \begin{aligned}
    &\edth\ _sY_{\ell,m} = -\sqrt{\frac{(\ell-s)(\ell+s+1)}{2r(x)^2}}\ _{s+1}Y_{\ell,m},
    &&\edth^\prime\ _sY_{\ell,m} = \sqrt{\frac{(\ell+s)(\ell-s+1)}{2r(x)^2}}\ _{s-1}Y_{\ell,m}.
\end{aligned} 
\end{equation} Finally, for fixed $\ell$, the identity \begin{equation}
    \sum_{m=-\ell}^\ell \ _sY_{\ell,m}\ \overline{_{s^\prime}Y_{\ell,m}}= \frac{2\ell+1}{4\pi}\delta_{ss^\prime}
\end{equation} holds, where the sum is taken over $m\in\Z+\frac{1}{2}$ from $-\ell,\dots,\ell$. This follows directly from the definition (\ref{eq:spin_weighted_spherical_harmonics_def}) in terms of the Wigner D-matrices, using the fact that the Wigner D-matrices are unitary at each point on $S^2$, i.e., \begin{equation}
\begin{aligned}
     I &= D^\ell{D^\ell}^\dagger  \Rightarrow \delta_{ss^\prime} = \delta_{-s,-s^\prime} = \sum_m D^\ell_{-s,m}\overline{D^{\ell}_{-s^\prime,m}}  = \frac{4\pi}{2\ell+1}\sum_m\ _sY_{\ell,m}\ \overline{_{s^\prime}Y_{\ell,m}},
\end{aligned}
\end{equation} which gives the desired result.

\section{Action of Discrete Symmetries on the Dirac Field}\label{app:discrete_symmetries}

In this appendix, we will review the definition of spinor fields as sections of associated spinor bundles, and use this machinery to discuss how discrete symmetries of the underlying spacetime act on spinors. We will then turn to their explicit implementations in our spacetime.

\subsection{General Discussion of the Action of Discrete Symmetries on Dirac Fields}

Spinor fields in curved spacetime are defined using principal and associated fiber bundles \cite{Schuller2015SpinStructuresLecture}. Let $(M,g)$ be a spacetime which admits a spin structure.\footnote{There are possible topological restrictions that prevent the discussion that follows from being well-defined, but we will not address those here and assume that the spacetime admits a spinor structure, as is the case for the wormhole spacetimes under consideration in this paper \cite{Geroch1968SpinorStructureI}.} Start by considering the principal bundle $\text{OLM}^+$ of spatially- and temporally-oriented orthonormal frames on $M$, on which the proper, orthochronous Lorentz group $\text{SO}(1,3)^+$ has a natural, free group action. Denote the projection map on $\text{OLM}^+$ by $\pi_{\text{OLM}^+}$ and the right group action on $e\in \text{OLM}^+$ by $\Lambda \in \text{SO}(1,3)^+$ as $e\blacktriangleleft \Lambda$. Famously, $\text{SO}(1,3)^+$ is double-covered by $\text{SL}(2,\C)$, where the double-covering map $\rho$ has $\ker\rho \cong\mathbb{Z}_2$. A spin-frame bundle consists of a principal $\text{SL}(2,\C)$ bundle $(P,\pi,M)$ and a map $\varphi:P\to \text{OLM}^+$ that satisfies $\pi_{\text{OLM}^+}\circ \varphi = \pi$ and, for all $A\in\text{SL}(2,\C)$ and $p\in P$, we have $\varphi(p)\blacktriangleleft \rho(A) = \varphi(p \triangleleft A),$ where $\triangleleft$ denotes the right group action of $\text{SL}(2,\C)$ on $P$. A spin-bundle is then a complex vector bundle associated to the spin-frame bundle with typical fiber $\C^n$, on which $\text{SL}(2,\C)$ has a left group action given by a representation of $\text{SL}(2,\C)$, and spinor fields are sections of the spin-bundle. In particular, a Weyl spinor is the case when we take $n=2$ and $\text{SL}(2,\C)$ acts by its natural representation on $\C^2$, denoted $(\frac{1}{2},0)$. Similarly, a conjugate Weyl spinor is the case when $n=2$ but $\text{SL}(2,\C)$ acts via its conjugate representation, denoted $(0,\frac{1}{2})$. Both of these representations of $\text{SL}(2,\C)$ are irreducible, and a Dirac spinor is a direct sum of these two representations.

This treatment explains how one should implement Lorentz transformations that are continuously connected to the identity on spinors at each point on the spacetime. In this work, we shall want to consider certain discrete spacetime symmetries: the antipodal map $\Upsilon$ and time-reversal $T$ (which are symmetries of any static, spherically symmetric spacetime), and in certain sections we consider the reflection map $Z$. These are given by \begin{equation}
    \begin{aligned}
    &\Upsilon(t,x,\theta,\varphi)= (t,x,\pi-\theta,\varphi+\pi),
    &&T(t,x,\theta,\varphi)= (-t,x,\theta,\varphi),
    &&&Z(t,x,\theta,\varphi)= (t,-x,\theta,\varphi).\label{eq:discrete_symmetries}
\end{aligned} 
\end{equation}

The antipodal and reflection maps reverse the spatial orientation of the frame, and so we need to consider instead the principal frame bundle $\text{LM}^+$ of temporally-oriented (but not spatially-oriented) frames on $M$. On this bundle, $\text{O}(1,3)^+$ has a natural, free, right group action. $\text{O}(1,3)^+$ is double covered by the semidirect product $\text{SL}(2,\C)\rtimes \mathbb{Z}_2$ (where we can take the nontrivial element of $\Z_2$ to act by complex conjugation on $\text{SL}(2,\C)$) and the double covering map $\rho$ has $\ker\rho \cong \mathbb{Z}_2$.\footnote{This way of writing the covering group is convention-dependent. Once disconnected Lorentz transformations are included, there are inequivalent Pin-type choices, distinguished, for example, by the square of these discrete maps \cite{BergDeWittMoretteGwoKramer2001PinGroups}. In the complex Dirac representation, this distinction may appear as a common phase multiplying the symmetry operators on the Dirac spinor, which does not change the induced action on vectors. In this work, we do not need a classification of these choices, and we fix the relevant conventions by the explicit transformations given later in this appendix.\label{ft:discrete_symmetry_conventions}} We may analogously define the spin-frame bundle and spin-bundle by replacing the old double-covering map by this new one, $\text{OLM}^+$ by $\text{LM}^+$, $\text{SL}(2,\C)$ by $\text{SL}(2,\C)\rtimes \mathbb{Z}_2$, and $\text{SO}(1,3)^+$ by $\text{O}(1,3)^+$. The primary distinction between this extended case and the earlier one is that $\text{SL}(2,\C)\rtimes \mathbb{Z}_2$ does not admit any complex two-dimensional irreducible representations -- the smallest irreducible representation on which the $\text{SL}(2,\C)$ part acts non-trivially is 4-dimensional. If we take this 4-dimensional irreducible representation of $\text{SL}(2,\C)\rtimes\mathbb{Z}_2$, but restrict now to the connected subgroup consisting of just $\text{SL}(2,\C)$, the resulting representation is now reducible into a direct sum of the $(\frac{1}{2},0)\oplus(0,\frac{1}{2})$ representations, so we can identify this irreducible representation of $\text{SL}(2,\C)\rtimes\mathbb{Z}_2$ with the aforementioned representation of a Dirac spinor, since the actions of $\text{SL}(2,\C)$ on both agree. The antipodal or reflection maps exchange the two chiral Weyl bundles and therefore act naturally only on their direct sum, the Dirac bundle. Note that, within a fixed choice of double cover, the kernel of the covering map is $\mathbb{Z}_2$, so each Lorentz transformation is implemented uniquely up to an overall sign. The discrete maps carry a further phase ambiguity from the choice of cover itself (see footnote \ref{ft:discrete_symmetry_conventions}), which we fix by the explicit transformations below.

A similar story holds for time-reversal: dropping the temporal orientation of the frame bundle instead of the spatial orientation. As the resulting group is isomorphic to $\text{O}(1,3)^+$, we obtain the same double covering group and the same conclusions hold: time reversal exchanges the two chiral Weyl representations and hence acts naturally only on a Dirac spinor. We will see this fact manifest in the explicit implementations of these discrete symmetries: none of them can be implemented as a linear map that takes components of a single 2-component Weyl spinor to itself. They all involve either interchanging the two chiral Weyl representations or complex conjugation.

\subsection{Antipodal Symmetry}\label{app:antipodal}
We now turn to the explicit implementation of these symmetries in both the 2-spinor and 4-component Dirac spinor notations, specializing to our geometry and components. First, we address antipodal symmetry, which is a symmetry of any spherically symmetric spacetime. Indeed, $\Upsilon^*g_{ab} = g_{ab}$ in our case, where $\Upsilon$ is defined in eq. (\ref{eq:discrete_symmetries}). This acts on the coordinate differentials by\begin{equation}
    \begin{aligned}
    &\Upsilon^*(\dt) = \dt ,&& \Upsilon^*(\dx) = \dx, 
    &&& \Upsilon^*(\dtheta) = -\dtheta ,&&&& \Upsilon^*(\dphi) = \dphi.\label{coordinate_parity}
\end{aligned}
\end{equation}This same effect is achieved by the following transformation of the spinor dyads \begin{equation}
    \begin{aligned}
    &o^A \mapsto -i\bar o^{A^\prime} ,&& \iota^A \mapsto i\bar\iota^{A^\prime} ,
    &&& \bar o^{A^\prime} \mapsto io^A ,&&&&\bar \iota^{A^\prime} \mapsto -i\iota^A,
\end{aligned}
\end{equation} which, under the identification (\ref{spinor_identification}), agrees with eq. (\ref{coordinate_parity}). This, in turn, induces a mapping on the two-spinor components \begin{equation}
    \begin{aligned}
    &\Phi_0\mapsto i \Upsilon^* X_{1^\prime}, &&\Phi_1\mapsto i\Upsilon^* X_{0^\prime},
    &&&  X_{0^\prime} \mapsto -i \Upsilon^*\Phi_1, &&&&  X_{1^\prime} \mapsto -i \Upsilon^*\Phi_0,\label{eq:sphere_parity_component_map}
\end{aligned}
\end{equation} where $\Upsilon^*$ is interpreted as the usual pullback of a scalar function. As an action on 4-component Dirac spinors, using the identification introduced in eq. (\ref{eq:spinor_formalisms_identification}), this is implemented by \begin{align}
    &\Upsilon^*\Psi \vert_{(t,x,\theta,\varphi)}:=P\Psi\vert_{\Upsilon(t,x,\theta,\varphi)}, && P:= i\gamma^0\gamma^2\gamma^3.\label{4_comp_parity}
\end{align}The matrix $P$ squares to the identity, anticommutes with $\gamma^1$, and commutes with the remaining gamma matrices.

Applying $\Upsilon^*$ twice to our Dirac spinor results in an overall sign flip since the components map back to themselves, albeit evaluated at $\varphi+2\pi$, which for spin-$1/2$ fields yields a sign change. The convention where this operator squares to $1$ (e.g., \cite{PeskinSchroeder:1995}) is instead obtained by composing $\Upsilon^*$ with a constant \textit{internal} phase acting on the Dirac field, $\tilde{\Upsilon}^*:= -i\Upsilon^*$. Because every Dirac bilinear contains one factor of $\Psi$ and one of $\bar{\Psi}$, this phase acts trivially on all tensorial quantities and leaves the induced frame transformation untouched, while $(\tilde{\Upsilon}^*)^2 = +1$. The phase $-i$ does not commute with charge conjugation (a Majorana field could not absorb it), so $\Upsilon^*$ as defined above is the unique implementation (up to sign) compatible with the real structure on spinors \cite{BergDeWittMoretteGwoKramer2001PinGroups}, and it is the one we use throughout.

If $ (\Phi_0, \Phi_1, X_{0^\prime}, X_{1^\prime})$ solves the Dirac equation, then applying the map (\ref{eq:sphere_parity_component_map}) or (\ref{4_comp_parity}) to these components gives a solution as well. We can define sphere-parity even and odd eigenstates as in eq. \eqref{eq:def_of_sphere_parity_even_odd_solutions} \begin{align}
    \Psi_\pm:= \frac{1}{\sqrt{2}}(\Psi \mp i \Upsilon^*\Psi),
\end{align} since under the sphere-parity map we have \begin{equation}
    \Upsilon^*\Psi_\pm = \pm i\Psi_\pm.
\end{equation}

Because we use Dirac spinor components that are adapted to antipodal symmetry heavily in this work, it is useful to review here precisely how these are constructed using the sphere-parity operator $\Upsilon^*$. Suppose that the (rescaled) spinor components have stationary time dependence with frequency $\omega$ and are proportional to spherical harmonics of spin-weight $\pm 1/2$ with indices $(\ell,m)$: \begin{equation}
    \begin{aligned}
    &\hat \Phi_0(t,x,\theta,\varphi) = e^{-i\omega t}\phi_0^{(\omega,\ell,m)}(x)\ _{-1/2}Y_{\ell,m}(\theta,\varphi), \\
    &\hat \Phi_1(t,x,\theta,\varphi) = e^{-i\omega t}\phi_1^{(\omega,\ell,m)}(x)\ _{+1/2}Y_{\ell,m}(\theta,\varphi),\\
    &\hat X_{0^\prime}(t,x,\theta,\varphi) = e^{-i\omega t}\chi_{0^\prime}^{(\omega,\ell,m)}(x)\ _{-1/2}Y_{\ell,m}(\theta,\varphi),\\
    &\hat X_{1^\prime}(t,x,\theta,\varphi) = e^{-i\omega t}\chi_{1^\prime}^{(\omega,\ell,m)}(x)\ _{+1/2}Y_{\ell,m}(\theta,\varphi).
\end{aligned} 
\end{equation} Under the identification of eq. \eqref{eq:spinor_formalisms_identification} and using the transformation of the spin-weighted spherical harmonics under sphere-parity [eq. \eqref{eq:spin_weighted_harmonics_properties}], we have\begin{widetext}
    \begin{equation}
    \begin{aligned}
    \hat \Psi_\pm = \frac{e^{-i\omega t}}{\sqrt{2}}\begin{pmatrix}
        \pm  (-1)^{\ell-1/2}(\phi_1^{(\omega,\ell,m)} \pm i(-1)^{\ell-1/2}\chi_{0^\prime}^{(\omega,\ell,m)})\ _{-1/2}Y_{\ell,m}\\
        \pm(-1)^{\ell-1/2}(\phi_0^{(\omega,\ell,m)} \pm i(-1)^{\ell-1/2} \chi_{1^\prime}^{(\omega,\ell,m)})\ _{+1/2}Y_{\ell,m}\\
        (\phi_0^{(\omega,\ell,m)} \pm i(-1)^{\ell-1/2}\chi_{1^\prime}^{(\omega,\ell,m)})\ _{-1/2}Y_{\ell,m}\\
        (\phi_1^{(\omega,\ell,m)} \pm i(-1)^{\ell-1/2}\chi_{0^\prime}^{(\omega,\ell,m)})\ _{+1/2}Y_{\ell,m} ,
    \end{pmatrix}
\end{aligned}
\end{equation}
\end{widetext} which can be rewritten in terms of the components defined in eq. \eqref{eq:pre_sphere_parity_components} as \begin{widetext}
    \begin{equation}
    \begin{aligned}
    &\hat \Psi_+ = \frac{e^{-i\omega t}}{\sqrt{2}}\begin{pmatrix}
        (-1)^{\ell-1/2}\tilde\psi_0^{(\omega,\ell,m)}(x)\ _{-1/2}Y_{\ell,m}(\theta,\varphi)\\
        (-1)^{\ell-1/2} \tilde\psi_1^{(\omega,\ell,m)}(x)\ _{+1/2}Y_{\ell,m}(\theta,\varphi)\\
        \tilde\psi_1^{(\omega,\ell,m)}(x)\ _{-1/2}Y_{\ell,m}(\theta,\varphi)\\
        \tilde\psi_0^{(\omega,\ell,m)}(x)\ _{+1/2}Y_{\ell,m}(\theta,\varphi)
    \end{pmatrix},\\
    &
    \hat \Psi_- = \frac{e^{-i\omega t}}{\sqrt{2}}\begin{pmatrix}
        -(-1)^{\ell-1/2}\tilde\psi_2^{(\omega,\ell,m)}(x)\ _{-1/2}Y_{\ell,m}(\theta,\varphi)\\
        -(-1)^{\ell-1/2} \tilde\psi_3^{(\omega,\ell,m)}(x)\ _{+1/2}Y_{\ell,m}(\theta,\varphi)\\
        \tilde\psi_3^{(\omega,\ell,m)}(x)\ _{-1/2}Y_{\ell,m}(\theta,\varphi)\\
        \tilde\psi_2^{(\omega,\ell,m)}(x)\ _{+1/2}Y_{\ell,m}(\theta,\varphi)
    \end{pmatrix}.
\end{aligned} 
\end{equation}
\end{widetext}
 A sphere-parity even solution, defined by $\hat \Psi_- = 0$, is characterized by $\tilde\psi_2^{(\omega,\ell,m)}=\tilde\psi_3^{(\omega,\ell,m)} = 0$, while a sphere-parity odd solution, defined by $\hat \Psi_+ = 0$, is characterized by $\tilde\psi_0^{(\omega,\ell,m)}=\tilde\psi_1^{(\omega,\ell,m)} = 0$.

\subsection{Time-Reversal Symmetry}\label{app:time_reversal} Next, we consider time-reversal symmetry. For the static spacetimes we are considering here, we have that $T$, defined in eq. (\ref{eq:discrete_symmetries}), is an isometry $T^*g_{ab} = g_{ab}$, and therefore is a discrete symmetry of our spacetime. The coordinate differentials satisfy \begin{equation}
    \begin{aligned}
    &T^*(\dt)= -\dt,&& T^*(\dx) = \dx, 
    &&& T^*(\dtheta) = \dtheta,&&&&T^*(\dphi) = \dphi.\label{eq:time_reversal_coord_diff}
\end{aligned}
\end{equation} When acting on spinors, we need to be more careful. As discussed in \S \ref{sec:semi_classical_meaning}, the sign of the frequency for a stationary solution influences whether the solution has a meaningful semiclassical interpretation, so we would like time-reversal not to interchange positive/negative-frequency modes, as would na\"ively be the case. This is resolved by demanding that time-reversal act \textit{antilinearly} on spinors \cite{PeskinSchroeder:1995}, i.e., $T^*(\alpha \Psi_1+\Psi_2) = \bar\alpha T^*(\Psi_1)+T^*(\Psi_2)$, where $\Psi_1,\Psi_2$ are spinors, $\alpha\in \C$, and $\bar\alpha$ is the complex conjugate of $\alpha$. The action of time-reversal on our spinor basis is given by \begin{equation}
    \begin{aligned}
    &o^A\mapsto -\iota^A, &&\iota^A\mapsto o^A,
    &&&\bar o^{A^\prime}\mapsto \bar \iota^{A^\prime}, &&&& \bar\iota^{A^\prime} \mapsto -\bar o^{A^\prime},
\end{aligned}
\end{equation} which under the identification (\ref{spinor_identification}) is in agreement with eq. (\ref{eq:time_reversal_coord_diff}).  This induces a map on the spinor components \begin{equation}
    \begin{aligned}
    &\Phi_0\mapsto -T^*\bar\Phi_1, &&\Phi_1\mapsto T^*\bar\Phi_0, 
    &&& X_{0^\prime} \mapsto T^*\bar X_{1^\prime} ,&&&&X_{1^\prime}\mapsto -T^*\bar X_{0^\prime},\label{eq:time_reversal_dirac}
\end{aligned}
\end{equation} where $T^*$ is interpreted as the usual pullback of a scalar function. On 4-component Dirac spinors, using the identification introduced in eq. (\ref{eq:spinor_formalisms_identification}), this is implemented by  \begin{align}
&T^*\Psi\vert_{(t,x,\theta,\varphi)} = (\mathcal{T}\Psi^*)\vert_{T(t,x,\theta,\varphi)}, && \mathcal{T} := -\gamma^1\gamma^3
\end{align} where $\Psi^*$ denotes componentwise complex conjugation. This operator commutes with $\gamma^0$ and anticommutes with the remaining gamma matrices, 
and squares to minus the identity when acting on spinor fields.\footnote{The commutation/anticommutation properties are opposite to what one might have expected the behavior of a time-reversal to be on the gamma matrices, but recall that the Dirac equation contains a factor of $i$ that will also flip sign (since time-reversal is antilinear), which restores the expected behavior. The (anti)commutation properties of this operator only hold when considering the antilinear properties of time-reversal, and are not true of the matrix $\mathcal{T}$ on its own.} This maps Dirac solutions to solutions, and preserves the character of positive/negative-frequency solutions.

It will be useful to express the time-reversal action in terms of the radial functions used for stationary, spherically symmetric spacetimes. Consider a stationary Dirac solution with fixed $\omega$, expanded in spin-weighted spherical harmonics as in eq. \eqref{eq:spinor_component_decomposition}. Because time reversal acts antilinearly, the stationary time dependence is preserved, as is the frequency $\omega$. Complex conjugation of the spin-weight $\pm1/2$ spherical harmonics maps $m\mapsto-m$ up to a factor of $(-1)^{m\pm 1/2}$ [see eq. \eqref{eq:spin_weighted_harmonics_properties}]. Thus, the action induced by eq. \eqref{eq:time_reversal_dirac} on the radial functions is \begin{equation}
    \begin{aligned}
        &\phi_0^{(\omega,\ell,m)}\mapsto (-1)^{m+1/2}\overline{\phi_1^{(\omega,\ell,-m)}},
        &&\phi_1^{(\omega,\ell,m)}\mapsto (-1)^{m+1/2} \overline{\phi_0^{(\omega,\ell,-m)}}, \\
        &\chi_{0^\prime}^{(\omega,\ell,m)} \mapsto (-1)^{m-1/2}\overline{\chi_{1^\prime}^{(\omega,\ell,-m)}}, 
        && \chi_{1^\prime}^{(\omega,\ell,m)} \mapsto (-1)^{m-1/2} \overline{\chi_{0^\prime}^{(\omega,\ell,-m)}}.
    \end{aligned}
\end{equation} Applying this map to the Dirac components introduced in eq. \eqref{eq:pre_sphere_parity_components} yields \begin{equation}
    \begin{aligned}
        &\tilde\psi_0^{(\omega,\ell,m)} \mapsto (-1)^{m+1/2}\overline{\tilde\psi_1^{(\omega,\ell,-m)}}, 
        &&\tilde\psi_1^{(\omega,\ell,m)}\mapsto (-1)^{m+1/2}\overline{\tilde\psi_0^{(\omega,\ell,-m)}},\\
        &\tilde\psi_2^{(\omega,\ell,m)}\mapsto (-1)^{m+1/2}\overline{\tilde\psi_3^{(\omega,\ell,-m)}}, 
        && \tilde\psi_3^{(\omega,\ell,m)} \mapsto (-1)^{m+1/2}\overline{\tilde\psi_2^{(\omega,\ell,-m)}}.
    \end{aligned}
\end{equation} In terms of the spinor components introduced in eq. \eqref{eq:real_spinor_component_definition}, time-reversal acts as \begin{equation}\begin{aligned}
    &\psi_0^{(\omega,\ell,m)} \mapsto i(-1)^{m-1/2} \overline{\psi_0^{(\omega,\ell,-m)}},
    &&\psi_1^{(\omega,\ell,m)} \mapsto i(-1)^{m-1/2} \overline{\psi_1^{(\omega,\ell,-m)}},\\
    &\psi_2^{(\omega,\ell,m)}\mapsto i(-1)^{m+1/2}\overline{\psi_2^{(\omega,\ell,-m)}}, 
    &&\psi_3^{(\omega,\ell,m)}\mapsto i(-1)^{m+1/2}\overline{\psi_3^{(\omega,\ell,-m)}}. \label{eq:time_reversal_radial_map}
\end{aligned}\end{equation} 
Because time-reversal maps the radial parts of Dirac solutions with
frequency/spherical-harmonic indices $(\omega,\ell,m)$ to ones with
$(\omega,\ell,-m)$, and the radial Dirac equation is independent of $m$,
the transformation in eq. \eqref{eq:time_reversal_radial_map} maps the space of solutions of the radial
Dirac equation for a particular $(\omega,\ell)$ to itself. Within each
sphere-parity sector, this map is complex conjugation times a single
constant phase, and the phase may be dropped because the equation is
linear and bilinear observables will be unaffected. Thus, the componentwise conjugate of any solution is again a
solution. Since a first-order linear system is uniquely determined by its
solution space, the radial Dirac operator in each sector must equal its own
complex conjugate; i.e., the radial Dirac equation in the components of
eq. \eqref{eq:real_spinor_component_definition} is necessarily real. This explains why obtaining a real, radial Dirac
equation, as we did in eq. \eqref{eq:real_dirac}, was guaranteed.

\subsection{Reflection Symmetry}\label{app:reflection}
In situations where our spacetime metric has a reflection symmetry about $x=0$, we have that $Z^*g_{ab} =g_{ab}$, where $Z$ is defined by eq. (\ref{eq:discrete_symmetries}), and hence the reflection $x\to-x$ is a spacetime symmetry. This is useful when considering a wormhole in which the geometry on both sides of the throat is identical, and we can redefine the coordinate $x$ by a translation so that $x=0$ is the location of the throat about which the spacetime is symmetric. The action of $Z$ on the coordinate differentials is\begin{equation}
    \begin{aligned}
    &Z^*(\dt) = \dt ,&& Z^*(\dx) = -\dx,
    &&& Z^*(\dtheta) = \dtheta ,&&&& Z^*(\dphi) = \dphi.\label{coord_reflection}
\end{aligned}
\end{equation}  This can be achieved by the following map on the spinor dyad\begin{equation}
    \begin{aligned}
    &o^A\mapsto \bar \iota^{A^\prime},&&\iota^A \mapsto \bar o^{A^\prime}, 
    &&& \bar o^{A^\prime}\mapsto \iota^A ,&&&& \bar\iota^{A^\prime} \mapsto o^A,
\end{aligned}
\end{equation}  which reproduces eq. (\ref{coord_reflection}) under the identification in eq. (\ref{spinor_identification}). This in turn induces a map on the spinor components \begin{equation}
    \begin{aligned}
    &\Phi_0\mapsto iZ^* X_{0^\prime} ,&& \Phi_1\mapsto -iZ^*X_{1^\prime,} 
    &&& X_{0^\prime} \mapsto -iZ^*\Phi_0, &&&& X_{1^\prime} \mapsto iZ^*\Phi_1,
\end{aligned}
\end{equation} where $Z^*$ is interpreted as the usual pullback of a scalar function. As an action on 4-component Dirac spinors, using the identification introduced in eq. (\ref{eq:spinor_formalisms_identification}), this is implemented by  \begin{align}
    &Z^*\Psi \vert_{(t,x,\theta,\varphi)}:=R\Psi\vert_{Z(t,x,\theta,\varphi)}, && R:= i\gamma^0\gamma^1\gamma^2.
\end{align}The matrix $R$ squares to the identity, anticommutes with $\gamma^3$, and commutes with the remaining gamma matrices.

It will be useful to translate the action of the reflection map into an action on the radial functions used for stationary, spherically symmetric solutions. Consider a stationary mode with fixed $(\omega,\ell,m)$, expanded in spin-weighted spherical harmonics as in eq. \eqref{eq:spinor_component_decomposition}, and then rewritten in terms of the real radial components $(\psi_0^{(\omega,\ell,m)},\psi_1^{(\omega,\ell,m)},\psi_2^{(\omega,\ell,m)},\psi_3^{(\omega,\ell,m)})$ introduced in eq. \eqref{eq:real_spinor_component_definition}. Since the reflection map $Z$ leaves $t,\theta,\varphi$ unchanged and only sends $x\mapsto -x$, it does not change $\omega$, $\ell$, or $m$. Its only effect is to evaluate the radial functions at $-x$ and to interchange the two sphere-parity sectors. The induced action is
\begin{equation}
    \begin{aligned}
    &\psi_0^{(\omega,\ell,m)}(x) \mapsto (-1)^{\ell-1/2}\psi_3^{(\omega,\ell,m)}(-x), 
    &&\psi_1^{(\omega,\ell,m)}(x) \mapsto -(-1)^{\ell-1/2}\psi_2^{(\omega,\ell,m)}(-x), \\
    &\psi_2^{(\omega,\ell,m)}(x) \mapsto -(-1)^{\ell-1/2}\psi_1^{(\omega,\ell,m)}(-x),
    &&\psi_3^{(\omega,\ell,m)}(x) \mapsto (-1)^{\ell-1/2}\psi_0^{(\omega,\ell,m)}(-x). \label{eq:reflection_radial_components}
\end{aligned}
\end{equation} One can check directly that applying this transformation twice returns the original radial functions, as expected from $(Z^*)^2=1$ with our conventions.

For the static and spherically symmetric solutions studied in the latter sections of the paper, we take an incoherent sum over $m$ with the same radial functions for each value of $m$. Since $Z$ does not act on the angular coordinates, the same transformation \eqref{eq:reflection_radial_components} applies to each member of the $m$-multiplet. Thus, the reflected spherically symmetric configuration is obtained simply by replacing the common radial functions with the reflected radial functions as above.

\bibliography{apssamp}

\end{document}